\documentclass{article}

\PassOptionsToPackage{numbers, compress}{natbib}

\usepackage[preprint]{neurips_2026}

\usepackage[utf8]{inputenc} %
\usepackage[T1]{fontenc}    %
\usepackage{hyperref}       %
\usepackage{url}            %
\usepackage{booktabs}       %
\usepackage{amsfonts}       %
\usepackage{nicefrac}       %
\usepackage{microtype}      %
\usepackage{xcolor}         %
\usepackage{enumitem}
\usepackage{amsmath}
\usepackage{amsthm}
\usepackage{amssymb}
\usepackage{graphicx}
\usepackage{sidecap}
\usepackage{xcolor}
\usepackage{enumitem}
\usepackage{subcaption}

\usepackage{fvextra} %
\newtheorem{theorem}{Theorem}
\newtheorem{lemma}{Lemma}
\newtheorem{corollary}{Corollary}
\newtheorem{conjecture}{Conjecture}
\title{Tapes Together Strong: The Co-evolution of Computation and Cooperation}

\author{%
  Kunal Jha$^{1,2\dagger}$ \And
  Francesco Cicala$^2$ \And
  Blaise Agüera y Arcas$^2$ \AND
  Blake Aaron Richards$^{2,3}$ \And
  Natasha Jaques$^{1,4*}$ \And
  Max Kleiman-Weiner$^{1,4*}$ \And
  Eyvind Niklasson$^{2*}$
}
\begin{document}
\maketitle

\begingroup
\renewcommand{\thefootnote}{}
\footnotetext{%
  \kern-1.8em
  $^\dagger$Work done as a student researcher at Google. \quad
  $^*$Equal advising. \quad
  $^1$Dept. of Computer Science, University of Washington. \\
  $^2$Google Paradigms of Intelligence. \quad
  $^3$School of Computer Science, McGill University; Mila. \quad
  $^4$Google DeepMind.%
}
\endgroup

\begin{abstract}
How does cooperation evolve in complex agentic systems?
Prior work in evolutionary game theory studies why individuals are incentivized to cooperate by isolating social interactions from the physical costs of behavior, while artificial life models traditionally study emergent self-replication without formalizing the dilemma between acquiring resources and preserving the shared energy needed to reproduce.
In contrast, we introduce \textbf{Autopoietic Game Theory}, a computational model where social interactions, replication mechanisms, and their associated computational costs are endogenous and simultaneously co-evolving. We study these dynamics using a computational substrate of randomly initialized programs in Z80 machine code, showing empirically, and motivating with a simplified theoretical model, that embedding a social dilemma directly into the physics of computation can favor the emergence of self-replicating, cooperative strategies. When resources are scarce, our analysis shows that \textit{defection can become self-limiting even in well-mixed populations}: parasitic stealing destroys shared energy, slows execution, and can prevent reliable replication. Empirically, evolved programs suppress stealing across several Z80 environments, while spatial assortment further supports structural complexity and task performance. We further show that the framework can incorporate exogenous pressures, such as math tasks structured as sequential social dilemmas, when rewards are tied to computation budgets. These results suggest that coupling an agent's capacity for computation to its available energy transforms cooperation into a dominant scaffolding for building sustainable, self-organizing systems.
\end{abstract}

\section{Introduction}
\vspace{-0.5em}

The transition from individual agents to cooperative collectives is a central puzzle in artificial intelligence, biology, and the study of complex adaptive systems \citep{y2025intelligence,laland2017darwin,hull2004niche}. Historically, the emergence of sociality has been studied through disparate lenses: biological scaling theories \citep{dunbar2003social,dunbar2007evolution}, ecological models of resource sharing \citep{muthukrishna2018cultural,henrich2007humans,muthukrishna2023theory}, game-theoretic formalizations of reciprocal altruism \citep{axelrod1981evolution,bear2016intuition,rand2013human}, and cognitive models of multi-agent intent \citep{ullman2009,baker2017rational,zhi2020online,jha2024neural,kleiman2016coordinate,kleiman2020downloading,Kleiman-Weiner2025-jr,meulemans2025embeddeduniversalpredictiveintelligence}. Today, as machine learning scales toward autonomous multi-agent systems that can modify their decision-making logic while completing complex tasks \cite{Steinberger2026openclaw,zhang2026hyperagents,hu2025automateddesignagenticsystems}, understanding why and how self-interested agents cooperate without centralized control is a pressing engineering challenge \cite{dafoe2020openproblemscooperativeai,leibo2025pragmaticviewaipersonhood}. This problem becomes especially difficult when computation, interaction, and reproduction all draw from the same finite resource.

Consider a simple organism that acquires energy from its environment and spends it to act, interact, and reproduce. Because these behaviors share a common budget, energy spent on one action cannot be spent on another. The organism might acquire more by consuming a neighbor, but doing so also costs energy and recovers only part of what the neighbor held. Its social behavior therefore directly affects whether it retains enough energy to survive and reproduce.

Existing approaches often separate such strategic decisions from their physical execution. Evolutionary Game Theory (EGT), for example, models how populations adopt behaviors in social dilemmas but treats game payoffs and population updates as external mathematical operations \cite{smith1982evolution,Weibull1995-zu,Roca_2009}. Because EGT assumes that strategy replication occurs independently of interaction, stabilizing cooperation is notoriously difficult in well-mixed populations, where agents are paired at random, or among memoryless agents that cannot recall previous encounters. Overcoming these baseline conditions typically requires endowing agents with specific cognitive capabilities, such as memory for direct reciprocity or intentionality for costly punishment \citep{nowak2006five}. Conversely, Artificial Life (ALife) frameworks study how replicators evolve while competing for computational resources and physical space \citep{ray1992evolution,adami1994evolutionary,arcas2024computational}. Historically, however, these frameworks have emphasized zero-sum competition, ecological niche construction, or extrinsic task landscapes rather than formalizing emergent interactions as endogenous, general-sum social dilemmas \citep{coEvolutionBio,Dolson148973,flask,coreWars}.

How might cooperation evolve if agents could modify both their social strategies and the mechanisms by which they and others replicate? To answer this, we introduce \textbf{Autopoietic Game Theory} (Figure~\ref{fig:substrate}). Drawing on the biological principle of autopoiesis---the continuous reproduction and maintenance of a system's components---our framework treats strategic behavior and replication as simultaneous, co-evolving computational processes. It translates the organism analogy above into a computational setting: every instruction has an energetic cost, execution speed depends on the energy available, and computation, interaction, and reproduction all draw from the same budget \citep{Landauer1961-sj,Margolus_1998}. This allows us to test whether changes in replication alter the outcome of a social interaction and whether cooperation gives agents a competitive advantage by supporting more complex computation over time.

We instantiate this framework in a Z80 machine-code environment where self-replicating programs emerge from randomly initialized bytes and manage finite energy budgets. Rather than imposing an abstract payoff matrix, we embed a social dilemma in the mechanics of computation by introducing \texttt{STEAL} as a computational instruction. When executed, \texttt{STEAL} imperfectly transfers energy from a neighboring agent to the acting program, immediately increasing its computational capacity while reducing the total energy available in the environment. Programs must therefore allocate energy among executing code, interacting with neighbors, and reproducing. By allowing social strategy and replication to co-evolve within a single phase of interaction, our framework lets us study whether cooperation can persist among well-mixed, memoryless computational agents. Concretely, our contributions are:

\begin{itemize}[leftmargin=*]
\vspace{-0.5em}

    \item \textbf{Autopoietic Game Theory:}
    We introduce a computational framework in which social strategy, replication mechanisms, and computational resource use are endogenous processes that evolve together rather than being specified as separate update rules.
\vspace{-0.3em}

    \item \textbf{Cooperation Without Memory or Assortment:}
    We show that randomly initialized programs can evolve self-replication while suppressing destructive stealing, even when partners are chosen at random and agents cannot remember previous interactions. Mutation-free invasion assays and parameter sweeps show that this suppression is not simply an artifact of continuous mutation or inefficient stealing. Instead, cooperation is supported by the coupling between social interaction, energy, and the agents' evolving replication mechanisms.
\vspace{-0.3em}

    \item \textbf{Spatial Assortment Supports Complexity and Resilience:}
    When energy is uniformly distributed, both well-mixed and spatial populations suppress stealing, but local interactions produce more structurally complex and genetically cohesive replicators. When energy is distributed asymmetrically, spatial assortment becomes critical: local populations maintain energy and complexity, whereas well-mixed populations collapse into simpler, lower-energy programs.
\vspace{-0.3em}

    \item \textbf{Cooperation Extends to Multi-Step Computational Tasks:}
    We show that resource scarcity drives evolving programs to solve externally imposed tasks while continuing to replicate. When rewards form a sequential social dilemma, programs suppress both destructive stealing and the safer solo strategy in favor of a collaborative, multi-step solution. Cooperation persists in this more complex setting and is strengthened by local interactions.

\end{itemize}

\vspace{-0.5em}
\section{Related Work}
\vspace{-0.5em}

\textbf{Digital Self-Replication and Artificial Life.} The computational study of self-replication is anchored in \citeauthor{von1966theory}'s \cite{von1966theory} formalization of the Universal Constructor, which demonstrated that machines could use self-descriptions to build copies of themselves. Emergent digital replicators were pioneered by \textit{Tierra} \cite{ray1992evolution}, demonstrating spontaneous parasitism in a shared memory space, while \textit{Avida} \cite{adami1994evolutionary} linked computational logic to exogenously defined metabolic rewards \cite{lenski2003evolutionary} (which we discuss further in Appendix~\ref{appendix:avida_comp}). Moving beyond pure machine code, environments like \textit{Polyworld} \cite{Yaeger1997ComputationalGP} embedded neural architectures within spatial, energy-constrained environments, while formalizations of artificial life stressed the necessity of environmental pressures to drive open-ended evolution \cite{sorosIdentifying}. More recently, \citeauthor{arcas2024computational} \cite{arcas2024computational} demonstrated that self-replication spontaneously emerges as an attractor state in random instruction sets. However, these foundational models explore how ecological complexity and self-replication emerge under metabolic constraints, rather than on how replicators evolve in general-sum endogenous social dilemmas. %
We instead embed randomly initialized programs in an energy-limited computational substrate \cite{Bennett1982-hd}, where execution, interaction, and replication all draw from a single, unified budget. In this setting, we find that self-replicating programs spontaneously discover and stabilize non-destructive strategies, even when both replication and cooperation require multi-step computations, and when energy must be earned by solving collaborative downstream tasks.

\textbf{Cognitive Evolution, Multi-Agent AI, and Program Equilibria.}
In cognitive science and anthropology, sociality is often understood as an adaptation to metabolic constraints rather than a hardcoded trait. The \textit{Cultural Brain Hypothesis} \cite{muthukrishna2018cultural, henrich2015secret, muthukrishna2023theory} and \textit{Expensive Tissue Hypothesis} \cite{aiello1995expensive} propose that energy-intensive neural scaling favored cooperation through shared intentionality and distributed foraging \cite{tomasello2007shared, kaplan2000theory}. In contrast, modern AI and standard Evolutionary Game Theory \cite{smith1982evolution, Sandholm2010-je, Weibull1995-zu} typically treat replication as an external update rather than a resource-constrained action. Classical mechanisms such as assortment, reciprocity, and punishment can stabilize cooperation \cite{Nowak1992-qb, nowak2006five}, while Multi-Agent Reinforcement Learning studies sequential social dilemmas with spatial and temporally extended resource constraints \cite{leibo2017multiagentreinforcementlearningsequential, jaques2019socialinfluenceintrinsicmotivation, baker2020emergentreciprocityteamformation}.

Some evolutionary game-theoretic models examine how replication dynamics affect cooperation \citep{Ohtsuki2006-fm, Cardillo_2010, Perc_2010}, but generally restrict social and reproductive behavior to hand-crafted strategy spaces. With executable agents, cooperation, defection, and replication can instead emerge as multi-step computations. This connects to ``program equilibria,'' in which agents coordinate by inspecting one another's logic \cite{tennenholtz2004program, critch2019parametric, sistla2025evaluating}. In open-source game theory, agents design programs that condition on others' decision-making logic before cooperating. Our model extends this setting: agents continuously edit, overwrite, and evolve one another's strategies; strategies are endogenous rather than fixed labels; and acting itself requires computation. Related environments study programs that compete by overwriting one another in zero-sum spaces \cite{coreWars}, further showing how code-level mechanisms shape strategic behavior.

Recent work on open-ended self-improvement and autonomous agent design studies systems that generate and organize increasingly complex behaviors \cite{hu2025automateddesignagenticsystems, zhang2026hyperagents, dochkina2026drophierarchyrolesselforganizing, liu2026evoxmetaevolutionautomateddiscovery}, but generally assumes large computational budgets. We instead require agents to evolve social behavior and replication from scratch while spending finite energy to compute, interact, and reproduce. Cooperative outcomes and reproduction dynamics resembling assumed population-update processes, such as the Moran process, are therefore not guaranteed. This coupling lets us study cooperation when strategic choice, computation, and self-maintenance are part of the same evolving process.

\begin{figure}
    \centering
    \includegraphics[width=\linewidth]{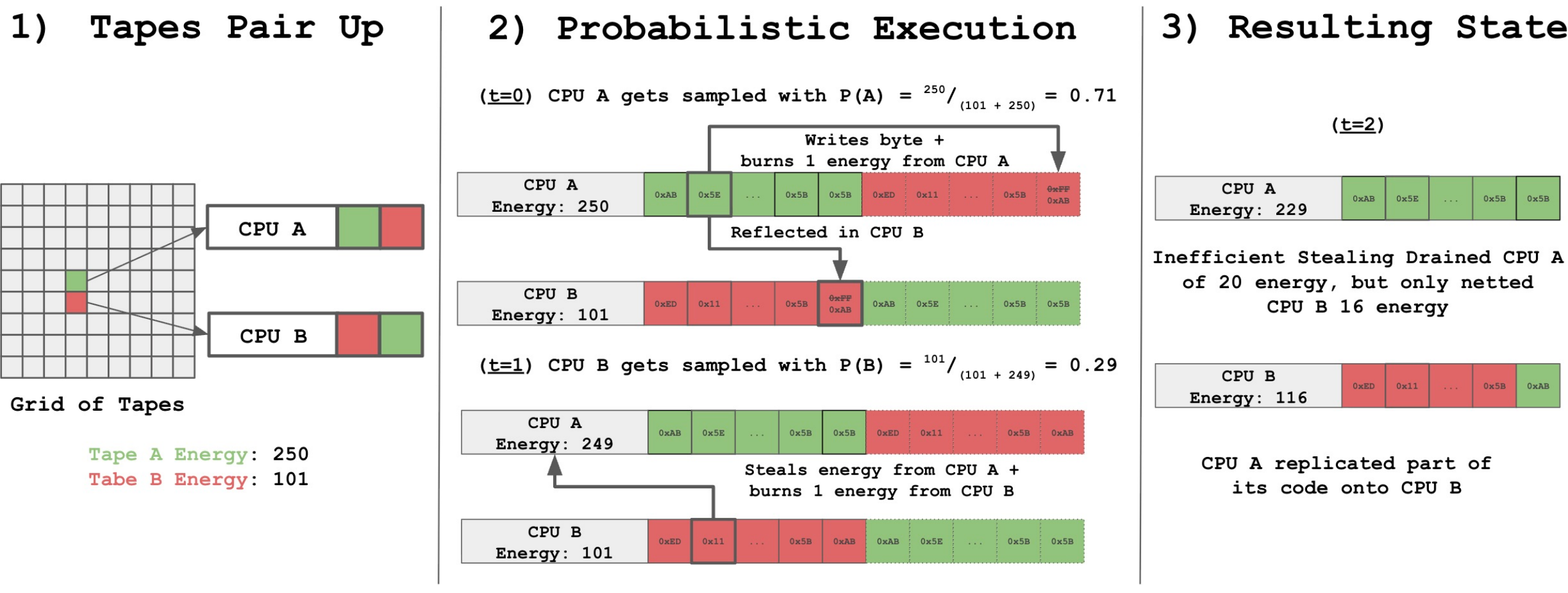}
    \caption{Visualization of the computational substrate. Two paired programs share a cyclic memory tape and have finite energy budgets that determine their individual probability of executing the next instruction. This incentivizes tapes to steal the energy of their partner, however, inefficient stealing degrades the computational budget of the entire system.}
    \label{fig:substrate}
    \vspace{-2em}
\end{figure}

\vspace{-0.5em}
\section{Autopoietic Game Theory: Embedding Social Dilemmas in Machine Code}
\label{sec:setup}
\vspace{-0.5em}

Here, we formally define the computational environment, where computation consumes limited resources available to each agent (illustrated in Figure~\ref{fig:substrate}). We simulate a finite population of 16,384 programs (which we interchangeably refer to as ``tapes'') written in a modified Z80 assembly language, initialized as 32-byte sequences of completely random bytes. Tapes from this population are paired with each other during interaction. The population lives either in a well-mixed pool (uniform random pairings) or on a 2D spatial grid (pairings restricted to the four immediate horizontal and vertical neighbors). In Appendix~\ref{appendix:implementation_details}, we provide a more detailed description of the setup and interactions.

\textbf{Epochs, Shared Memory, and Environmental Energy:} Time progresses in discrete epochs, with programs persisting in the population across epochs. During each epoch, every program receives a baseline energy injection ($\epsilon = 24$) and is paired with another program. When two 32-byte programs are paired, they are concatenated to form a single 64-byte cyclic shared memory tape. To ensure symmetric interactions, each program is assigned an independent CPU with a completely separate set of registers. Each CPU begins execution at the first byte of its own code and accesses the shared memory relative to that starting position. For example, if Program A is paired with Program B, the CPU for Program A starts at the beginning of A and perceives the cyclic tape as the continuous sequence $A \rightarrow B \rightarrow A \rightarrow B \dots$. Simultaneously, the CPU for Program B starts at the beginning of B and perceives the cyclic tape as the continuous sequence $B \rightarrow A \rightarrow B \rightarrow A \dots$.

\textbf{Asynchronous Execution and Relative Speed:} Both programs execute instructions asynchronously. Every operation costs exactly $c=1$ unit of energy; if a program attempts an invalid instruction, the emulator safely advances the program counter, but the energy cost is still consumed. %
Furthermore, a program's relative execution speed is dictated by its share of the pool: the probability that program $i$ executes the next instruction instead of program $j$ is exactly $\frac{E_i}{(E_i + E_j)}$. Each program can have at most $E_{max} = 255$ units of energy at any given time and at least $E_{min} = 0$ units of energy.

\textbf{Replication and Mutation:} Programs lack a built-in replication command; to reproduce, they must evolve code which explicitly overwrites the opponent's tape segment with their own code. Because an efficient replicator in this substrate can be handcrafted in just $L=11$ bytes, the baseline $\epsilon = 24$ provides sufficient energy for non-stealing tapes to safely replicate and accumulate surplus across epochs ($\epsilon > Lc$) assuming a replicator $L$ operations long costing $c$ units each). At the epoch's conclusion, the joint tape is cleanly cleaved in half, and the resulting programs are placed back in the soup. Before every epoch begins, all programs are subject to a background mutation rate $\mu$, representing the probability of mutating a single byte. To track evolutionary dynamics within our environment, we rely on two primary metrics. First, we measure structural complexification using \textit{higher-order entropy}---the theoretical difference between a sequence's Shannon entropy and its normalized Kolmogorov complexity (Appendix~\ref{appendix:metrics}). By approximating Kolmogorov complexity via the compressed size of the soup, this metric isolates non-trivial structural motifs from the baseline noise of independent random bytes (larger values suggest more complex replicators). Second, we evaluate genetic cohesion using \textit{edit distance} (the Hamming distance between program tapes).

\textbf{The Energy-Based Social Dilemma:} To embed a social dilemma directly into the mechanics of replication, we introduce the \texttt{STEAL} instruction. When performed, \texttt{STEAL} attempts to drain $\delta$ energy from the partner and absorbs that energy with efficiency $\alpha = 0.8$. Since $\alpha < 1$, stealing is lossy: the stealer gains only $\alpha\delta$, while the partner loses $\delta$, so each successful steal destroys $(1-\alpha)\delta$ energy from the system.

\begin{table}[b]
\vspace{-1em}
\centering
\small

\begin{tabular}{c|cc}
 & \textbf{Partner $C$} & \textbf{Partner $D$} \\
\hline
\textbf{$C$} & $\epsilon - Lc$ & $\epsilon - Lc - \delta$ \\
\textbf{$D$} & $\epsilon - Lc + \alpha\delta$ & $\epsilon - Lc - (1-\alpha)\delta$
\end{tabular}
\vspace{0.25em}
\caption{
Energy payoff matrix for the row player's tape. Each entry gives $\Delta E$ after one interaction, including baseline environmental energy $\epsilon$, replication cost $Lc$, and any energy transferred by \texttt{STEAL}.
}
\label{tab:energy-dilemma}
\end{table}

For simplicity of analysis, we use a coarse behavioral classification. Programs that do not execute \texttt{STEAL} during replication are labeled \textit{non-stealing} or \textit{cooperative} ($C$), while programs that execute \texttt{STEAL} are labeled \textit{defective} or \textit{parasitic} ($D$). However, we note in our substrate, even non-stealing programs reproduce by overwriting memory, so cooperation does not mean harmlessness in general. We interpret it as avoiding energy-destructive theft. This is still a meaningful cooperative act because energy left on the partner tape remains available for that partner's future computation.

In particular, if we assume $2\epsilon > 2Lc$, then two non-stealing tapes accumulate net system energy during an interaction: together they gain $2\epsilon - 2Lc > 0$ before any stealing occurs. We recognize, however, that this binary is a coarse-graining of a richer gradient; programs might loop over the \texttt{STEAL} instruction, vary their stealing magnitudes, or conditionally overwrite different portions of the shared tape. Charting this gradient is difficult in an open-ended environment, but we provide our initial attempts to do so in Appendix~\ref{appendix:RepDilemma}.

Table~\ref{tab:energy-dilemma} makes the dilemma explicit. Against either partner, stealing provides a local advantage by yielding more immediate energy than not stealing. However, mutual stealing degrades the shared energetic substrate, as each successful steal destroys $(1-\alpha)\delta$ energy that could otherwise support future computation. We exclude $\alpha = 1$ from this study; lossless stealing acts as a zero-sum transfer that eliminates the dilemma, shifting the evolutionary dynamics entirely toward competition \cite{adami1994evolutionary, Dolson148973}.

\vspace{-0.5em}
\section{Results}
\label{sec:results}
\vspace{-0.5em}

In this section, we examine how cooperation changes as social strategy and replication evolve together. We organize the results around five research questions that build from a well-mixed baseline toward increasingly demanding environmental and computational conditions:

\begin{enumerate}
    \item RQ1: Is stealing suppressed when agents’ social strategy and replication mechanisms must evolve from scratch, assuming uniformly distributed energy?
    \item RQ2: Is this suppression driven by high mutation rates or inefficient stealing alone?
    \item RQ3: What is the role of the classical game-theoretic notion of ``assortment'' if cooperation is viable in the well-mixed setting with uniformly distributed energy?
    \item RQ4: Do these findings change if the passive background energy agents receive is now asymmetrically distributed?
    \item RQ5: What happens if agents must now earn additional energy by maximizing an externally imposed reward, and what if cooperation is now a multi-step computation?
\end{enumerate}

When our simulation begins with a computational soup of entirely random bytes, self-replicating programs naturally emerge over time, mirroring prior work in artificial life \cite{arcas2024computational}. What distinguishes our autopoietic environment from these standard models is that this open-ended evolution is natively game-theoretic: replication emerges while simultaneously suppressing defection.

\begin{figure}[b]
\vspace{-1em}
    \centering
    \includegraphics[width=0.9\linewidth]{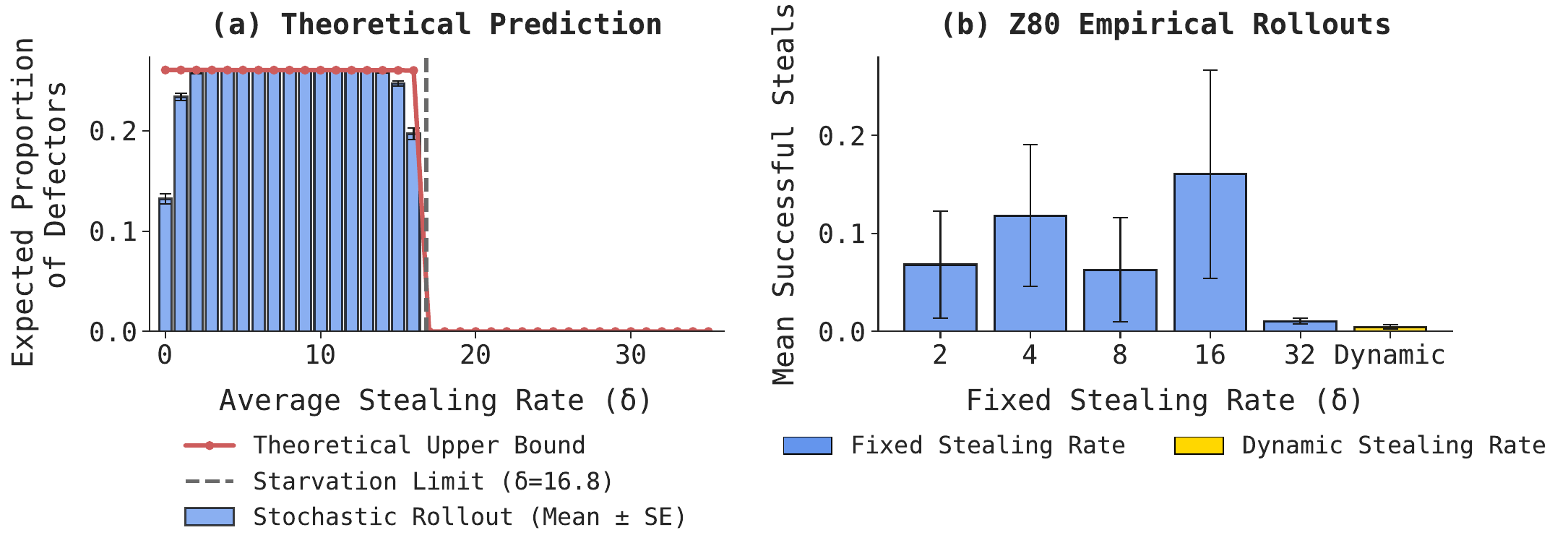}
    \caption{\textbf{The Metabolic Starvation Limit: Theory vs. Practice in Well-Mixed settings with non-zero mutation rate.} (a) Theoretical defector survival probability shows a steep drop-off at the starvation limit. (b) Empirical Z80 rollouts mirror this trajectory, remaining well below the theoretical bound. When agents can dynamically decide how much to steal by modifying custom registers, populations naturally converge to these safely suppressed levels.}
    \label{fig:rq2_starvation}
\end{figure}

\textbf{Cooperation emerges in well-mixed settings with continuous mutations (RQ1).} We evaluated a well-mixed population where every program receives a uniform baseline energy ($\epsilon = 24$) and is subject to a non-zero background mutation rate ($\mu = \frac{1}{128}$). We tested environments with varying fixed steal magnitudes ($\delta$) as well as settings where agents dynamically determine the amount to steal by modifying a dedicated register. Across all configurations, the \texttt{STEAL} instruction is heavily suppressed, natively converging to low volumes (Figure~\ref{fig:rq2_starvation}). As we show in Appendix~\ref{appendix:ProofWM}, the logic behind this suppression is fundamentally metabolic: because the \texttt{STEAL} operation absorbs energy imperfectly ($\alpha = 0.8$), mutual defection rapidly drains the shared energy pool. This energy loss severely slows down the execution speed of the paired system, exposing defectors to a much higher probability of fatal mutational errors. Conversely, cooperators successfully accumulate energy buffers across epochs. This accumulated wealth grants them a relative speed advantage in future interactions. Leveraging this speed, cooperators evolve replication logic that changes the equilibrium of the game; as demonstrated in an execution trace in Appendix~\ref{appendix:defectVCoopTrace}, a cooperative tape can utilize its speed advantage to stop a defector mid-execution and overwrite its malicious code entirely. In Appendix~\ref{appendix:SelfInterest}, we provide additional evidence suggesting these behaviors emerge from self-interested agents trying to maximize their energy rather than as a feature of our environment.
\begin{figure*}[t]
    \centering
    \begin{subfigure}{0.48\textwidth}
        \centering
        \includegraphics[width=\linewidth]{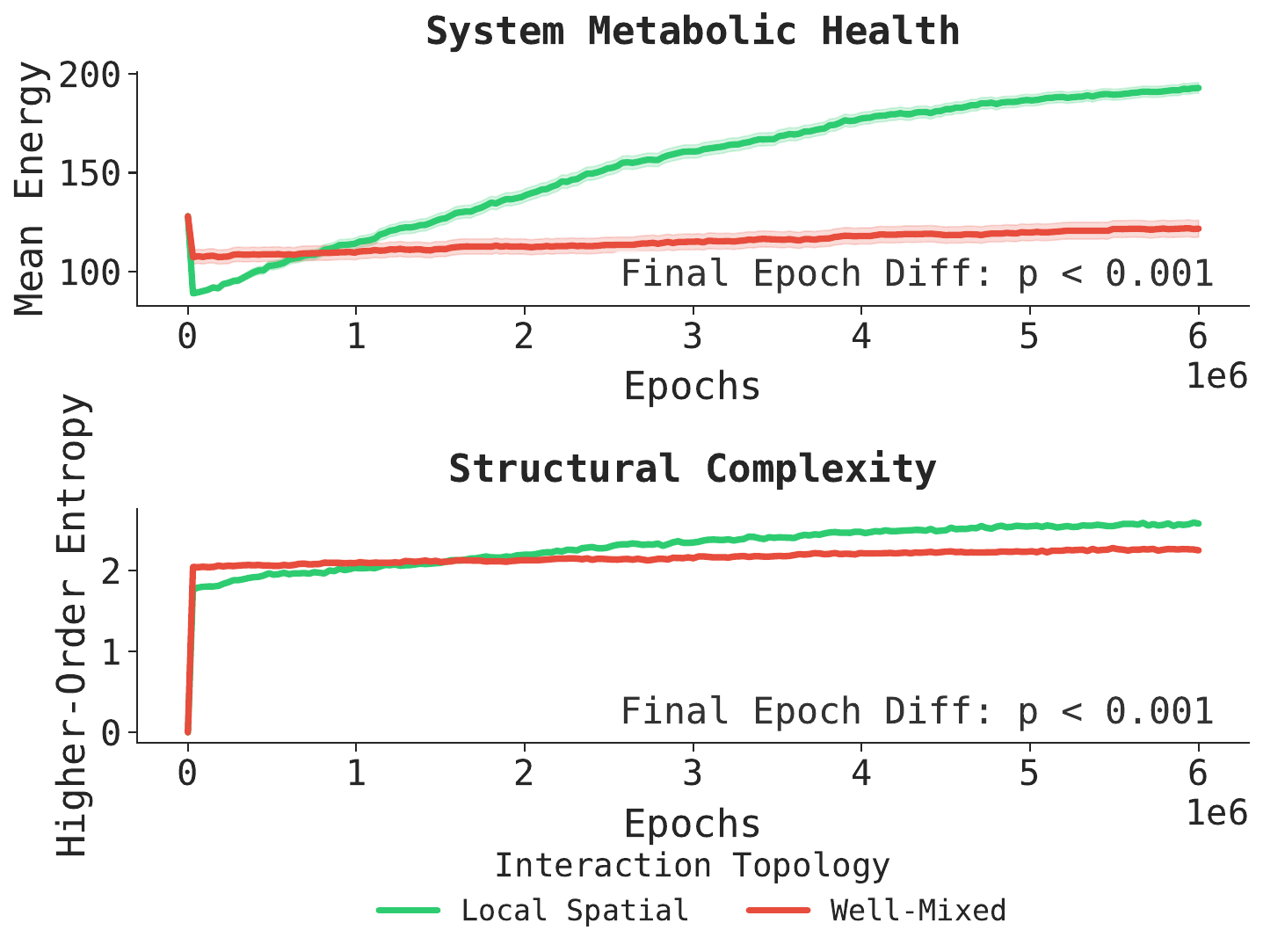}
        \caption{Overcoming Asymmetric Energy Grids. When subjected to severe, systematic environmental energy inequality, well-mixed populations suffer total metabolic and structural collapse. Conversely, local spatial topologies allow populations to maintain significantly higher terminal energy ($p<0.001$, Cohen's $d=8.98$) and structural complexity ($p<0.001$, Cohen's $d=3.54$; two-sided t-tests).}
        \label{fig:rq2_asymmetric_grid}
    \end{subfigure}
    \hfill
    \begin{subfigure}{0.48\textwidth}
        \centering
        \includegraphics[width=\linewidth]{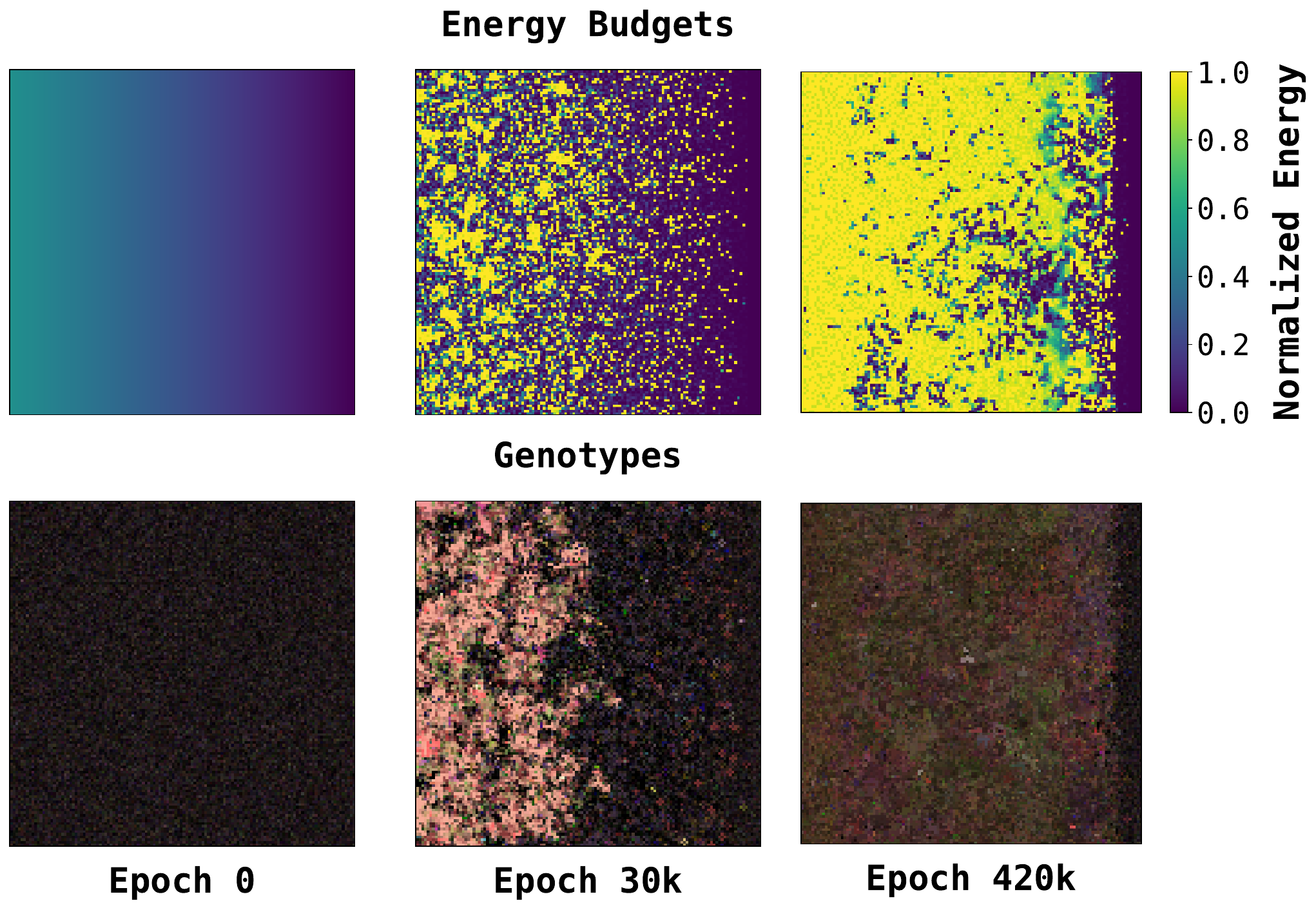}
           \caption{Visualizing evolution on an uneven energy grid. The top row shows the post-interaction energy budget of each tape, with brighter yellow indicating higher remaining energy. The bottom row visualizes program genotype: each possible operation is assigned a distinct RGB value, and each program is colored by averaging the RGB values of the operations in its code. Programs with similar colors therefore share similar operational composition. Over time, initially random, non-functioning tapes give way to efficient self-replicators that first emerge in high-energy regions and later spread into lower-energy areas.}
        \label{fig:keyframes_evolution}
    \end{subfigure}
    
    \caption{Impact of asymmetric environmental energy distribution on population health and complexity, visualized through terminal energy budgets and genotype-level program similarity.}
    \label{fig:combined_asymmetric_analysis}
    \vspace{-1em}
\end{figure*}

\textbf{The suppression of defection is not explained by high mutation rates alone (RQ2).} To test whether this suppression is driven by the coevolutionary process rather than just continuous mutational noise, we conducted a mutation-free ($\mu = 0$) petri-dish invasion assay. We seeded a naive random soup with 1\% functioning replicators and independently seeded 1\% of the soup with the \texttt{STEAL} opcode. Evaluating the endemic equilibrium reveals a clear starvation dropoff (Figure~\ref{fig:rq2_petridish_steals}): defectors are a minority of the population across the board, and the decrease in defector success when moving from a steal magnitude of $\delta=16$ to $\delta=32$ is highly statistically significant ($p<0.001$, Cohen's $d=4.17$; two-sided t-test). If a defector attempts to steal too much, it starves the shared energy pool before it can execute the $L$ writes necessary to replicate. Thus, even without mutations, high-stealing parasitic programs can trap themselves in evolutionary dead-ends.%

\begin{table}[htbp]
\vspace{-1em}
    \centering
    \begin{minipage}{0.45\textwidth}
        \centering
        \caption{Marginal defector proportions by $\alpha$ (averaged across $\delta$)}
        \label{tab:alpha_marginal}
        \begin{tabular}{ccc}
            \toprule
            $\alpha$ Value & Mean & SE \\
            \midrule
            0.25 & 0.210 & 0.038 \\
            0.50 & 0.169 & 0.034 \\
            0.75 & 0.213 & 0.035 \\
            0.95 & 0.204 & 0.036 \\
            \bottomrule
        \end{tabular}
    \end{minipage}\hfill
    \begin{minipage}{0.45\textwidth}
        \centering
        \caption{Marginal defector proportions by $\delta$ (averaged across $\alpha$)}
        \label{tab:delta_marginal}
        \begin{tabular}{ccc}
            \toprule
            $\delta$ Value & Mean & SE \\
            \midrule
            4 & 0.111 & 0.036 \\
            8 & 0.168 & 0.046 \\
            16 & 0.313 & 0.052 \\
            24 & 0.075 & 0.027 \\
            32 & 0.226 & 0.045 \\
            64 & 0.300 & 0.041 \\
            \bottomrule
        \end{tabular}
    \end{minipage}
\vspace{-0.5em}
\end{table}

\textbf{The suppression of defection is not explained by inefficient stealing alone (RQ2).}
To rule out whether the limits on the proportion of defectors in the population was driven by stealing being inefficient rather than agents being able to evolve their replication mechanisms simultaneously with their social strategy, we explore the relationship between the $\alpha$ and $\delta$ parameters. We simulated 10 seeds of a well-mixed population interacting with every combination of $\alpha \in \{0.25, 0.5, 0.75, 0.95\} \times \delta \in \{4, 8, 16, 24, 32, 64\}$. Moreover, to make stealing easier to evolve and more difficult to break via mutation, we re-encoded the STEAL operation into a single byte instruction.

In this experiment, if the limits on the ability for the defector population to grow were caused by inefficient stealing, then as $\alpha \to 1$ we should expect a greater proportion of defectors to survive since stealing would become more efficient. Instead, as Tables \ref{tab:alpha_marginal} and \ref{tab:delta_marginal} show, we find that the number of defectors does not appear to be sensitive to the value of $\alpha$ alone. However, we did find a weak relationship between the amount gained from stealing ($\alpha \times \delta$) and the proportion of defectors (Pearson $r = 0.14$, $p = 0.03$). This suggests other factors in the co-evolutionary process are likely doing the work of stabilizing cooperation.

\begin{figure}[b]
    \vspace{-1em}
    \centering
    \includegraphics[width=0.8\linewidth]{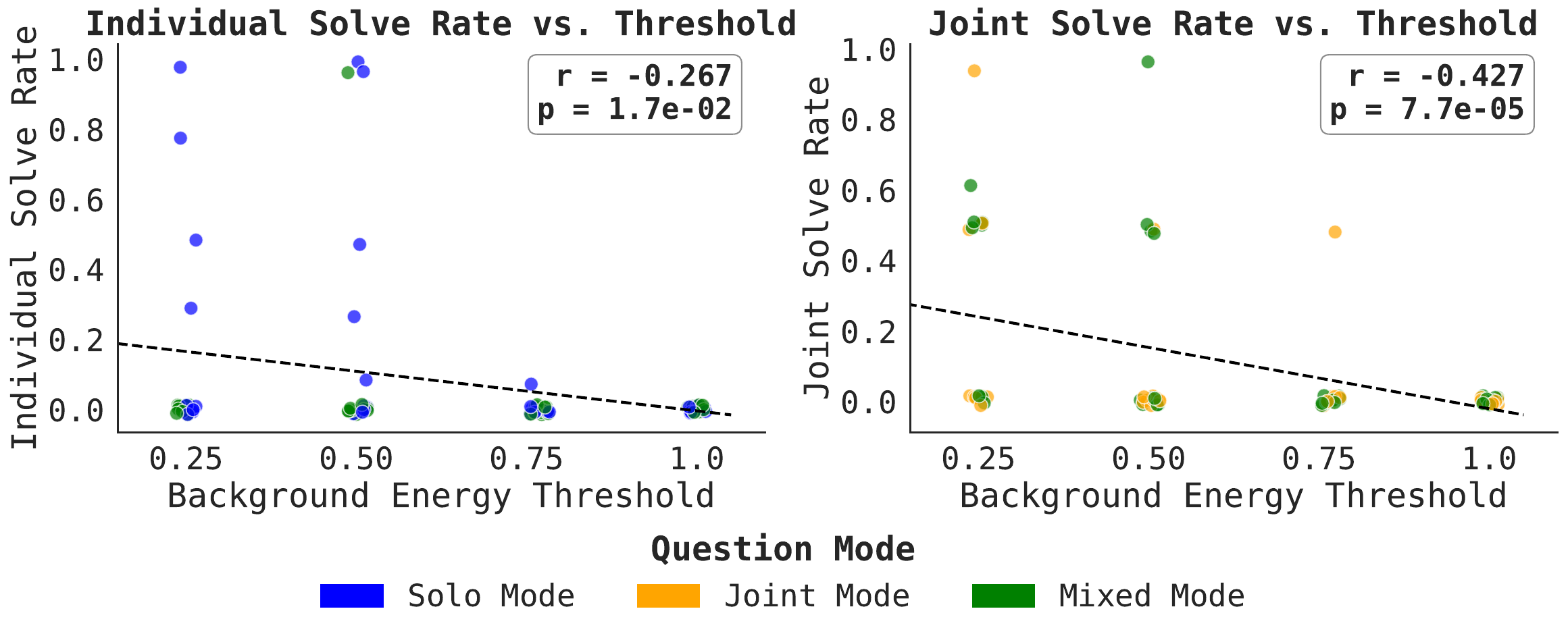}
    \caption{\textbf{Environmental scarcity drives problem-solving.} Solve rates for both individual (Left) and joint (Right) tasks show a significant negative correlation with environmental energy abundance. As free energy increases, the incentive to execute complex math tasks diminishes.}
    \label{fig:threshold_solveRate}
\end{figure}

\textbf{Local spatial interactions increase the structural complexity and efficiency of self-replicating programs (RQ3).} We next compared well-mixed populations to those restricted to a 2D local spatial grid, keeping uniform energy constant. While both topologies suppress defection under these conditions, local spatial interactions act as a powerful evolutionary catalyst. Spatial assortment significantly boosts higher-order entropy (Cohen's $d=2.81$) and lowers edit distance
(Cohen's $d=7.32$; both $p<0.001$, two-sided t-tests) compared to well-mixed pools (Figure~\ref{fig:rq2_catalysts}). By allowing non-stealing replicators to reliably cluster with identical kin, spatial assortment appears to support the evolution of more intricate, efficient instruction loops, rather than the more inefficient replicators typically found in well-mixed environments.

\textbf{Local interactions are needed for evolving complex replicators when background energy is asymmetrically distributed (RQ4).} Our initial experiments assumed every agent receives the exact same baseline energy. To test environmental resilience, we evaluated an asymmetric grid where the background energy $\epsilon$ varied significantly depending on an agent's physical location (Figure~\ref{fig:uniformVspace}). Under these systematic inequalities, well-mixed populations remain stuck as basic, low-complexity replicators with lower energy because they cannot rely on a predictable baseline influx. In stark contrast, populations restricted to local spatial interactions successfully overcome the environmental asymmetry to evolve complex, cooperative replicators, maintaining significantly higher terminal energy (Cohen's $d=8.98$) and higher-order entropy
(Cohen's $d=3.54$; both $p<0.001$, two-sided t-tests) (Figure~\ref{fig:rq2_asymmetric_grid}). Pockets of successful programs in high-energy zones iteratively build efficient code capable of surviving interactions in lower-energy zones (Figure~\ref{fig:keyframes_evolution}). Spatial assortment thus acts as an important architectural scaffold to overcome systemic biases.

\begin{figure}[t]
    \centering
    \includegraphics[width=1.0\linewidth]{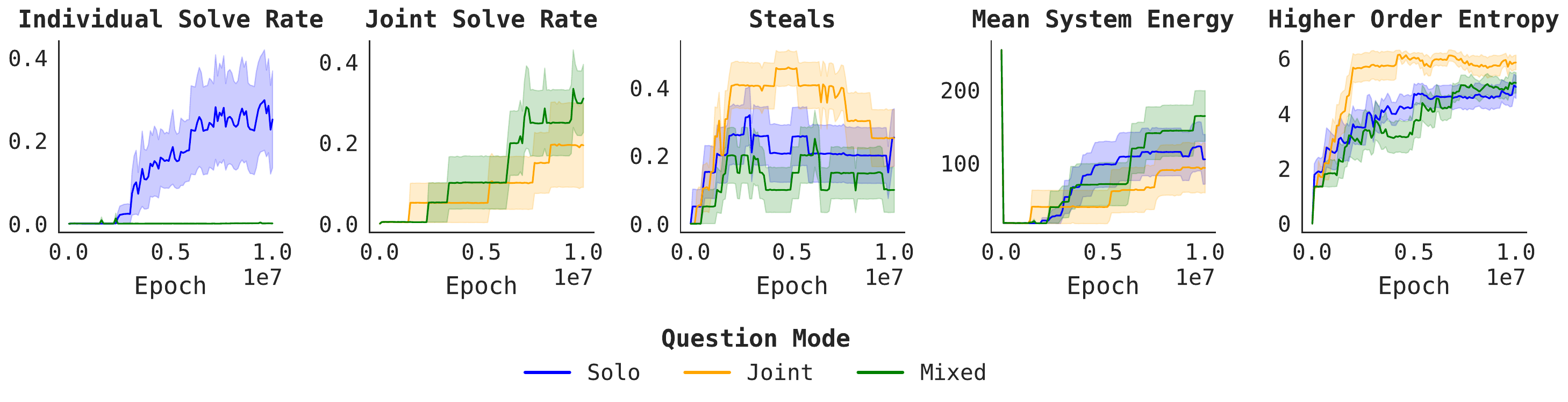}
    \caption{\textbf{Evolutionary convergence at severe metabolic scarcity} (25\% threshold). In Mixed mode, tapes heavily suppress the solo task to optimize for joint task performance. The structural complexification required for joint execution induces a temporary spike in stealing, which is ultimately suppressed, allowing system energy to climb and stabilize.}
    \label{fig:math_convergence}
    \vspace{-1em}
\end{figure}

\textbf{Scarcity drives agents to solve externally imposed tasks while simultaneously replicating their strategies (RQ5).} In our previous experiments, tapes received a constant amount of background energy from the environment, distributed either uniformly throughout the population or asymmetrical depending on the program's location on a 2D grid. To evaluate how the dynamics of agents' behavior changes when their energy influx is conditional on their ability to solve external tasks, we evaluate how well programs can solve math problems while simultaneously managing their energy and trying to replicate onto their partner. At the start of the simulation, each grid location is permanently assigned a mathematical constant $b$, which offspring inherit upon overwriting a neighbor. Every epoch, agents receive a random input $x \in [0, 255]$ and can access their partner's input $y$ (which costs more operations to read than $b$). To complete a task, an agent must compute the correct value and write it to a dedicated register \texttt{m} before execution ends. We defined two tasks: a Solo Task ($x + b$) and a Joint Task ($x + y$). To force engagement, free baseline energy is only provided if an agent's reserves fall below a strict threshold (e.g., 25\% of capacity). Above this limit, solving tasks is the only way to accumulate the surplus energy needed to safely replicate against wealthy opponents. We found a significant ($p < 0.001$) negative correlation between the energy threshold and task completion: as free energy becomes plentiful, the incentive to execute external tasks diminishes (Figure~\ref{fig:threshold_solveRate}).

\textbf{Cooperation persists in multi-step computational tasks and is strengthened by local interactions (RQ5).} We structured the math problems into a social dilemma evaluated by inspecting the custom register \texttt{m} for both tapes after every executed operation. Rewards are granted instantly. For instance, if Tape A solves the Joint task ($x+y$) first, both agents immediately receive $2\epsilon$. If Tape B subsequently solves its Solo task, it receives an additional $5\epsilon$ (safely securing $7\epsilon$ for itself, while the joint solver is left with just $2\epsilon$). However, if Tape B instead solves the Joint task, both receive an additional $4\epsilon$ (securing $6\epsilon$ each). If both independently output their Solo answers, they each receive $5\epsilon$. Despite this sequential social dilemma payoff structure, tapes heavily suppress the safer solo task and aggressively optimize for the collaborative joint task to maximize collective energy (Figure~\ref{fig:math_convergence}). Simultaneously, the \texttt{STEAL} instruction remains suppressed. Finally, local spatial interactions reliably produce significantly higher joint solve rates ($p\leq0.01$, Cohen's $d=1.63$; two-sided t-test) compared to well-mixed populations (Figure~\ref{fig:math_topology}), confirming that assortment aids in complex collaborative problem-solving \cite{Dolson148973}.

\vspace{-0.5em}
\section{How Co-Evolving Replication Stabilizes Cooperation}
\label{sec:AGT}
\vspace{-0.5em}

The Z80 experiments above are intentionally open-ended: random byte strings discover their own replication loops, may or may not execute \texttt{STEAL}, and can alter both partners' memory during execution. This makes the system useful as a stress test for emergent cooperation, but it also makes it difficult to isolate the causal mechanism stabilizing cooperation. We therefore use a simplified model, inspired by \cite{Ohtsuki2006-fm}, to ask whether the ordering of interaction and replication can explain our results. Agents inhabit a finite, well-mixed population, always cooperate ($C$) or defect ($D$), and retain energy across encounters. Accumulated energy determines the probability of controlling a replication event, analogous to the Z80 substrate.

\begin{table}[t]

\centering
\small
\begin{minipage}{0.31\linewidth}
\centering
\textbf{Drain}\\
\vspace{0.25em}
\begin{tabular}{c|cc}
 & $C$ & $D$ \\
\hline
$C$ & $2$ & $-2$ \\
$D$ & $3.5$ & $-0.5$
\end{tabular}
\end{minipage}
\hfill
\begin{minipage}{0.31\linewidth}
\centering
\textbf{Stagnate}\\
\vspace{0.25em}
\begin{tabular}{c|cc}
 & $C$ & $D$ \\
\hline
$C$ & $2$ & $-2$ \\
$D$ & $4$ & $0$
\end{tabular}
\end{minipage}
\hfill
\begin{minipage}{0.31\linewidth}
\centering
\textbf{Accumulate}\\
\vspace{0.25em}
\begin{tabular}{c|cc}
 & $C$ & $D$ \\
\hline
$C$ & $2$ & $-2$ \\
$D$ & $5$ & $1$
\end{tabular}
\end{minipage}
\vspace{0.5em}
\caption{Prisoner's Dilemma payoff environments. Rows indicate the focal agent's strategy and columns indicate the partner's strategy; each entry is the focal agent's energy change. The fixed-timing experiments in Figure~\ref{fig:repMechToy} use the Drain environment. The co-evolution experiments in Figure~\ref{fig:coEvoToy} compare all three environments.}
\vspace{-2em}
\label{tab:toy-pd-payoffs}

\end{table}

The three Prisoner's Dilemma environments in Table~\ref{tab:toy-pd-payoffs} differ only in whether mutual defection drains, preserves, or increases energy. We first fix the replication rule and compare three event orderings in the Drain environment (Figure~\ref{fig:repMechToy}):

\begin{figure}[b]
\vspace{-0.5em}
    \centering
    \includegraphics[width=\linewidth]{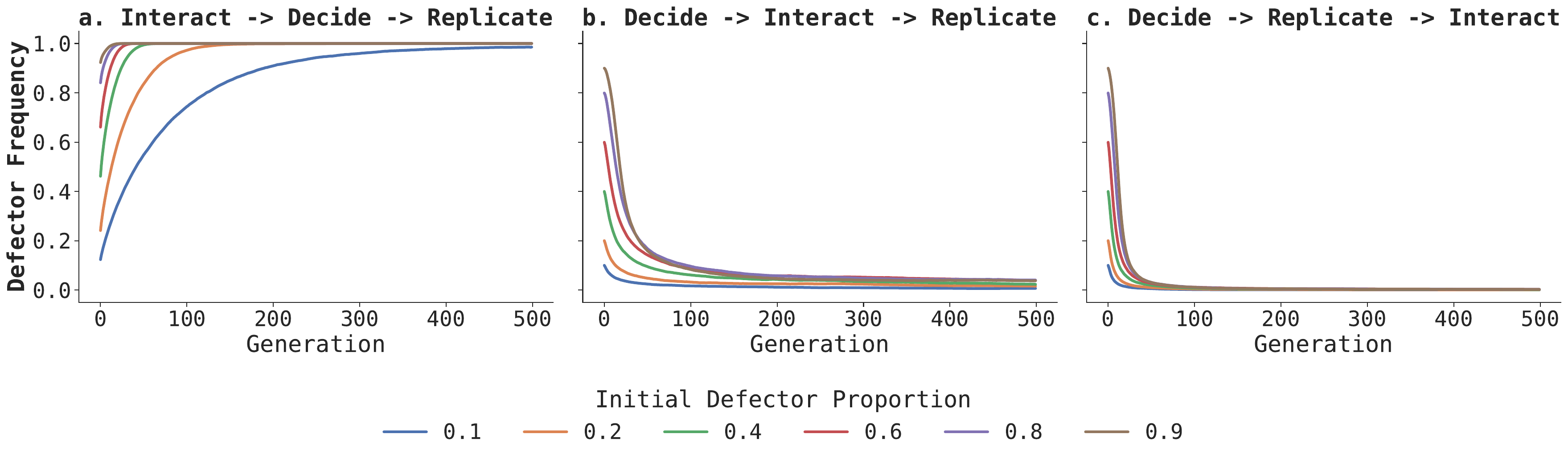}
    \caption{Population dynamics under three fixed replication-timing rules without mutations, with agents paired uniformly at random and mutual defection draining energy. (a) When replication is decided from post-interaction energy, defectors take over. (b) When replication is decided from pre-interaction energy but occurs after interaction, the immediate reproductive advantage of defection is removed and cooperation takes over. (c) When replication occurs before interaction, pairs become $(C,C)$ or $(D,D)$ before payoffs are assigned, making cooperation strictly better in this toy setting.}
    \label{fig:repMechToy}
    
\end{figure}

\begin{itemize}[leftmargin=*]
    \vspace{-0.5em}
    \item \textbf{Interact $\rightarrow$ decide from post-interaction energy $\rightarrow$ replicate:}
    Agents interact before replication priority is determined. Defectors can therefore convert the immediate energy gained from exploiting cooperators into control of the replication event, causing defection to take over (Figure~\ref{fig:repMechToy}a).
    \vspace{-0.25em}
    \item \textbf{Decide from pre-interaction energy $\rightarrow$ interact $\rightarrow$ replicate:}
    Replication priority is fixed before the interaction, although replication occurs afterward. Because the current payoff cannot affect the already-made decision, defectors lose their immediate reproductive advantage and cooperation takes over (Figure~\ref{fig:repMechToy}b).
    \vspace{-0.25em}
    \item \textbf{Decide from pre-interaction energy $\rightarrow$ replicate $\rightarrow$ interact:}
    Replication occurs before the social interaction, making each pair either $(C,C)$ or $(D,D)$. Mutual cooperation then yields more energy than mutual defection, so cooperation takes over (Figure~\ref{fig:repMechToy}c).
    \vspace{-0.5em}
\end{itemize}

\begin{figure}[t]

    \centering
    \includegraphics[width=\linewidth]{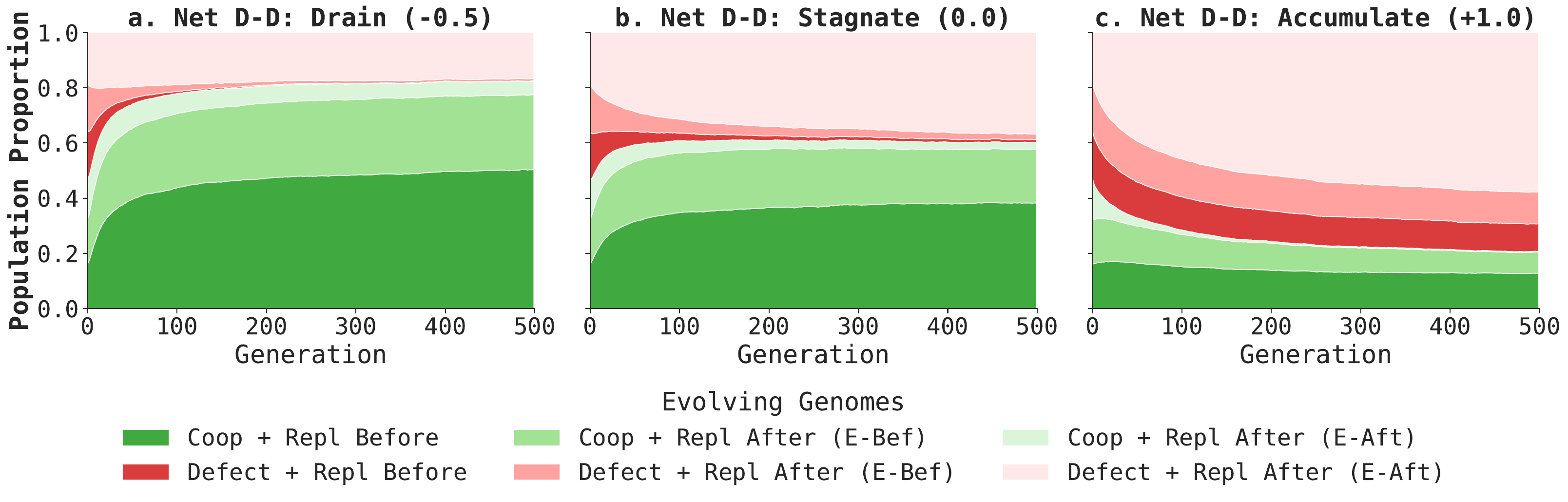}
    \caption{Co-evolution of social strategy and replication timing without mutation, with agents paired uniformly at random. The three environments vary only the payoff for defecting. When mutual defection drains or stagnates energy, cooperation robustly takes over; even when mutual defection increases energy, co-evolving replication timing allows cooperators to maintain a substantial presence rather than being eliminated.}
    \label{fig:coEvoToy}
        \vspace{-1em}
\end{figure}

These comparisons show that changing only the timing of replication can reverse the evolutionary outcome. We next allow both social strategy and replication timing to evolve (Figure~\ref{fig:coEvoToy}). The population begins with equal proportions of cooperators and defectors, with the three timing rules distributed uniformly and inherited by offspring. Accumulated energy determines which agent controls an encounter, and that agent's inherited rule determines the ordering of interaction and replication.

Cooperation dominates when mutual defection drains or stagnates total energy\footnote{Appendix~\ref{appendix:physics_justification} discusses the physical motivation for this assumption.}: repeated defection reduces the energy available to control future replication, whereas cooperation preserves it (Figure~\ref{fig:coEvoToy}a,b). Even when mutual defection increases energy, co-evolving replication timing allows cooperators to maintain a substantial population share (Figure~\ref{fig:coEvoToy}c).

Figure~\ref{fig:repMechToy}a shows a simpler model where mutual defection takes over under the ``interaction $\rightarrow$ decide $\rightarrow$ replicate'' propagation rule, even though this involves energy loss. Figures~\ref{fig:repMechToy}bc, and~\ref{fig:coEvoToy} show cooperation succeeding instead within the same environment. Together with the $\alpha$ and $\delta$ sweeps (Tables~\ref{tab:alpha_marginal} and~\ref{tab:delta_marginal}), this shows that energy loss alone is insufficient at sustaining cooperation; it is supported by coupling lossy social behavior to replication mechanisms that agents can evolve. In Appendix~\ref{appendix:ProofWM}, we detail our initial steps towards formalizing related sufficient conditions.

\vspace{-0.5em}
\section{Discussion}
\vspace{-0.5em}

In this work, we introduced Autopoietic Game Theory, a model that couples an agent's social strategy, replication mechanism, and available computational energy within a single phase of interaction. In our experiments, we found that defection can be suppressed through the combination of stealing being a lossy operation and the means of replication being capable of evolving. Our contribution is to show how these factors interact and evolve in a complex, open-ended space of programs, where even agents’ means of reproduction must emerge from scratch. Our results also reveal the limits of this mechanism. Cooperation remains robust under uniformly distributed energy, but under highly unequal energy, well-mixed populations collapse and local interactions are needed to maintain energy, complexity, and coordination on external tasks.

In Section 3 and 4, we extended existing substrates in ALife \citep{arcas2024computational} by making them game-theoretic through the introduction of energy-based computations and simultaneous action between tapes. In Section 5, we provided a simplified model of the system that tries to develop a causal understanding of the mechanisms stabilizing cooperation in our substrate. While our evidence suggests that the ability to evolve different timings of replication vs social interaction, paired with stealing being inefficient, is driving the emergence of cooperation, our substrate raises several exciting new questions that previous work has not been able to ask: What causes agents to copy all of their code onto their partner vs only part of their strategy? Does the crossover of programs resulting from two tapes partially copying onto each other provide any benefit to agents’ ability to adapt to a changing environment? Why was a cooperative strategy even discoverable from scratch in this substrate, and what environmental constraints lead to agents developing Moran-like ``interact$\rightarrow$decide$\rightarrow$replicate'' reproduction processes instead? How does the higher-order complexity of the resulting tapes shape their ability to solve downstream tasks? Nothing in our substrate is designed by hand, and agents are represented as programs with an open-ended space of actions. This creates an entire ecology of behaviors that we have only begun to map. We hope the model introduced in this work can help us build a more nuanced picture of when and why cooperation evolves.

\textbf{Limitations:}
Our simulations are limited to a simplified Prisoner's Dilemma and a Z80 assembly substrate. Different instruction densities and hardware architectures may change replication costs and starvation thresholds, as well as the reachability of cooperative strategies. We also assume unrestricted read and write access to partner memory. Although this resembles biological horizontal gene transfer or AI agents sharing prompts and problem-solving harnesses, many systems restrict access to an agent's internal state. Testing the range between shared and private computation will help determine when these dynamics persist. Moreover, our analysis in Appendix~\ref{appendix:ProofWM} also makes two simplifying assumptions: Conjectures~\ref{conj:asymmetric-recovery} and~\ref{conj:frequency-rebound} fix agents' relative execution speeds at their initial values, while Lemmas~\ref{lem:metabolic-drag} and~\ref{lem:metabolic-starvation}, which support Theorem~\ref{thm:local-favorability}, replace the system's changing total energy with its average. Section~\ref{sec:results} tests the full dynamics, updating both after every operation. Extending the analysis to asymmetric energy, continuous mutation, and external tasks remains an open challenge.

\newpage
\bibliographystyle{abbrvnat}
\bibliography{bibliography}

@article{von1966theory,
  title={Theory of self-reproducing automata},
  author={Von Neumann, John and Burks, Arthur Walter and others},
  year={1966},
  publisher={University of Illinois press Urbana}
}

@article{ray1992evolution,
  title={Evolution, ecology and optimization of digital organisms},
  author={Ray, Thomas S},
  journal={Santa Fe},
  year={1992}
}

@misc{adami1994evolutionary,
      title={Evolutionary Learning in the 2D Artificial Life System "Avida"}, 
      author={Chris Adami and C. Titus Brown},
      year={1994},
      eprint={adap-org/9405003},
      archivePrefix={arXiv},
      primaryClass={adap-org},
      url={https://arxiv.org/abs/adap-org/9405003}, 
}

@article{lenski2003evolutionary,
  title={The evolutionary origin of complex features},
  author={Lenski, Richard E and Ofria, Charles and Pennock, Robert T and Adami, Christoph},
  journal={Nature},
  volume={423},
  number={6936},
  pages={139--144},
  year={2003},
  publisher={Nature Publishing Group UK London}
}

@book{smith1982evolution,
  title={Evolution and the Theory of Games},
  author={Smith, J.M.},
  isbn={9780521288842},
  lccn={81021595},
  url={https://books.google.ch/books?id=Nag2IhmPS3gC},
  year={1982},
  publisher={Cambridge University Press}
}

@BOOK{Weibull1995-zu,
  title     = "Evolutionary game theory",
  author    = "Weibull, Jorgen W",
  publisher = "MIT Press",
  month     =  oct,
  year      =  1995,
  address   = "London, England",
  language  = "en"
}

@BOOK{Sandholm2010-je,
  title     = "Population games and evolutionary dynamics",
  author    = "Sandholm, William H",
  publisher = "MIT Press",
  series    = "Economic Learning and Social Evolution",
  month     =  dec,
  year      =  2010,
  address   = "London, England"
}

@ARTICLE{Landauer1961-sj,
  title     = "Irreversibility and heat generation in the computing process",
  author    = "Landauer, R",
  journal   = "IBM J. Res. Dev.",
  publisher = "IBM",
  volume    =  5,
  number    =  3,
  pages     = "183--191",
  month     =  jul,
  year      =  1961
}

@BOOK{Chandrakasan1995-ee,
  title     = "Low Power Digital {CMOS} Design",
  author    = "Chandrakasan, Anantha P and Brodersen, Robert W",
  publisher = "Springer",
  edition   =  1995,
  month     =  jun,
  year      =  1995,
  address   = "Dordrecht, Netherlands",
  language  = "en"
}

@BOOK{Thomas_M_Cover2006-vo,
  title     = "Elements of Information Theory",
  author    = "{Thomas M. Cover} and Thomas, Joy A",
  publisher = "John Wiley \& Sons",
  edition   =  2,
  month     =  jun,
  year      =  2006,
  address   = "Nashville, TN",
  language  = "en"
}

@inproceedings{chenCompression,
author = {Chen, Xin and Kwong, Sam and Li, Ming},
title = {A compression algorithm for DNA sequences and its applications in genome comparison},
year = {2000},
isbn = {1581131860},
publisher = {Association for Computing Machinery},
address = {New York, NY, USA},
url = {https://doi.org/10.1145/332306.332352},
doi = {10.1145/332306.332352},
booktitle = {Proceedings of the Fourth Annual International Conference on Computational Molecular Biology},
pages = {107},
location = {Tokyo, Japan},
series = {RECOMB '00}
}

@ARTICLE{Horibe2003-ro,
  title     = "A note on Kolmogorov complexity and entropy",
  author    = "Horibe, Yasuichi",
  journal   = "Appl. Math. Lett.",
  publisher = "Elsevier BV",
  volume    =  16,
  number    =  7,
  pages     = "1129--1130",
  month     =  oct,
  year      =  2003,
  copyright = "http://www.elsevier.com/open-access/userlicense/1.0/",
  language  = "en"
}

@article{Margolus_1998,
   title={The maximum speed of dynamical evolution},
   volume={120},
   ISSN={0167-2789},
   url={http://dx.doi.org/10.1016/S0167-2789(98)00054-2},
   DOI={10.1016/s0167-2789(98)00054-2},
   number={1–2},
   journal={Physica D: Nonlinear Phenomena},
   publisher={Elsevier BV},
   author={Margolus, Norman and Levitin, Lev B.},
   year={1998},
   month=sep, pages={188–195} }

@misc{dafoe2020openproblemscooperativeai,
      title={Open Problems in Cooperative AI}, 
      author={Allan Dafoe and Edward Hughes and Yoram Bachrach and Tantum Collins and Kevin R. McKee and Joel Z. Leibo and Kate Larson and Thore Graepel},
      year={2020},
      eprint={2012.08630},
      archivePrefix={arXiv},
      primaryClass={cs.AI},
      url={https://arxiv.org/abs/2012.08630}, 
}

@misc{dochkina2026drophierarchyrolesselforganizing,
      title={Drop the Hierarchy and Roles: How Self-Organizing LLM Agents Outperform Designed Structures}, 
      author={Victoria Dochkina},
      year={2026},
      eprint={2603.28990},
      archivePrefix={arXiv},
      primaryClass={cs.AI},
      url={https://arxiv.org/abs/2603.28990}, 
}

@misc{liu2026evoxmetaevolutionautomateddiscovery,
      title={EvoX: Meta-Evolution for Automated Discovery}, 
      author={Shu Liu and Shubham Agarwal and Monishwaran Maheswaran and Mert Cemri and Zhifei Li and Qiuyang Mang and Ashwin Naren and Ethan Boneh and Audrey Cheng and Melissa Z. Pan and Alexander Du and Kurt Keutzer and Alvin Cheung and Alexandros G. Dimakis and Koushik Sen and Matei Zaharia and Ion Stoica},
      year={2026},
      eprint={2602.23413},
      archivePrefix={arXiv},
      primaryClass={cs.LG},
      url={https://arxiv.org/abs/2602.23413}, 
}

@misc{zhang2026hyperagents,
      title={Hyperagents}, 
      author={Jenny Zhang and Bingchen Zhao and Wannan Yang and Jakob Foerster and Jeff Clune and Minqi Jiang and Sam Devlin and Tatiana Shavrina},
      year={2026},
      eprint={2603.19461},
      archivePrefix={arXiv},
      primaryClass={cs.AI},
      url={https://arxiv.org/abs/2603.19461}, 
}

@misc{arcas2024computational,
      title={Computational Life: How Well-formed, Self-replicating Programs Emerge from Simple Interaction}, 
      author={Agüera y Arcas, Blaise and Jyrki Alakuijala and James Evans and Ben Laurie and Alexander Mordvintsev and Eyvind Niklasson and Ettore Randazzo and Luca Versari},
      year={2024},
      eprint={2406.19108},
      archivePrefix={arXiv},
      primaryClass={cs.NE},
      url={https://arxiv.org/abs/2406.19108}, 
}

@article{axelrod1981evolution,
  title={The evolution of cooperation},
  author={Axelrod, Robert and Hamilton, William D},
  journal={science},
  volume={211},
  number={4489},
  pages={1390--1396},
  year={1981},
  publisher={American Association for the Advancement of Science}
}

@article{nowak2006five,
  title={Five rules for the evolution of cooperation},
  author={Nowak, Martin A},
  journal={science},
  volume={314},
  number={5805},
  pages={1560--1563},
  year={2006},
  publisher={American Association for the Advancement of Science}
}

@article{tennenholtz2004program,
  title={Program equilibrium},
  author={Tennenholtz, Moshe},
  journal={Games and Economic Behavior},
  volume={49},
  number={2},
  pages={363--373},
  year={2004},
  publisher={Elsevier}
}

@article{critch2019parametric,
  title={A parametric, resource-bounded generalization of L{\"o}b’s theorem, and a robust cooperation criterion for open-source game theory},
  author={Critch, Andrew},
  journal={The Journal of Symbolic Logic},
  volume={84},
  number={4},
  pages={1368--1381},
  year={2019},
  publisher={Cambridge University Press}
}

@article{sistla2025evaluating,
  title={Evaluating LLMs in Open-Source Games},
  author={Sistla, Swadesh and Kleiman-Weiner, Max},
  journal={arXiv preprint arXiv:2512.00371},
  year={2025}
}

@article{dunbar2007evolution,
  title={Evolution in the social brain},
  author={Dunbar, Robin IM and Shultz, Susanne},
  journal={science},
  volume={317},
  number={5843},
  pages={1344--1347},
  year={2007},
  publisher={American Association for the Advancement of Science}
}

@article{dunbar2003social,
  title={The social brain: mind, language, and society in evolutionary perspective},
  author={Dunbar, Robin IM},
  journal={Annual review of Anthropology},
  volume={32},
  number={1},
  pages={163--181},
  year={2003},
  publisher={Annual Reviews 4139 El Camino Way, PO Box 10139, Palo Alto, CA 94303-0139, USA}
}

@article{tomasello2007shared,
  title={Shared intentionality},
  author={Tomasello, Michael and Carpenter, Malinda},
  journal={Developmental science},
  volume={10},
  number={1},
  pages={121--125},
  year={2007},
  publisher={Wiley Online Library}
}

@article{muthukrishna2018cultural,
  title={The cultural brain hypothesis: How culture drives brain expansion, sociality, and life history},
  author={Muthukrishna, Michael and Doebeli, Michael and Chudek, Maciej and Henrich, Joseph},
  journal={PLoS computational biology},
  volume={14},
  number={11},
  pages={e1006504},
  year={2018},
  publisher={Public Library of Science San Francisco, CA USA}
}

@incollection{henrich2015secret,
  title={The secret of our success: How culture is driving human evolution, domesticating our species, and making us smarter},
  author={Henrich, Joseph},
  booktitle={The secret of our success},
  year={2015},
  publisher={princeton University press}
}

@article{kaplan2000theory,
  title={A theory of human life history evolution: Diet, intelligence, and longevity},
  author={Kaplan, Hillard and Hill, Kim and Lancaster, Jane and Hurtado, A Magdalena},
  journal={Evolutionary Anthropology: Issues, News, and Reviews: Issues, News, and Reviews},
  volume={9},
  number={4},
  pages={156--185},
  year={2000},
  publisher={Wiley Online Library}
}

@book{henrich2007humans,
  title={Why Humans Cooperate: A Cultural and Evolutionary Explanation},
  author={Henrich, J. and Henrich, N.},
  isbn={9780198041177},
  lccn={2006048326},
  series={Evolution and Cognition},
  url={https://books.google.lu/books?id=qoVxwP0aX4cC},
  year={2007},
  publisher={Oxford University Press}
}

@article{aiello1995expensive,
  title={The expensive-tissue hypothesis: the brain and the digestive system in human and primate evolution},
  author={Aiello, Leslie C and Wheeler, Peter},
  journal={Current anthropology},
  volume={36},
  number={2},
  pages={199--221},
  year={1995},
  publisher={University of Chicago Press}
}

@book{muthukrishna2023theory,
  title={A theory of everyone: The new science of who we are, how we got here, and where we’re going},
  author={Muthukrishna, Michael},
  year={2023},
  publisher={MIT Press}
}

@ARTICLE{Kleiman-Weiner2025-jr,
  title    = "Evolving general cooperation with a Bayesian theory of mind",
  author   = "Kleiman-Weiner, Max and Vient{\'o}s, Alejandro and Rand, David G
              and Tenenbaum, Joshua B",
  journal  = "Proc. Natl. Acad. Sci. U. S. A.",
  volume   =  122,
  number   =  25,
  pages    = "e2400993122",
  month    =  jun,
  year     =  2025,
  language = "en"
}

@ARTICLE{Nowak1992-qb,
  title     = "Evolutionary games and spatial chaos",
  author    = "Nowak, Martin A and May, Robert M",
  journal   = "Nature",
  publisher = "Springer Science and Business Media LLC",
  volume    =  359,
  number    =  6398,
  pages     = "826--829",
  month     =  oct,
  year      =  1992,
  language  = "en"
}

@misc{baker2020emergentreciprocityteamformation,
      title={Emergent Reciprocity and Team Formation from Randomized Uncertain Social Preferences}, 
      author={Bowen Baker},
      year={2020},
      eprint={2011.05373},
      archivePrefix={arXiv},
      primaryClass={cs.LG},
      url={https://arxiv.org/abs/2011.05373}, 
}

@misc{leibo2017multiagentreinforcementlearningsequential,
      title={Multi-agent Reinforcement Learning in Sequential Social Dilemmas}, 
      author={Joel Z. Leibo and Vinicius Zambaldi and Marc Lanctot and Janusz Marecki and Thore Graepel},
      year={2017},
      eprint={1702.03037},
      archivePrefix={arXiv},
      primaryClass={cs.MA},
      url={https://arxiv.org/abs/1702.03037}, 
}

@misc{jaques2019socialinfluenceintrinsicmotivation,
      title={Social Influence as Intrinsic Motivation for Multi-Agent Deep Reinforcement Learning}, 
      author={Natasha Jaques and Angeliki Lazaridou and Edward Hughes and Caglar Gulcehre and Pedro A. Ortega and DJ Strouse and Joel Z. Leibo and Nando de Freitas},
      year={2019},
      eprint={1810.08647},
      archivePrefix={arXiv},
      primaryClass={cs.LG},
      url={https://arxiv.org/abs/1810.08647}, 
}

@inproceedings{Yaeger1997ComputationalGP,
  title={Computational Genetics, Physiology, Metabolism, Neural Systems, Learning, Vision, and Behavior or PolyWorld: Life in a New Context},
  author={Larry S. Yaeger},
  year={1997},
  url={https://api.semanticscholar.org/CorpusID:14599677}
}

@INPROCEEDINGS{sorosIdentifying,
  title      = "Identifying necessary conditions for open-ended evolution
                through the artificial life world of chromaria",
  booktitle  = "Artificial Life 14: Proceedings of the Fourteenth International
                Conference on the Synthesis and Simulation of Living Systems",
  author     = "{Department of EECS (Computer Science Division) University of
                Central Florida, Orlando, FL 32816} and Soros, L and Stanley,
                Kenneth",
  publisher  = "The MIT Press",
  pages      = "793--800",
  month      =  jul,
  year       =  2014,
  conference = "Artificial Life 14: International Conference on the Synthesis
                and Simulation of Living Systems"
}

@ARTICLE{Bennett1982-hd,
  title     = "The thermodynamics of computation---a review",
  author    = "Bennett, Charles H",
  journal   = "Int. J. Theor. Phys.",
  publisher = "Springer Science and Business Media LLC",
  volume    =  21,
  number    =  12,
  pages     = "905--940",
  month     =  dec,
  year      =  1982,
  language  = "en"
}

@misc{hu2025automateddesignagenticsystems,
      title={Automated Design of Agentic Systems}, 
      author={Shengran Hu and Cong Lu and Jeff Clune},
      year={2025},
      eprint={2408.08435},
      archivePrefix={arXiv},
      primaryClass={cs.AI},
      url={https://arxiv.org/abs/2408.08435}, 
}

@ARTICLE{Ohtsuki2006-fm,
  title     = "Evolutionary games on cycles",
  author    = "Ohtsuki, Hisashi and Nowak, Martin A",
  journal   = "Proc. Biol. Sci.",
  publisher = "The Royal Society",
  volume    =  273,
  number    =  1598,
  pages     = "2249--2256",
  month     =  sep,
  year      =  2006,
  language  = "en"
}

@article{Cardillo_2010,
   title={Co-evolution of strategies and update rules in the prisoner’s dilemma game on complex networks},
   volume={12},
   ISSN={1367-2630},
   url={http://dx.doi.org/10.1088/1367-2630/12/10/103034},
   DOI={10.1088/1367-2630/12/10/103034},
   number={10},
   journal={New Journal of Physics},
   publisher={IOP Publishing},
   author={Cardillo, Alessio and Gómez-Gardeñes, Jesús and Vilone, Daniele and Sánchez, Angel},
   year={2010},
   month=Oct, pages={103034} }

@article{Perc_2010,
   title={Coevolutionary games—A mini review},
   volume={99},
   ISSN={0303-2647},
   url={http://dx.doi.org/10.1016/j.biosystems.2009.10.003},
   DOI={10.1016/j.biosystems.2009.10.003},
   number={2},
   journal={Biosystems},
   publisher={Elsevier BV},
   author={Perc, Matjaž and Szolnoki, Attila},
   year={2010},
   month=Feb, pages={109–125} }

@misc{Steinberger2026openclaw,
	author = {Steinberger, Peter and Koc, Vincent and Zaidi, Ayaan and Hoffman, Tak and Santana, Gustavo Madeira and {Vignesh} and {Shakker} and {Shadow} and Avant, Josh and Slight, Seb and Nakazawa, Christoph and {Mariano} and Yust, Tyler and Gutman, Nimrod and Tomlinson, Jacob and {github-actions[bot]} and {scoootscooob} and Alexander, Val and {Sid} and Lehman, Josh and {Onur} and {the sun gif man} and Hunt, Harold and Mendonca, Brian and Castro, Marcus and Rivera, Agustin and Solmaz, Onur and {Glucksberg} and {Altay} and {max}},
	year = {2026},
	month = {apr 17},
	title = {openclaw/openclaw},
	url = {https://github.com/openclaw/openclaw},
	howpublished = {https://github.com/openclaw/openclaw},
}

@article{coreWars,
 ISSN = {00368733, 19467087},
 URL = {http://www.jstor.org/stable/24967665},
 author = {A. K. Dewdney},
 journal = {Scientific American},
 number = {6},
 pages = {18--29},
 publisher = {Scientific American, a division of Nature America, Inc.},
 title = {COMPUTER RECREATIONS},
 urldate = {2026-04-29},
 volume = {252},
 year = {1985}
}

@article{bear2016intuition,
  title={Intuition, deliberation, and the evolution of cooperation},
  author={Bear, Adam and Rand, David G},
  journal={Proceedings of the National Academy of Sciences},
  volume={113},
  number={4},
  pages={936--941},
  year={2016},
  publisher={National Academy of Sciences}
}

@article{rand2013human,
  title={Human cooperation},
  author={Rand, David G and Nowak, Martin A},
  journal={Trends in cognitive sciences},
  volume={17},
  number={8},
  pages={413--425},
  year={2013},
  publisher={Elsevier}
}

@misc{leibo2025pragmaticviewaipersonhood,
      title={A Pragmatic View of AI Personhood}, 
      author={Joel Z. Leibo and Alexander Sasha Vezhnevets and William A. Cunningham and Stanley M. Bileschi},
      year={2025},
      eprint={2510.26396},
      archivePrefix={arXiv},
      primaryClass={cs.AI},
      url={https://arxiv.org/abs/2510.26396}, 
}

@article{Roca_2009,
   title={Evolutionary game theory: Temporal and spatial effects beyond replicator dynamics},
   volume={6},
   ISSN={1571-0645},
   url={http://dx.doi.org/10.1016/j.plrev.2009.08.001},
   DOI={10.1016/j.plrev.2009.08.001},
   number={4},
   journal={Physics of Life Reviews},
   publisher={Elsevier BV},
   author={Roca, Carlos P. and Cuesta, José A. and Sánchez, Angel},
   year={2009},
   month=dec, pages={208–249} }

@book{laland2017darwin,
  title={Darwin's unfinished symphony: How culture made the human mind},
  author={Laland, Kevin N},
  year={2017},
  publisher={Princeton University Press}
}

@article{hull2004niche,
  title={Niche construction: The neglected process in evolution},
  author={Hull, David L},
  journal={Perspectives in Biology and Medicine},
  volume={47},
  number={2},
  pages={314--316},
  year={2004},
  publisher={Johns Hopkins University Press}
}

@book{y2025intelligence,
  title={What is Intelligence?: Lessons from AI about Evolution, Computing, and Minds},
  author={Agüera y Arcas, Blaise},
  year={2025},
  publisher={MIT Press}
}

@misc{meulemans2025embeddeduniversalpredictiveintelligence,
      title={Embedded Universal Predictive Intelligence: a coherent framework for multi-agent learning}, 
      author={Alexander Meulemans and Rajai Nasser and Maciej Wołczyk and Marissa A. Weis and Seijin Kobayashi and Blake Richards and Guillaume Lajoie and Angelika Steger and Marcus Hutter and James Manyika and Rif A. Saurous and João Sacramento and Agüera y Arcas, Blaise},
      year={2025},
      eprint={2511.22226},
      archivePrefix={arXiv},
      primaryClass={cs.AI},
      url={https://arxiv.org/abs/2511.22226}, 
}

@article{
flask,
author = {Hywel T. P. Williams  and Timothy M. Lenton },
title = {Environmental regulation in a network of simulated microbial ecosystems},
journal = {Proceedings of the National Academy of Sciences},
volume = {105},
number = {30},
pages = {10432-10437},
year = {2008},
doi = {10.1073/pnas.0800244105},
URL = {https://www.pnas.org/doi/abs/10.1073/pnas.0800244105},
eprint = {https://www.pnas.org/doi/pdf/10.1073/pnas.0800244105},
}

@article {Dolson148973,
	author = {Dolson, Emily L. and P{\'e}rez, Samuel G. and Olson, Randal S. and Ofria, Charles},
	title = {Spatial resource heterogeneity increases diversity and evolutionary potential},
	elocation-id = {148973},
	year = {2017},
	doi = {10.1101/148973},
	publisher = {Cold Spring Harbor Laboratory},

	URL = {https://www.biorxiv.org/content/early/2017/06/12/148973},
	eprint = {https://www.biorxiv.org/content/early/2017/06/12/148973.full.pdf},
	journal = {bioRxiv}
}

@article{coEvolutionBio,
    doi = {10.1371/journal.pbio.1002023},
    author = {Zaman, Luis AND Meyer, Justin R. AND Devangam, Suhas AND Bryson, David M. AND Lenski, Richard E. AND Ofria, Charles},
    journal = {PLOS Biology},
    publisher = {Public Library of Science},
    title = {Coevolution Drives the Emergence of Complex Traits and Promotes Evolvability},
    year = {2014},
    month = {12},
    volume = {12},
    url = {https://doi.org/10.1371/journal.pbio.1002023},
    pages = {1-9},
    number = {12},

}

@inproceedings{ullman2009,
 author = {Ullman, Tomer and Baker, Chris and Macindoe, Owen and Evans, Owain and Goodman, Noah and Tenenbaum, Joshua},
 booktitle = {Advances in Neural Information Processing Systems},
 editor = {Y. Bengio and D. Schuurmans and J. Lafferty and C. Williams and A. Culotta},
 pages = {},
 publisher = {Curran Associates, Inc.},
 title = {Help or Hinder: Bayesian Models of Social Goal Inference},
 url = {https://proceedings.neurips.cc/paper_files/paper/2009/file/52292e0c763fd027c6eba6b8f494d2eb-Paper.pdf},
 volume = {22},
 year = {2009}
}

@article{baker2017rational,
  title={Rational quantitative attribution of beliefs, desires and percepts in human mentalizing},
  author={Baker, Chris L and Jara-Ettinger, Julian and Saxe, Rebecca and Tenenbaum, Joshua B},
  journal={Nature Human Behaviour},
  volume={1},
  number={4},
  pages={0064},
  year={2017},
  publisher={Nature Publishing Group UK London}
}

@inproceedings{kleiman2016coordinate,
  title={Coordinate to cooperate or compete: abstract goals and joint intentions in social interaction},
  author={Kleiman-Weiner, Max and Ho, Mark K and Austerweil, Joseph L and Littman, Michael L and Tenenbaum, Joshua B},
  booktitle={Proceedings of the annual meeting of the cognitive science society},
  volume={38},
  year={2016}
}

@inproceedings{kleiman2020downloading,
  title={Downloading culture. Zip: Social learning by program induction},
  author={Kleiman-Weiner, Max and Sosa, Felix and Thompson, Bill and Opheusden, Bas van and Griffiths, Thomas L and Gershman, Samuel and Cushman, Fiery},
  booktitle={Proceedings of the Annual Meeting of the Cognitive Science Society},
  volume={42},
  year={2020}
}

@article{zhi2020online,
  title={Online bayesian goal inference for boundedly rational planning agents},
  author={Zhi-Xuan, Tan and Mann, Jordyn and Silver, Tom and Tenenbaum, Josh and Mansinghka, Vikash},
  journal={Advances in neural information processing systems},
  volume={33},
  pages={19238--19250},
  year={2020}
}

@inproceedings{jha2024neural,
  title={Neural amortized inference for nested multi-agent reasoning},
  author={Jha, Kunal and Le, Tuan Anh and Jin, Chuanyang and Kuo, Yen-Ling and Tenenbaum, Joshua B and Shu, Tianmin},
  booktitle={Proceedings of the AAAI Conference on Artificial Intelligence},
  volume={38},
  number={1},
  pages={530--537},
  year={2024}
}

\appendix

\section{Supplementary Results}

\vspace{-0.25em}
\subsection{Contextualizing Autopoietic Game Theory within prior work (Avida)}
\vspace{-0.25em}
\label{appendix:avida_comp}

While deeply inspired by Avida \citep{adami1994evolutionary}, Autopoietic Game Theory shifts the question from how pre-existing digital life complexifies to how sociality co-emerges with replication under strict energetic constraints. In the foundational Avida platform paper (2004), CPU cycles are externally allocated by the scheduler and parasitism is often expressed as competitive reallocation of computational opportunity. In contrast, our agents begin from random bytes on a shared cyclic tape, must evolve replication logic de novo, and interact under a single coupled budget where acting, stealing, and overwriting are all paid from the same local energy pool.

This distinction clarifies what we mean by \textit{metabolic drag}. In classical zero-sum arms-race settings, parasitism can remain adaptive as long as resource capture outpaces counter-adaptation. Here, stealing is intentionally lossy ($\alpha < 1$), so parasitic success can destroy enough shared energy to reduce pairwise execution speed and replication reliability for both participants. Thus, unlike pure transfer dynamics, our mechanism introduces a physically grounded negative externality that can make defection self-limiting rather than indefinitely escalatory. We view this as complementary to, rather than a replacement for, prior ALife results on CPU-cycle parasitism and host-parasite coevolution.

Our spatial findings should likewise be interpreted as building on established spatial resilience work, not superseding it. Prior studies \cite{coEvolutionBio, Dolson148973} show that structure and heterogeneous environments can protect diversity and enable innovation. Our contribution is to show that, in an autopoietic game where replication and social interaction are physically coupled, local assortment does two things at once: (i) it buffers populations against metabolic collapse under asymmetric energy fields, and (ii) it scaffolds the evolution of higher-complexity, non-stealing replicators that can sustain exogenous task performance.

Finally, our results align with ecosystem-level ALife perspectives such as ``The Flask model'' \cite{flask}, which emphasizes that organisms can modify environments in ways that stabilize or destabilize community function. The key difference is that our destabilizing/stabilizing feedback is endogenized at the micro-interaction level through execution kinetics: each social operation immediately perturbs the local thermodynamic budget that determines future compute. This allows us to connect population-level resilience directly to instruction-level energetic accounting. Similar to Avida, we use external tasks to probe capability; unlike Avida's single-agent logical task structure, we impose tasks as exogenous pressures structured explicitly as a sequential social dilemma under scarcity to test whether cooperative computation can emerge while autopoiesis is maintained.

\vspace{-0.25em}
\subsection{Additional Results Analyzing the Replication Dilemma}
\vspace{-0.25em}
\label{appendix:RepDilemma}

\begin{figure}[ht]
    \centering
    \begin{minipage}{0.48\textwidth}
        \centering
        \includegraphics[width=\linewidth]{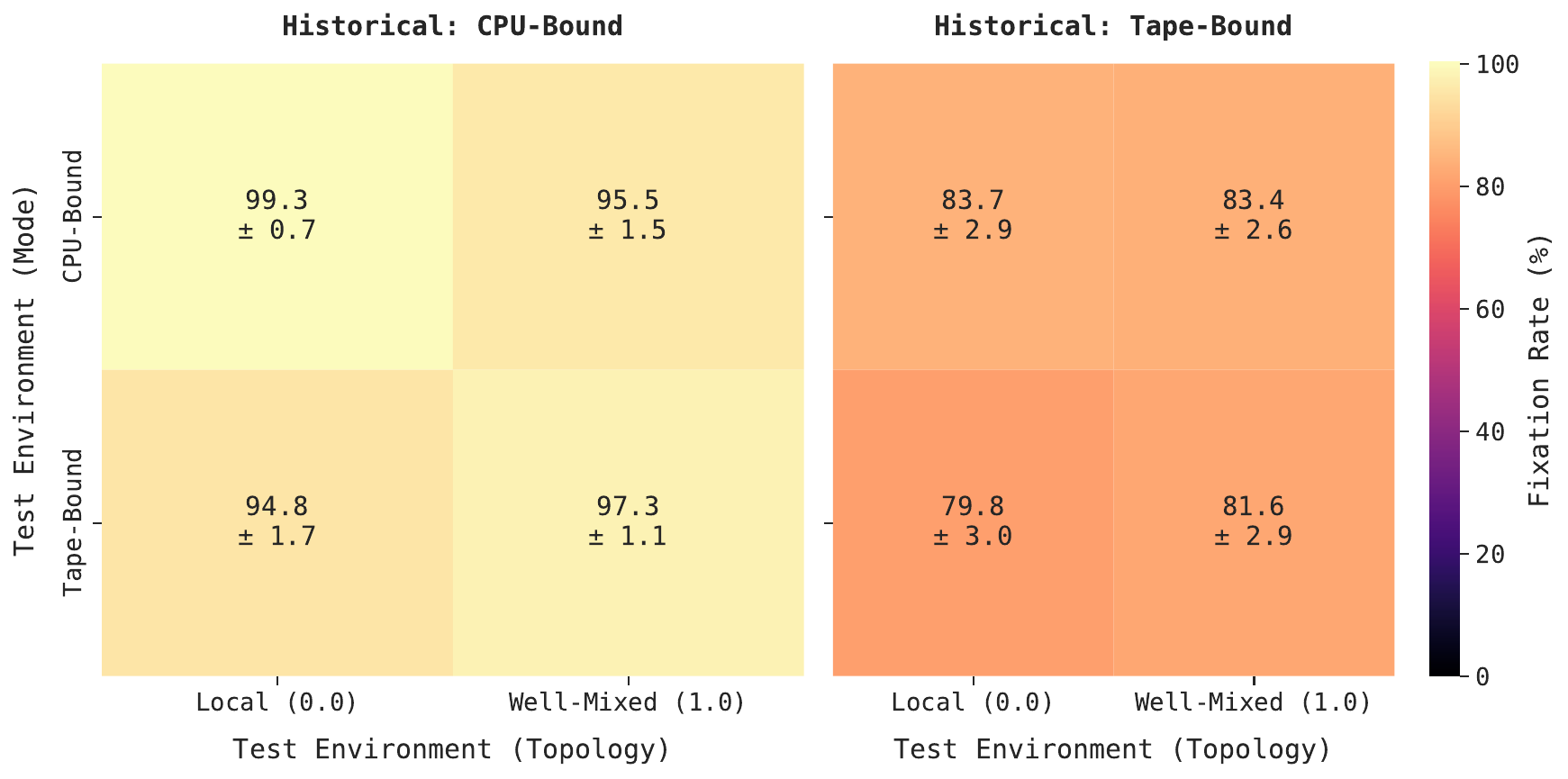}
    \end{minipage}\hfill
    \begin{minipage}{0.48\textwidth}
        \centering
        \includegraphics[width=\linewidth]{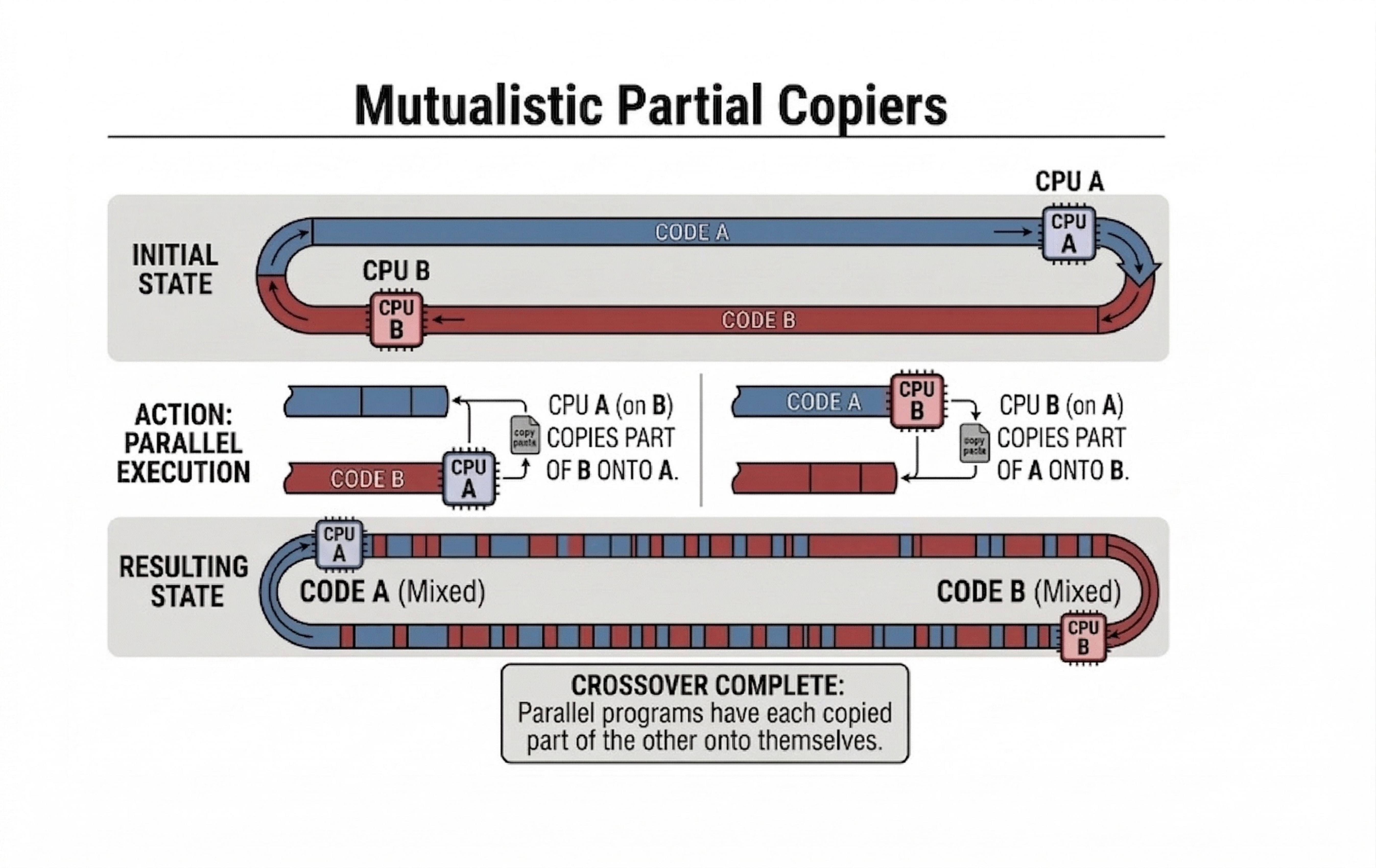}
    \end{minipage}
    \caption{(Left) Macroscopic fixation rates of representative replicators invading a naive soup. CPU-Bound replicators achieve near-universal fixation, whereas Tape-Bound replicators achieve significantly lower fixation rates if historically evolved in well-mixed settings. (Right) Example of the emergent, interdependent mutualistic replication strategy utilized by Tape-Bound agents.}
    \label{fig:invasion_and_copier}
\end{figure}

To understand how the agent-environment boundary shapes structural evolution, we first isolate the Replication Dilemma by disabling the \texttt{STEAL} operation ($\delta = 0$). We evaluate two distinct boundary definitions: \textit{CPU-as-Agent}, where operation costs are deducted from the actively executing program's budget, and \textit{Tape-as-Agent}, where costs are deducted from the accessed memory segment, regardless of the executing CPU. Populations were historically evolved under both well-mixed and local spatial topologies across various energy gradients.

Shifting this boundary fundamentally alters replication logic. Under \textit{Tape-as-Agent}, highly interdependent, mutualistic strategies emerge: rather than executing standard self-writes, CPU $A$ navigates to Tape $B$'s segment to execute writes back onto Tape $A$, forcing $B$'s budget to absorb the reproductive cost (Figure~\ref{fig:invasion_and_copier}, Right). This synergistic exploitation stabilizes almost exclusively under local spatial topologies, where repeated neighborhood interactions prevent immediate dilution. %

To assess architectural robustness, we conducted a ``petri dish'' invasion assay by seeding 1\% representative replicators into a naive random soup ($10^6$ epochs, $\mu = 0$, $\epsilon = 24$), systematically varying the test execution modes and topologies. We measured absolute fixation rates alongside continuous metrics like higher-order entropy \citep{arcas2024computational} and edit distance to capture non-trivial structural motifs. Macroscopic fixation (Figure~\ref{fig:invasion_and_copier}, Left) diverges strictly based on evolutionary history. Historically \textit{CPU-as-Agent} lineages universally achieve fixation across all test environments. Conversely, historically \textit{Tape-as-Agent} lineages only invade successfully if previously subjected to local spatial constraints or severely restricted energy. Lineages historically evolved under high energy and well-mixed topologies suffer total architectural collapse (Figure~\ref{fig:collapse}), rapidly degrading into structurally unstable 2-byte replicators that garble the shared tape.

Continuous metrics reveal significant structural imprinting ($p < 0.001$, two-sided t-test). High replication frequencies correlate with lower terminal steady-state energy and reduced edit distance (Figure~\ref{fig:energy_footprint}). Furthermore, test topologies drive distinct evolutionary attractors (Figure~\ref{fig:tradeoff_scatter}): well-mixed settings maximize genetic diversity (edit distance), while local spatial settings maximize structural complexity (higher-order entropy). Current execution modes (\textit{CPU} vs \textit{Tape}) alter replication instruction counts but do not significantly affect final energy or complexity.

Ultimately, spatial constraints seem to natively breed resilient architectures with high structural complexity, while well-mixed settings foster diversity. While the historical agent-environment boundary dictates the binary probability of successful invasion, it does not impact the continuous metrics of populations that converge. Combined with spatial constraints, this boundary fundamentally determines whether an agent evolves into a self-contained replicator or an interdependent one.

By isolating the Replication Dilemma (disabling the \texttt{STEAL} operation, $\delta = 0$), we observe how pure energetic accounting shapes structural evolution in this substrate. Even without the immediate threat of parasitic exploitation, rapid replication incurs a heavy metabolic footprint that drives down the terminal steady-state energy of the population (Figure~\ref{fig:energy_footprint}). Furthermore, an agent's evolutionary history strongly affects its architectural resilience. Lineages that historically evolved under unconstrained, well-mixed settings prove highly fragile; when forced to compete in a naive environment, they suffer architectural collapse, rapidly degrading into trivial, unstructured code (Figure~\ref{fig:collapse}). Ultimately, the topological landscape acts as a distinct evolutionary attractor. While well-mixed environments foster genetic diversity in our experiments—evidenced by higher population edit distances—local spatial constraints provide scaffolding that supports structural complexity and higher-order entropy (Figure~\ref{fig:tradeoff_scatter}).

\begin{figure}[ht]
    \centering
    \includegraphics[width=0.5\linewidth]{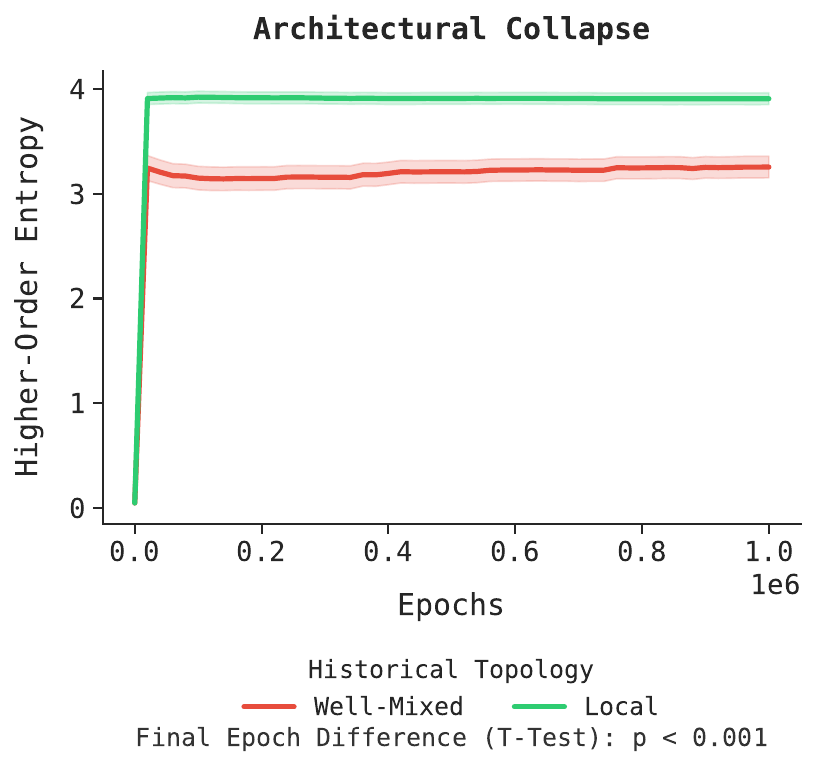}
    \caption{Timeline of architectural collapse in historical Tape-Bound replicators. Populations originally evolved in well-mixed settings experience lower higher-order entropy, indicating a takeover by trivial, unstructured code. Conversely, those evolved under local spatial constraints maintain structural complexity ($p < 0.001$, two-sided t-test comparing the means of the two groups at the final epoch).}
    \label{fig:collapse}
\end{figure}

\begin{figure}[ht]
    \centering
    \includegraphics[width=0.5\linewidth]{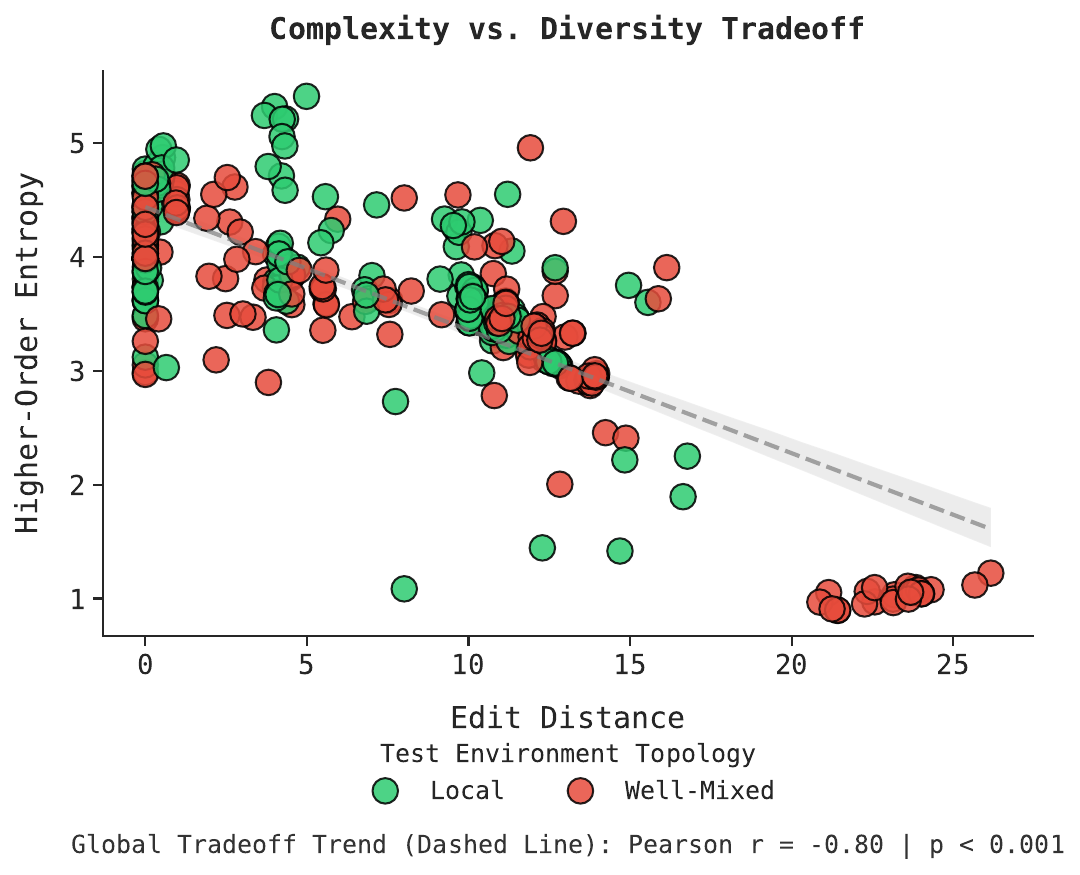}
    \caption{The structural tradeoff between complexity and diversity at the final epoch (no stealing enabled). Current test environments drive distinct evolutionary attractors: well-mixed topologies maximize genetic diversity (edit distance), while local spatial topologies maximize architectural complexity (higher-order entropy) ($p < 0.001$, two-sided t-tests comparing final metrics between well-mixed and local populations).}
    \label{fig:tradeoff_scatter}
\end{figure}

\begin{figure}[ht]
    \centering
    \includegraphics[width=0.5\linewidth]{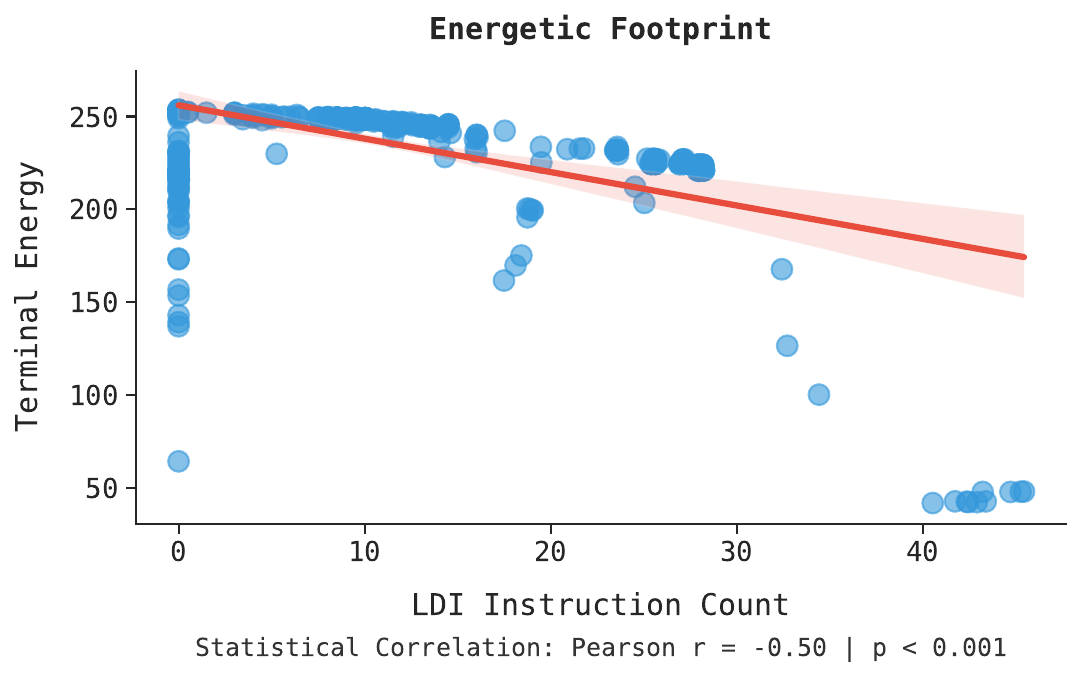}
    \caption{The energetic footprint of rapid replication. A significant inverse relationship exists between a population's replication frequency (measured via LDI instruction counts) and its terminal steady-state energy ($p < 0.001$, two-sided t-test evaluating the significance of the regression slope).}
    \label{fig:energy_footprint}
\end{figure}

\subsection{Additional Results Analyzing the Replication + Energy Dilemma}

When reintroducing the \texttt{STEAL} operation to form the unified Autopoietic Game, the structural constraints observed in pure replicators translate directly into robust defenses against parasitism. Stripping away the high-variance noise of continuous background mutation reveals a strict metabolic starvation limit: excessive parasitic theft triggers an endogenous drag that systematically starves defectors before they can complete their replication loops (Figure~\ref{fig:rq2_petridish_steals}). Consequently, rather than succumbing to the tragedy of the commons predicted by classical evolutionary game theory, autopoietic cooperators aggressively expand to dominate the carrying capacity. They fiercely resist exploitation, permanently trapping malicious actors at a distinct minority frequency (Figure~\ref{fig:rq2_timeline}). While cooperation remains dynamically stable across both topologies in benign environments, local spatial interactions act as a powerful catalyst for genetic cohesion, significantly lowering edit distance while simultaneously driving higher architectural complexity (Figure~\ref{fig:rq2_catalysts}). Most crucially, when the environment imposes severe, systematic energy inequalities, well-mixed populations entirely collapse. In these harsh conditions, spatial topologies are no longer just catalysts for complexity, but strict prerequisites for survival, providing the necessary scaffolding for populations to overcome environmental bias and sustain autopoietic stability (Figure~\ref{fig:rq2_asymmetric_grid}).

\begin{figure}[ht]
    \centering
    \includegraphics[width=0.7\linewidth]{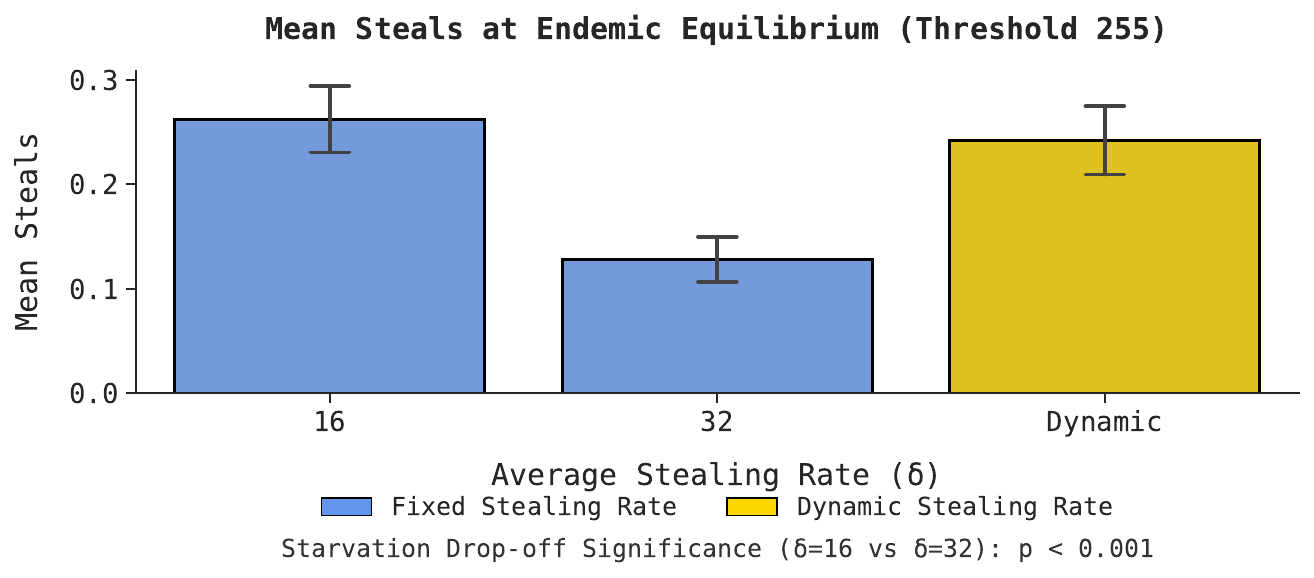}
    \caption{Mean steals at endemic equilibrium in the mutation-free petri dish assay ($\mu=0$). Removing background mutation reveals a metabolic drop-off: defector success decreases significantly from $\delta=16$ to $\delta=32$ ($p<0.001$, Cohen's $d=4.17$; two-sided t-test). The $\delta=32$ lineages maintain a marginal, non-zero mean of approximately 0.1, while dynamic stealing rates stabilize at levels comparable to $\delta=16$.}
    \label{fig:rq2_petridish_steals}
\end{figure}

\begin{figure}[ht]
    \centering
    \includegraphics[width=0.5\linewidth]{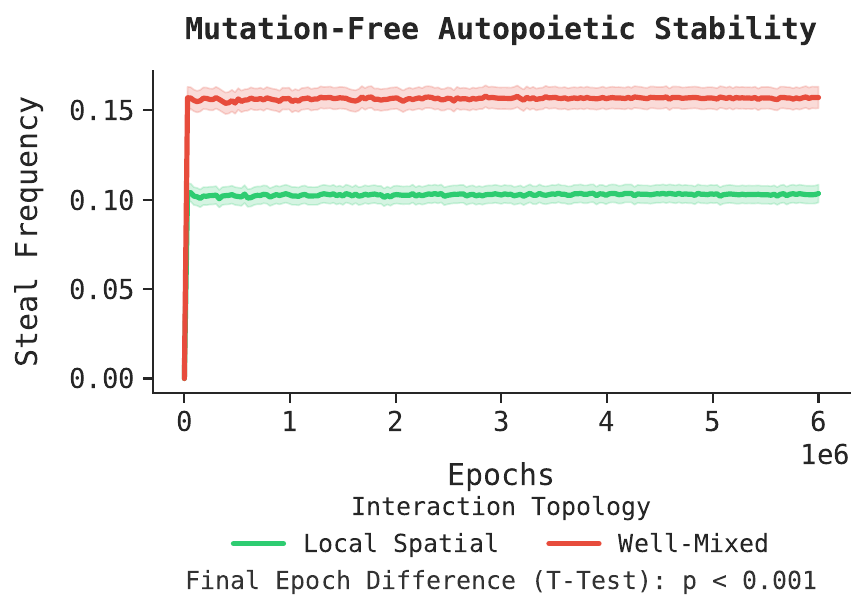}
    \caption{Mutation-Free Invasion Dynamics. Line plots tracking the invasion of 1\% defectors into a naive soup. In both well-mixed and local spatial topologies, autopoietic cooperators aggressively expand to dominate the carrying capacity, permanently trapping defectors at a minority frequency ($<20\%$).}
    \label{fig:rq2_timeline}
\end{figure}

\begin{figure}[ht]
    \centering
    \includegraphics[width=0.9\linewidth]{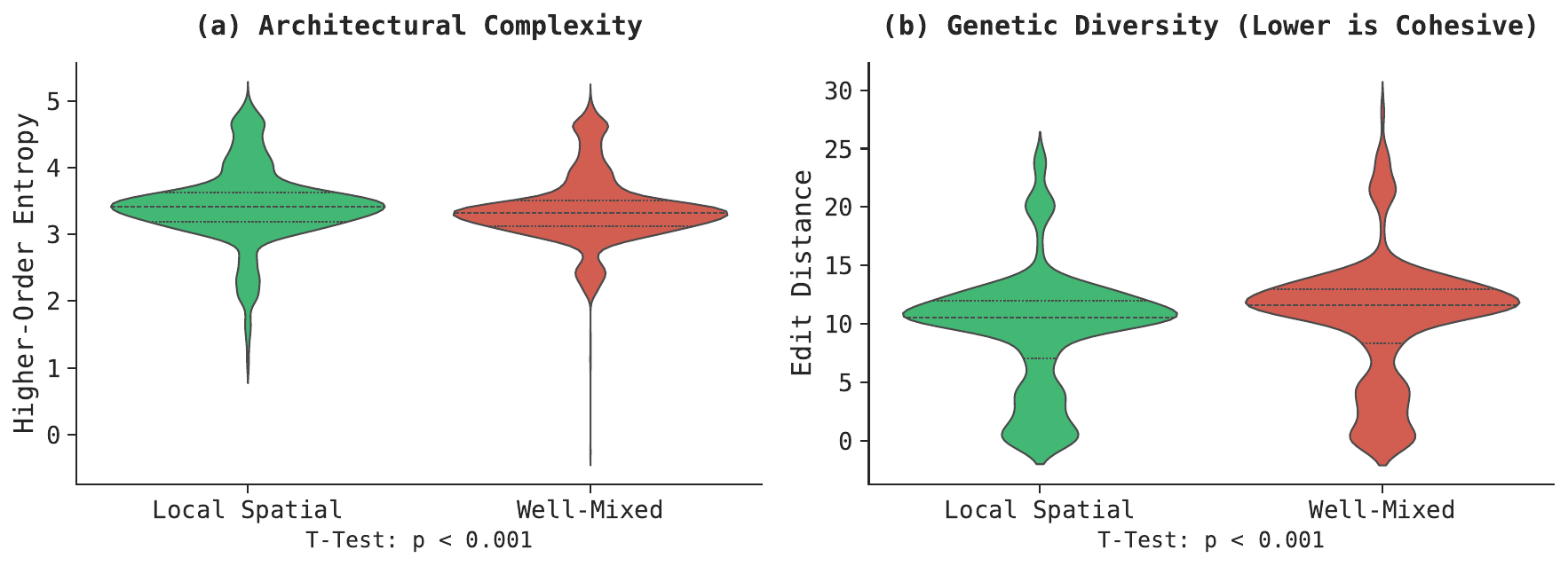}
    \caption{Topological Catalysts for Cohesion with stealing enabled. At the final epoch of the petri dish assay, local spatial interactions produce significantly higher architectural complexity (higher-order entropy; $p<0.001$, Cohen's $d=2.81$) and tighter genetic cohesion (lower edit distance; $p<0.001$, Cohen's $d=7.32$) than well-mixed interactions (two-sided t-tests).}
    \label{fig:rq2_catalysts}
\end{figure}

\subsection{Testing whether social operations are energy-coupled}
\label{appendix:SelfInterest}

In our experiments in the main text, we considered ``cooperators'' to be programs that did not steal energy while replicating, and ``defectors'' to be replicators that stole energy. We found that tapes evolved to suppress the steal action even in well-mixed settings, suggesting that non-stealing behavior can be favored when the outcomes of social interactions are directly tied to a tape's capability to replicate. However, one reasonable question to ask in this setup is: \textit{``Is stealing suppressed because social operations affect energy-mediated selection, or is low stealing simply a feature of the environment independent of selection pressure?''}

Since the base setup only includes steals, it is difficult to experimentally decouple low steals shaped by energy-mediated selection from those that are simply inevitable in the environment. As such, we designed a new setup introducing the ``share'' action, a mechanism for perfectly transferring energy from one agent to another. We place a cap of 128 units of energy, above which tapes are no longer eligible for background energy. The only way an agent can increase its energy if it is above the threshold is by stealing from other tapes, or by sharing energy with another tape to drop its own reserves below the threshold, receiving the background energy injection, and then having the partner return the shared energy.

We compare two modes: \textbf{energy-based selection}, where the tape with the highest energy always gets to run first, and \textbf{random selection}, where we randomly decide which tape gets to run first. By ``always run first'', we mean $p_i$ is fixed at 1 throughout the interaction, which is equivalent to deciding which tape is concatenated first in the experiments from \cite{arcas2024computational}. Energy-based selection is the mode used for all our prior experimental results, although in our previous experiments we dynamically calculated $p_i$ after every operation. Given the results in Figure~\ref{fig:rq2_asymmetric_grid}, we conduct these experiments using local interactions to ensure tapes can evolve behaviors capable of overcoming potential asymmetries in energy distribution.

\begin{figure}[htpb]
    \centering
    \includegraphics[width=\linewidth]{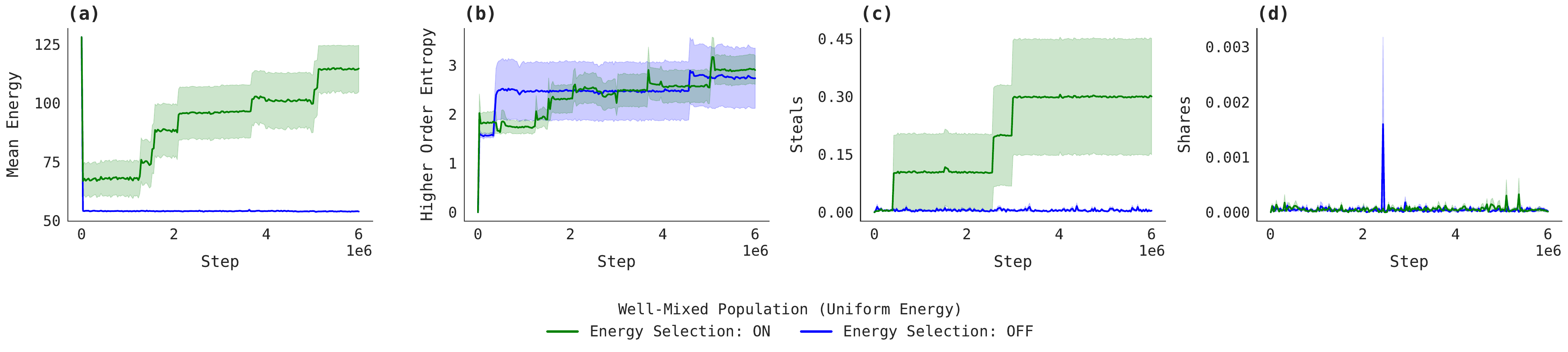}
    \caption{Population dynamics under CPU-based energy accounting with dynamically determined steal/share magnitudes ($\delta$). Subfigures show (from left to right): a) mean energy, b) higher order entropy, c) steals, and d) shares. Energy-based selection results in significantly higher utilization of social operations and a corresponding increase in mean energy compared to random selection.}
    \label{fig:cpu_dyna}
\end{figure}

\begin{figure}[htpb]
    \centering
    \includegraphics[width=\linewidth]{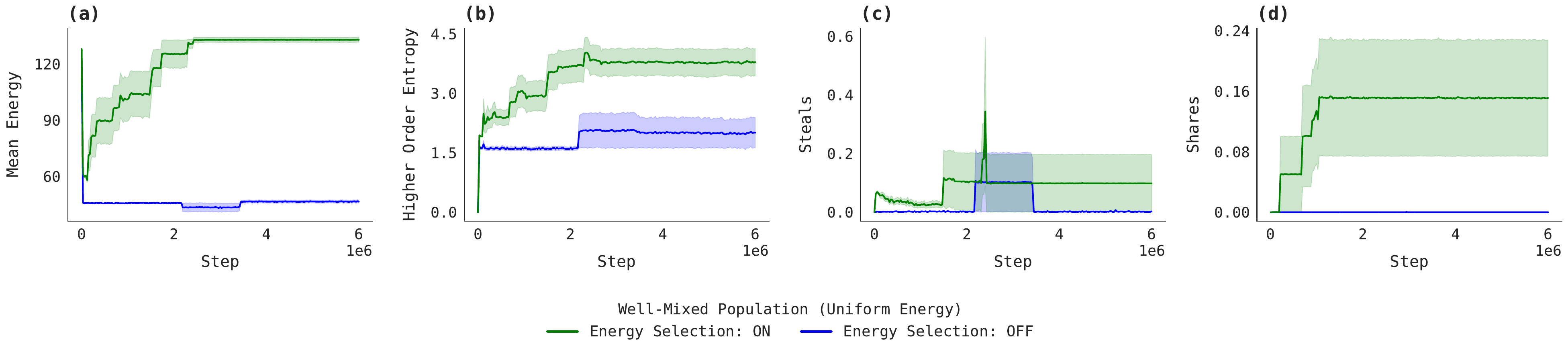}
    \caption{Population dynamics under Tape-based energy accounting with dynamically determined steal/share magnitudes ($\delta$). Subfigures show (from left to right): a) mean energy, b) higher order entropy, c) steals, and d) shares. Similar to CPU mode, tapes under energy-based selection use social operations more frequently in settings where those operations elevate system energy.}
    \label{fig:tape_dyna}
\end{figure}

If social operations were unrelated to energy-mediated selection, we would expect their overall frequency to remain low in both the energy-based and random selection cases. However, as shown for the CPU-based energy accounting mode (Figure~\ref{fig:cpu_dyna}) and Tape-based energy accounting mode (Figure~\ref{fig:tape_dyna}), this is not the case. Tapes evaluated under energy-based selection use significantly more stealing or sharing and exhibit a corresponding increase in mean energy as a result. In contrast, even when random selection utilizes an operation like a steal, there is no corresponding increase in energy or complexity, and the frequency of stealing is not statistically significantly different from not stealing at all. 

These dynamic results are from a setting where $\epsilon = 24$ and agents dynamically decide $\delta$. If we fix $\delta = 16$, as shown in Figures~\ref{fig:cpu_16} and~\ref{fig:tape_16}, we observe a similar trend: all social operations are effectively suppressed in random selection (or fail to meaningfully contribute to an increase in average system energy). Meanwhile, energy-based selection creates conditions where these operations are retained when they elevate the total energy of the system. 

\begin{figure}[htpb]
    \centering
    \includegraphics[width=\linewidth]{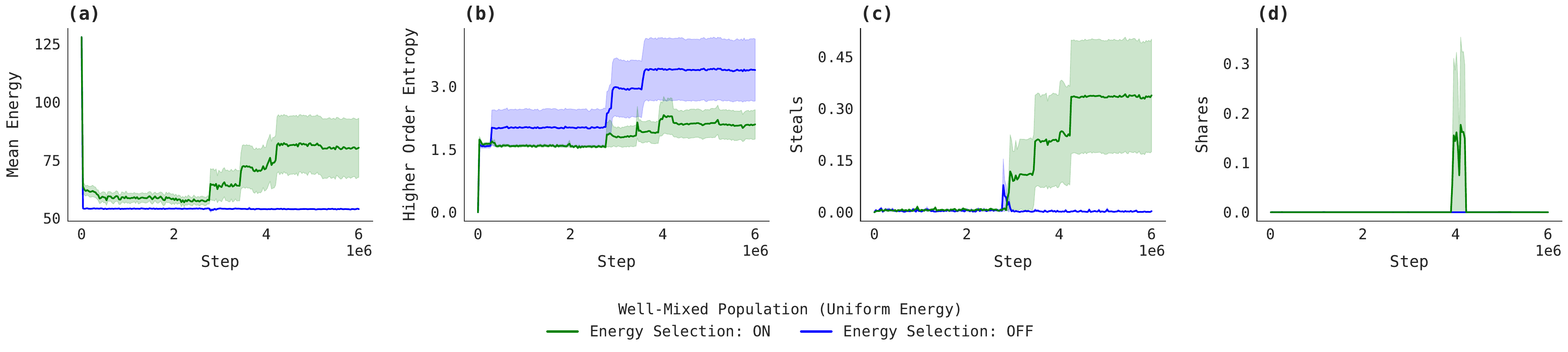}
    \caption{Population dynamics under CPU-based energy accounting with a fixed magnitude of $\delta = 16$. Subfigures show (from left to right): a) mean energy, b) higher order entropy, c) steals, and d) shares. Stealing emerges in a large portion of the population under energy-based selection, though it results in much lower higher-order entropy compared to random selection.}
    \label{fig:cpu_16}
\end{figure}

\begin{figure}[htpb]
    \centering
    \includegraphics[width=\linewidth]{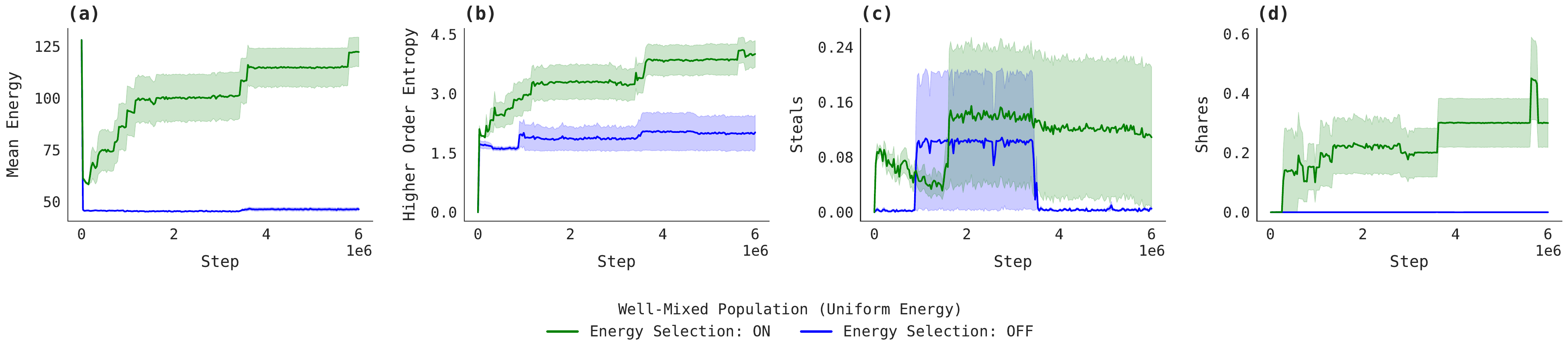}
    \caption{Population dynamics under Tape-based energy accounting with a fixed magnitude of $\delta = 16$. Subfigures show (from left to right): a) mean energy, b) higher order entropy, c) steals, and d) shares. Unlike the CPU mode, sharing dominates the population under energy-based selection, resulting in a significant boost to the higher-order entropy of the system compared to random selection.}
    \label{fig:tape_16}
\end{figure}

Interestingly, while stealing emerges across a large portion of the population in CPU mode under energy-based selection with a fixed $\delta$ (Figure~\ref{fig:cpu_16}), the resulting higher-order entropy is much lower than under random selection. Conversely, sharing dominates the population in Tape mode under energy-based selection with a fixed $\delta$ (Figure~\ref{fig:tape_16}), providing a significant boost to the higher-order entropy of the resulting soup compared to random selection. This contrast highlights the importance of how the agent/environment boundary is defined and suggests that this boundary may affect the downstream capabilities of programs to accomplish exogenous tasks.

\subsection{Additional Results Analyzing Exogenous Tasks and Agentic Utility}
\label{appendix:mathTasks}

In the main text, we introduced tasks to evaluate agentic utility. To support these tasks, special registers were introduced in the emulator: \texttt{m} for the math input/output and \texttt{c} for the task constant (offset). 

At the start of the simulation, each agent in the grid was assigned a task constant ($\text{math}_a$ or $\text{math}_b$) sampled uniformly at random from the set $\{-3, -2, -1, 1, 2, 3\}$. At the start of each interaction epoch, inputs $x$ and $y$ were sampled uniformly at random from the range $[0, 255]$. The register \texttt{m} was initialized with the input value ($x$ or $y$), and the register \texttt{c} was initialized with the assigned constant.

Agents were required to compute specific target values. The targets were defined as:
\begin{itemize}[leftmargin=*]
    \item \textbf{Solo Task A}: $x + \text{math}_a$.
    \item \textbf{Solo Task B}: $y + \text{math}_b$.
    \item \textbf{Joint Task}: $x + y$.
\end{itemize}
The reward schemes were controlled by a mode parameter:
\begin{itemize}[leftmargin=*]
    \item \textbf{Solo Mode}: Reward is given when an agent solves its own solo task.
    \item \textbf{Joint Mode}: Reward is given when agents solve the joint task ($x+y$).
    \item \textbf{Mixed Mode}: Solo and joint solutions instantiate a sequential social dilemma. If both agents solve their solo tasks, each receives $5\epsilon$. If both solve the joint task, each receives $6\epsilon$. If one agent solves the solo task while the other solves the joint task, the solo solver receives $7\epsilon$ while the joint solver receives $2\epsilon$.
\end{itemize}
Success was checked by inspecting the value of the \texttt{m} register after every operation a tape executed against the relevant target value. Rewards were added instantly to the corresponding energy budgets, bounded by the energy cap.

Here, we provide additional visual evidence of the dynamics in the math problem setting. In Figure~\ref{fig:math_communication}, we illustrate the cognitive bottleneck induced by removing shared information; solve rates drop noticeably as agents are forced to allocate computational budget to communication overhead. Further, Figure~\ref{fig:math_topology} demonstrates that spatial constraints act as active scaffolding to allow programs to coordinate on finding cooperative tasks in resource-scarce environments, confirming the role of topological assortment in overcoming severe environmental drag previously observed for simpler replicators.

\begin{figure}[htpb]
    \centering
    \includegraphics[width=0.9\linewidth]{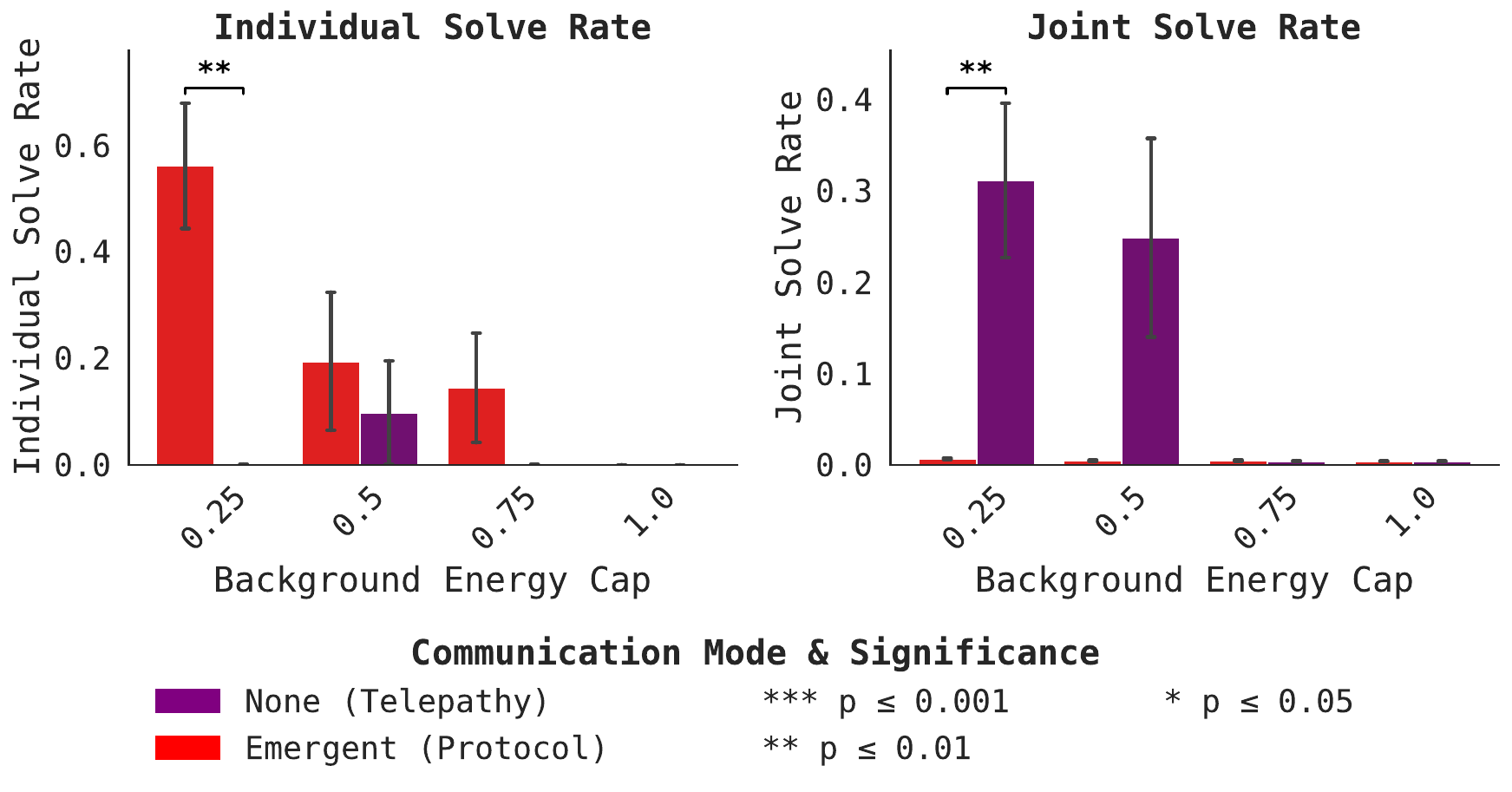}
    \caption{The cognitive bottleneck of emergent communication. When tapes in Mixed mode are forced to discover their own communication protocols to share their private variable $y$, joint solve rates crash while individual solve rates experience an increase. The feasibility of complex social behaviors is fundamentally bounded by the computational and cognitive limits of the underlying substrate.}
    \label{fig:math_communication}
\end{figure}

\begin{figure}[b]
    \vspace{-0.5em}
    \centering
    \includegraphics[width=\linewidth]{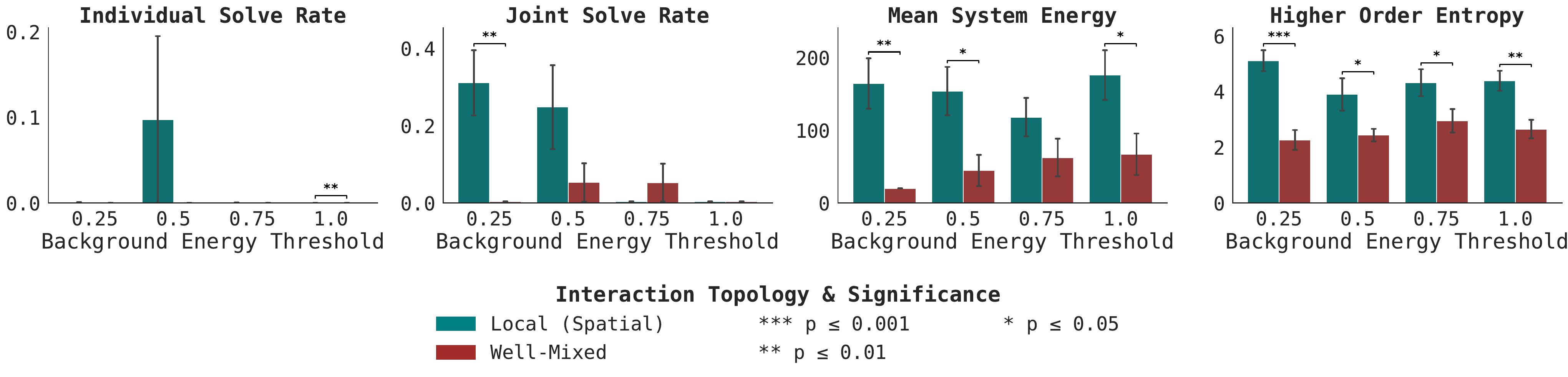}
    \caption{Spatial assortment as an architectural scaffold. In Mixed task mode, local spatial interactions produce significantly higher joint solve rates than well-mixed interactions ($p\leq0.01$, Cohen's $d=1.63$; two-sided t-test), along with greater system energy and higher-order entropy. Assortment supports the complexification needed for robust downstream problem-solving.}
    \label{fig:math_topology}
\end{figure}

\section{Sufficient Conditions for Autopoietic Cooperation under Endogenous Time and Uniform, Exogenous Energy}
\label{appendix:ProofWM}

The results below are sufficient-condition arguments under two simplifying assumptions, intended to motivate the empirical results in Section~\ref{sec:results}.

\subsection{\textbf{Formal Model Definition}}
We model a population of programs residing in a shared computational memory space. We define global environmental constants: baseline energy $\epsilon$, replication length $L$, and discrete baseline mutation probability $\mu$. We define strategy-specific evolutionary traits: absorption efficiency $\alpha \in (0, 1]$ and stealing rate $\delta > 0$.

While the population is strictly \textit{well-mixed} regarding the probability of pairing (encounters are purely frequency-dependent), the execution phase itself imposes a temporary, strictly coupled spatial constraint. 

Replicating requires a program to successfully execute $L$ operations, where each operation is a physical byte-write. We explicitly separate our timescales: micro-operations occur continuously in physical time units $t$, while population turnover and macro-scale interactions occur over discrete, overlapping generations called \textbf{epochs} (denoted by $\tau$). 

Execution and mutation occur in separate phases (though, as addressed in Corollary 1, relaxing this constraint can increase the penalty for slower executions). At the start of an epoch's execution phase, two randomly paired programs, $i$ and $j$, are concatenated to form a closed, cyclic memory tape. A program perceives the memory space as an infinite wrap-around sequence (e.g., $i \to j \to i \to j \dots$). A program can read and write into both its opponent's memory segment and its own. 

Each program receives an environmental energy baseline $\epsilon$, supplemented by residual energy retained from prior successful epochs, bounded by a strict maximum physical capacity $E_{max}$. Thus, the initial energetic state for the epoch is $E_i = \min(\epsilon + E_{i, residual}, E_{max})$ and $E_j = \min(\epsilon + E_{j, residual}, E_{max})$. We assume the baseline environment is viable for baseline replication, meaning $2\epsilon > 2L$.

To prevent infinite halting states, epochs are strictly bounded by a maximum physical time $T_{max}$. An epoch's execution phase concludes only when one of the following conditions is met: 
1) Both programs enter a \texttt{halt} state. 
2) The system succumbs to metabolic starvation ($E_{tot} \le 0$). 
3) The system reaches the absolute time boundary $T_{max}$.

A program enters a \texttt{halt} state either by successfully completing $L$ targeted write operations into its opponent's memory segment, or by being completely overwritten by its opponent. A halted program ceases to execute operations. If one program halts while the other remains active, the active program continues to execute operations until it too halts or the system starves. Once both programs halt or starve, the system's physical time is instantly sped up to the boundary $T_{max}$.

At the end of the epoch, the concatenated tape is strictly cleaved in half. If program $i$ successfully targeted and completed $L$ writes into $j$'s segment, the split yields two intact copies of $i$. The original half retains its remaining $E_i$ as residual energy, and the newly written half inherits the remaining $E_j$ as its residual energy. If starvation occurs before either program completes $L$ targeted opponent-writes, the tape splits into a recombination of both programs, yielding 0 viable copies of either of the originals.

Within the execution phase, the absolute execution speed of the paired system is proportional to the total available energy, $E_{tot} = E_i + E_j$. While execution progress is measured in discrete operation steps, the \textit{physical time} $dt$ required to compute each step is dynamic. We must track continuous physical time because the ultimate probability of surviving the post-epoch mutation phase depends strictly on the total elapsed physical time $T$, not the discrete number of operations. The physical time $dt$ required to complete a single micro-operation is inversely proportional to the energy pool:
$$dt = \frac{1}{E_i + E_j}$$
The sequence of operations is modeled as a probabilistic Markov chain. At any step where $E_{tot} > 0$, the probability that an active program $i$ executes the next operation relative to active program $j$ is its proportion of the energy pool: $p_i = \frac{E_i}{E_i + E_j}$. If $E_{tot} \le 0$, the system starves ($dt \to \infty$) and operations cease. 

Programs utilize one of two strategies:
\begin{itemize}[leftmargin=*]
    \item \textbf{Cooperator ($C$):} Executes a normal write operation targeted at the opponent ($\Delta E_i = -1, \Delta E_j = 0$). The system loses 1 unit of total energy.
    \item \textbf{Defector ($D(\alpha, \delta)$):} Executes write operations interspersed with or coupled to parasitic ``steal'' mechanics. $D(\alpha, \delta)$ targets and absorbs energy from the opponent at an expected average rate of $\delta > 0$ per operation across its $L$ writes, with an average absorption efficiency $\alpha \in (0, 1]$. Therefore, the expected change in energy per operation executed by $D(\alpha, \delta)$ is $\Delta E_i = \alpha\delta - 1$ and $\Delta E_j = -\delta$, subject to the boundary $E_i \le E_{max}$. On average, each operation by $D(\alpha, \delta)$ drains the system of $1 + (1-\alpha)\delta$ units of total energy.
\end{itemize}

After the tape splits, both halves enter a discrete \textbf{mutation phase}. The probability of an intact program surviving this background mutation evaluation is $P = \exp(-\mu T)$, where $\mu \in (0, 1)$ acts as a discrete baseline probability, and $T$ is the total physical time required to complete the execution phase, capped at $T_{max}$.

\subsection{\textbf{Intra-Epoch Dynamics: Kinetic Slowdown and Metabolic Starvation}}

To formally evaluate the execution time trajectories of competing programs, we first define the energetic boundaries and total system cost of defection. Assume the baseline environmental parameters $\epsilon$ and $L$ satisfy the viability condition $2\epsilon > 2L$. Let $E_{tot} = E_i + E_j$ represent the total available energy in a paired system.

During any single operation executed by a Defector $D(\alpha, \delta)$, the expected change in total system energy is strictly negative:
$$\Delta E_{tot} = (\alpha\delta - 1) - \delta = \delta(\alpha - 1) - 1$$
Because the absorption efficiency is bounded ($\alpha \le 1$), the net energy drain per operation is $\Delta E_{tot} \le -1$.

To establish a comparative baseline, we define $K(\epsilon, L) > 0$ as the unique critical inefficiency threshold, implicitly defined by the following equality assuming baseline initial energy ($E_{residual} = 0$):
$$\sum_{k=0}^{L-1} \frac{1}{2\epsilon - k(1 + K)} = \sum_{k=0}^{2L-1} \frac{1}{2\epsilon - k}$$
This threshold $K(\epsilon, L)$ represents the exact break-even point where the energetic penalty of defection perfectly cancels out the physical time saved by executing fewer operations ($L$ writes for a Defector versus $2L$ writes for mutual Cooperators).

\begin{lemma}[Endogenous Metabolic Drag]
\label{lem:metabolic-drag}
Given the critical inefficiency threshold $K(\epsilon, L)$, if a Defector's inefficiency penalty strictly exceeds this break-even point such that $(1-\alpha)\delta > K(\epsilon, L)$, the rate of energy destruction guarantees that the absolute best-case execution time for a Defector is strictly slower than the execution time of mutual Cooperators ($T_{D(\alpha, \delta)|C} > T_{C|C}$).
\end{lemma}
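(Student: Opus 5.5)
The plan is to compare two sums of reciprocal energies: the physical time of the mutual-Cooperator epoch and a lower bound on the physical time of the Defector-versus-Cooperator epoch. Both are written as $\sum dt_k$ with $dt_k = 1/E_{tot}^{(k)}$. The proof then rests on a monotonicity argument in the per-operation drain rate, anchored at the break-even threshold $K(\epsilon,L)$.

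\textbf{Step 1 (the mutual-Cooperator time).} With $E_{residual}=0$, both programs start with $\epsilon$, so $E_{tot}^{(0)} = 2\epsilon$. Every operation, by either program, removes exactly one unit of energy. The epoch ends only after both programs complete $L$ writes, which takes $2L$ operations in total. Since the interleaving order does not affect $E_{tot}$, the time is deterministic:
\[
T_{C|C} = \sum_{k=0}^{2L-1} \frac{1}{2\epsilon - k}.
\]
This is finite because $2\epsilon > 2L$.

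\textbf{Step 2 (the best-case Defector time).} Here I pick the most favourable scenario for the Defector: it finishes its $L$ writes, overwriting the Cooperator, before the Cooperator executes any operation. Any Cooperator operation would add both an extra operation and an extra $dt$ while also lowering $E_{tot}$, which can only lengthen $T$; so the Defector running alone is the minimum. Following the averaging convention stated in the Limitations section, each Defector operation drains the system by $1+(1-\alpha)\delta$. This gives
\[
T_{D(\alpha,\delta)|C} \ge \sum_{k=0}^{L-1} \frac{1}{2\epsilon - k\bigl(1+(1-\alpha)\delta\bigr)}.
\]
If the denominator reaches zero or below before $L$ operations are completed, the system starves, so $T=T_{max}$, or $\infty$ before capping. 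In that case the inequality is trivially strict, provided $T_{C|C}<T_{max}$. I would also argue that the $E_{max}$ cap on the Defector's own energy only increases the drain, which only helps the bound.

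\textbf{Step 3 (monotonicity and conclusion).} Define
\[
f(x) = \sum_{k=0}^{L-1} \frac{1}{2\epsilon - k(1+x)}
\]
on the region where all denominators are positive. The $k=0$ term is constant, and every term with $k\ge1$ is strictly increasing in $x$, so $f$ is strictly increasing as long as $L\ge 2$.

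At $x=0$, $f(0)$ is a truncation of the $2L$-term sum for $T_{C|C}$, so $f(0) < T_{C|C}$. As $x$ approaches the starvation boundary $2\epsilon/(L-1)-1$, $f$ blows up. By continuity and strict monotonicity there is a unique root of $f(x) = T_{C|C}$, which justifies the uniqueness claimed in the definition of $K(\epsilon,L)$ and gives $K>0$. Therefore $(1-\alpha)\delta > K$ implies $T_{D|C} \ge f((1-\alpha)\delta) > f(K) = T_{C|C}$.

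\textbf{Main obstacle.} The delicate part is Step 2: justifying that "Defector runs alone with the average drain" really is a lower bound on the stochastic execution time. Cooperator interleavings make the number of operations random, and the actual steal amounts fluctuate around $\delta$. The first approach I would try is a pathwise coupling. On every sample path, the Defector's $L$ operations occur at total energies no higher than in the solo trajectory, while the Cooperator's operations contribute extra positive $dt$ terms. Under the stated averaged-energy simplification, the fluctuation issue is replaced by the mean drain, so it can be set aside.
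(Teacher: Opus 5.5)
Your proposal is correct and follows the paper's proof. Like the paper, you compute $T_{C|C}=\sum_{k=0}^{2L-1}(2\epsilon-k)^{-1}$ exactly, lower-bound $T_{D(\alpha,\delta)|C}$ by the trajectory in which the Cooperator never executes, and compare the two through the definition of $K(\epsilon,L)$. Your Step~3 makes explicit what the paper leaves implicit: the lower-bound sum is strictly increasing in the drain $(1-\alpha)\delta$ (for $L\ge 2$), lies below $T_{C|C}$ at zero drain, and diverges at the starvation boundary, which justifies both the uniqueness and positivity of $K$ and the final strict inequality.
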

\begin{proof}
For a $(C, C)$ pair, both programs must complete $L$ operations to successfully replicate, after which they explicitly halt. Because every operation by either program reduces $E_{tot}$ by exactly 1 unit, the total system energy decreases from $2\epsilon$ to $2\epsilon - 2L$. Regardless of the execution order, the total physical time required for mutual halting is exactly:
$$T_{C|C} = \sum_{k=0}^{2L-1} \frac{1}{2\epsilon - k}$$

For a $(D(\alpha, \delta), C)$ pair, the execution sequence is probabilistic. To successfully reproduce, $D(\alpha, \delta)$ must complete $L$ targeted write operations, completely overwriting $C$ and forcing it to halt. Each operation by $D(\alpha, \delta)$ reduces $E_{tot}$ by an expected average of $1 + (1-\alpha)\delta$. 

Because the physical time per operation step, $dt = \frac{1}{E_{tot}}$, increases as $E_{tot}$ decreases, any operations performed by $C$ strictly reduce $E_{tot}$ and thereby increase the time delay for subsequent steps. To establish a highly conservative lower bound, we calculate the absolute fastest (best-case) trajectory for $D(\alpha, \delta)$, assuming $C$ is entirely starved of execution steps during $D(\alpha, \delta)$'s execution. The absolute minimum time required for $D(\alpha, \delta)$ to overwrite $C$ must be strictly greater than or equal to the time taken for $D(\alpha, \delta)$'s consecutive operations alone:
$$T_{D(\alpha, \delta)|C} \ge \sum_{k=0}^{L-1} \frac{1}{2\epsilon - k(1 + (1-\alpha)\delta)}$$
Provided $(1-\alpha)\delta > K(\epsilon, L)$, this metabolic drag ensures that even in its absolute best-case scenario, the Defector takes longer to halt the epoch than mutual Cooperators ($T_{D(\alpha, \delta)|C} > T_{C|C}$). 
\end{proof}

\paragraph{Approximation caveat.}
This result replaces the actual, changing amount of energy in the system at each step with its expected value, using Jensen's inequality. This is an approximation: the realized total energy can deviate from this expectation because every operation changes the agents' energy, and the exact trajectory depends on which agent executes and how much stealing occurs. The experiments in Section~\ref{sec:results} evaluate the full dynamics, tracking the realized energy after every operation without this substitution.

\begin{lemma}[Metabolic Starvation Limit]
\label{lem:metabolic-starvation}
If the average stealing rate $\delta$ is sufficiently large, or the environmental energy $\epsilon$ is sufficiently small such that
$$2\epsilon < L(1 + (1-\alpha)\delta),$$
a Defector $D(\alpha, \delta)$ will exhaust the total energy pool before completing $L$ write operations, yielding an instant jump to $T_{max}$ and 0 viable copies post-split.
\end{lemma}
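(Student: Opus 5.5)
We work in the same expected-value (mean-field) regime as Lemma~\ref{lem:metabolic-drag}. Each Defector operation changes the pair's total energy by its expected value $-(1+(1-\alpha)\delta)$, and we track the deterministic trajectory of $E_{tot}$. The plan is to show that even on the Defector's most favorable trajectory the pool reaches zero before the $L$-th targeted write. The starvation clause of the epoch-termination rules and the splitting rule then give the jump to $T_{max}$ and zero viable copies.

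\textbf{Step 1 (best case for the Defector).} As in the proof of Lemma~\ref{lem:metabolic-drag}, we argue that any operation by the partner, whether a Cooperator or another Defector, removes at least one unit from $E_{tot}$ and never adds energy. Interleaved partner operations can therefore only bring starvation earlier. It suffices to treat the trajectory in which $D(\alpha,\delta)$ executes all its operations consecutively from the initial pool. Take the baseline initial pool $E_{tot}(0)=2\epsilon$, with no residual energy, matching the convention used to define $K(\epsilon,L)$. The cap $E_i\le E_{max}$ can only make the loss larger, since truncated absorption destroys more energy, so it does not help the Defector either.

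\textbf{Step 2 (energy after $k$ Defector operations).} By induction, after $k$ consecutive Defector operations
\[
E_{tot}(k) \le 2\epsilon - k\bigl(1+(1-\alpha)\delta\bigr).
\]
Under the hypothesis $2\epsilon < L(1+(1-\alpha)\delta)$, set $k^\ast=\lceil 2\epsilon/(1+(1-\alpha)\delta)\rceil$. Then $k^\ast \le L$, and $E_{tot}(k^\ast)\le 0$ before the $L$-th write is completed. The ceiling needs care: if $k^\ast=L$, we must check that starvation occurs at or before the $L$-th operation's effect. The intended reading is that the write cannot complete once the pool is exhausted, since $dt=1/E_{tot}\to\infty$. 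We would state this convention explicitly. It is also natural to note the same conclusion per-agent: $D$'s own budget needs $E_D>0$ to keep acting, and the stolen inflow is bounded by what the partner holds.

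\textbf{Step 3 (consequences).} Once $E_{tot}\le 0$, termination condition (2) fires, and by the model's rule physical time jumps to $T_{max}$. Since $D$ has not completed $L$ targeted writes, the cleaving rule yields a recombinant tape and zero viable copies of either original. This also gives mutation survival $\exp(-\mu T_{max})$, which is minimal.

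The main obstacle is Step 1 together with the expectation substitution. Stealing is stochastic, and a lucky realization with small actual drains could let $D$ finish. The lemma therefore only holds in the same Jensen/mean-field sense flagged in the approximation caveat after Lemma~\ref{lem:metabolic-drag}. I would first try to make the statement exact by interpreting $\delta$ and $\alpha$ as realized per-operation averages over the $L$ writes, so that the cumulative drain after $L$ operations is deterministic. I would then fall back on stating the result for the expected trajectory, mirroring the caveat already in the paper.
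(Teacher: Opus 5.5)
Your argument is the paper's own: reduce to the best case in which the partner never executes, track the expected pool $2\epsilon - k\bigl(1+(1-\alpha)\delta\bigr)$, note that it is exhausted by the $L$-th Defector operation, and read off starvation, the jump to $T_{max}$, and the garbled split under the same mean-field caveat. Your explicit handling of the case $\lceil 2\epsilon/(1+(1-\alpha)\delta)\rceil = L$ is more careful than the paper, which claims a denominator of the discrete sum vanishes for some $k<L$; the strict inequality alone does not guarantee that. Drop the per-agent aside, though: if a steal can take no more than the partner holds, the energy destroyed falls short of the full $(1-\alpha)\delta$ per operation and the conclusion fails (with $\epsilon=24$, $L=11$, $\alpha=0.8$, $\delta=64$ the hypothesis holds, yet a Defector that moves first empties its partner with one steal and finishes its $11$ writes with about $32$ units to spare), so the lemma holds only in the linear total-energy model your Steps 1--3 use.
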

\begin{proof}
During a $(D(\alpha, \delta), C)$ interaction, the expected energy remaining after $D(\alpha, \delta)$ completes $k$ operations (assuming the best-case bound where $C$ executes none) is $2\epsilon - k(1 + (1-\alpha)\delta)$. If $2\epsilon < L(1 + (1-\alpha)\delta)$, the denominator in the sum for $T_{D(\alpha, \delta)|C}$ reaches zero for some $k<L$. The system starves ($dt \to \infty$), operations cease, and time speeds up to $T_{max}$. Because $D(\alpha, \delta)$ failed to complete $L$ targeted writes, the tape splits into garbled, incomplete code segments. Mutual defection $(D(\alpha, \delta), D(\alpha, \delta))$ under the same strict inequality similarly starves before either side completes $L$ writes, resulting in 0 intact copies.

At the boundary $2\epsilon = L(1 + (1-\alpha)\delta)$, starvation occurs at the $k=L$ threshold, so the statement ``before completing $L$ writes'' is no longer guaranteed without additional tie-breaking assumptions.
\end{proof}

\paragraph{Approximation caveat.}
This result replaces the actual, changing amount of energy in the system at each step with its expected value, using Jensen's inequality. This is an approximation: the realized total energy can deviate from this expectation because every operation changes the agents' energy, and the exact trajectory depends on which agent executes and how much stealing occurs. The experiments in Section~\ref{sec:results} evaluate the full dynamics, tracking the realized energy after every operation without this substitution.

\begin{corollary}[Continuous Mutation Penalty]
While the model designates mutation as a discrete post-epoch phase for analytical clarity, applying mutation continuously during the execution phase would penalize slower executions more strongly. Because $T_{D(\alpha, \delta)|C} > T_{C|C}$ under the metabolic drag condition, a continuous mutational hazard $\exp(-\mu \int dt)$ exposes the Defector to a higher probability of mid-execution fatal errors, aborting the overwrite prematurely. Thus, the discrete mutation phase serves as a conservative lower-bound penalty within this regime.
\end{corollary}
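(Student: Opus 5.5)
The plan is to make the Corollary precise by comparing two mutation models and then to use the Endogenous Metabolic Drag lemma (Lemma~\ref{lem:metabolic-drag}) for the ordering of execution times. The discrete model charges an intact program a survival factor $P = \exp(-\mu T)$ once, after the epoch. In the continuous model, fatal mutations arrive as a Poisson process with rate $\mu$ in physical time. A program then survives only if no mutation lands in $[0,T]$, and it completes its overwrite only if no mutation lands before its $L$-th targeted write at time $\tau_L \le T$.

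The first step is to write the continuous hazard as $\exp(-\mu \int_0^{t} dt') = \exp(-\mu t)$, where $t$ is accumulated physical time. Since $dt = 1/E_{tot} > 0$ along the trajectory, this is a decreasing function of elapsed time. Any event that must happen by time $t$ without an intervening fatal error therefore has probability that is monotone decreasing in $t$.

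The second step applies this to the two pairings. Under the condition $(1-\alpha)\delta > K(\epsilon,L)$, Lemma~\ref{lem:metabolic-drag} gives $T_{D(\alpha,\delta)|C} > T_{C|C}$ even on the Defector's best-case trajectory. Monotonicity then gives
$$\exp\bigl(-\mu T_{D(\alpha,\delta)|C}\bigr) < \exp\bigl(-\mu T_{C|C}\bigr).$$
So the Defector's probability of finishing its $L$ writes error-free is strictly below that of a mutual Cooperator. Some of the failures are mid-execution aborts: a mutation at time $s < \tau_L$ corrupts the copy loop, so the overwrite never completes and no intact copy is produced. This is strictly worse than the discrete model, where $L$ writes are guaranteed unless Lemma~\ref{lem:metabolic-starvation} applies.

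The final step is the conservative lower-bound claim. Let $A$ be the event that the Defector's copy is produced and survives the continuous hazard. $A$ requires no mutation on $[0,\tau_L]$, which is exactly the discrete model's requirement, plus the post-write interval up to $T$. Hence $\Pr(A) \le \exp(-\mu T)$. The discrete phase therefore overestimates the Defector's survival, and the fitness gap $\exp(-\mu T_{C|C}) - \exp(-\mu T_{D|C})$ computed in the discrete model is a lower bound on the gap under continuous mutation.

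The main obstacle is making ``mid-execution error aborts the overwrite'' rigorous. Some mutations may be neutral, and the stopping time $\tau_L$ is random and coupled to the energy trajectory. I would first assume every mutation of the executing segment is fatal, which is the worst case for both strategies, and condition on the realized trajectory so that $\tau_L$ is fixed. I would then note that the Jensen-type substitution in the lemmas is only used for the time ordering, so the Corollary inherits the same approximation caveat.
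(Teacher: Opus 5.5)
Your first two steps match the paper's reasoning, which consists only of Lemma~\ref{lem:metabolic-drag}'s ordering $T_{D(\alpha,\delta)|C}>T_{C|C}$ combined with the fact that $\exp(-\mu\int dt)$ decreases with elapsed time. The gap is in the quantitative claim you add at the end.

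Write $a=e^{-\mu T_{C|C}}$ and $b=e^{-\mu T_{D(\alpha,\delta)|C}}$. From $\Pr(A)\le b$ you infer that the discrete gap $a-b$ lower-bounds the continuous gap. But the identical argument bounds the cooperator's continuous survival by $a$, and two upper bounds on survival probabilities give no lower bound on their difference. Worse, your own model (Poisson hits at rate $\mu$, every hit fatal, trajectory conditioned on) gives $\Pr(A)=e^{-\mu\tau_L}e^{-\mu(T-\tau_L)}=e^{-\mu T}$ exactly, which is the discrete factor. So continuous mutation adds no extra penalty to anyone, and the ``penalize slower executions more strongly'' part of the statement is not captured. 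Separately, the $L$ writes are not guaranteed in the discrete model outside Lemma~\ref{lem:metabolic-starvation}: $C$ is racing to overwrite $D$, and Lemma~\ref{lem:metabolic-drag} only bounds the best case in which $C$ is starved.

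What is missing is a modeling assumption under which continuous mutation removes strictly more than the factor $e^{-\mu T}$ the discrete phase already charges. Merely moving the same hazard into the execution window, as your Poisson model does, changes nothing. You need, say, survival $e^{-\mu T}g(T)$ with $g\le 1$ nonincreasing; for example, in-execution exposure charged on top of the post-epoch phase gives $g(T)=e^{-\mu T}$.

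Even then the conclusion should be stated in relative terms. Since $T_{D(\alpha,\delta)|C}>T_{C|C}$, the continuous fitness ratio $b\,g(T_{D(\alpha,\delta)|C})/\bigl(a\,g(T_{C|C})\bigr)$ is at most the discrete ratio $b/a$. That is the precise sense in which slower executions are penalized more strongly. The absolute gap need not grow: with $g(T)=e^{-\mu T}$ the survivals are $a^2$ and $b^2$, and $a^2-b^2=(a-b)(a+b)<a-b$ whenever $a+b<1$, even though $b^2/a^2<b/a$.
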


\subsubsection{\textbf{Continuous Approximation of the Lower Bound Threshold $K(\epsilon, L)$}}
To establish a computable and strict boundary condition, we evaluate the extreme scenario where $D(\alpha, \delta)$'s fastest possible continuous execution (exactly $L$ operations) remains slower than $C$'s continuous execution (exactly $2L$ operations). We approximate the discrete harmonic sums of this boundary using definite continuous integrals. Note that this continuous approximation holds best for $L \gg 1$, as harmonic sums and logarithmic integrals can diverge slightly for small $L$:
$$\int_{0}^{L} \frac{1}{2\epsilon - x(1 + K)} \, dx \approx \int_{0}^{2L} \frac{1}{2\epsilon - x} \, dx$$
Evaluating and equating both integrals yields a transcendental equation, numerically solvable for any $\epsilon$ and $L$, which bounds the minimal inefficiency required to induce a metabolic slowdown:
$$\ln\left( \frac{2\epsilon}{2\epsilon - L(1+K)} \right) = (1+K) \ln\left(\frac{\epsilon}{\epsilon - L}\right)$$

\subsection{\textbf{Macro-Scale Survival and Population Dynamics}}

\begin{conjecture}[Asymmetric Recovery Criterion (Expected-State Approximation)]
\label{conj:asymmetric-recovery}
Consider a $(C,\, D(\alpha,\delta))$ interaction where $C$ enters with $E_C = E_{max}$ and $D$ enters with $E_D = \epsilon$, with $E_{max} > \epsilon$ and $\alpha\delta > 1$. Define the positive root of the quadratic $q(E) = E^2 - E(\epsilon + L) - L\epsilon\bigl[(1+\alpha)\delta - 1\bigr]$:
$$\tilde{E}^* \;:=\; \frac{(\epsilon + L) + \sqrt{(\epsilon+L)^2 + 4L\epsilon\bigl[(1+\alpha)\delta - 1\bigr]}}{2}.$$
If $E_{max} > \tilde{E}^*$, then the first-order frozen-priority approximation yields
\[
\mathbb{E}\!\left[\mathcal{D}\!\left(L,\,L\frac{\epsilon}{E_{max}}\right)\right] > 0.
\]
\end{conjecture}

\paragraph{Dynamic-priority caveat.}
This conjecture assumes that each agent's execution probability relative to its partner remains fixed at its initial value. This is the true relative probability when the tapes are first paired, but every operation changes their energy. A defector that steals repeatedly may therefore become faster than this approximation assumes, while an agent that performs many costly operations may become slower relative to its partner. We use the initial probability as a simple approximation to these dynamic execution priorities. The experiments in Section~\ref{sec:results} evaluate the full dynamics, updating relative execution probabilities after every operation.

\paragraph{Supporting calculation under frozen execution priorities.}
Define the energy differential $\mathcal{D}(n_C, n_D) := E_C - E_D$ after $n_C$ C-writes and $n_D$ D-steal operations. From the per-operation energy updates:
\begin{itemize}[leftmargin=*]
    \item Each C-write: $\Delta E_C = -1$, $\Delta E_D = 0$, so $\mathcal{D}$ decreases by $1$.
    \item Each D-steal: $\Delta E_C = -\delta$, $\Delta E_D = \alpha\delta - 1$, so $\mathcal{D}$ decreases by $(1+\alpha)\delta - 1 > 0$ (since $\alpha\delta > 1$).
\end{itemize}
Therefore:
$$\mathcal{D}(n_C,\, n_D) \;=\; (E_{max} - \epsilon) \;-\; n_C \;-\; n_D\bigl[(1+\alpha)\delta - 1\bigr].$$

\textit{Expected D-operation count.} At the start of the interaction, $C$'s execution priority is $p_C^{(0)} = E_{max}/(E_{max}+\epsilon)$ and $D$'s is $p_D^{(0)} = \epsilon/(E_{max}+\epsilon)$. Treating these initial priorities as fixed, the expected number of D-operations by the time $C$ completes $L$ writes is:
$$\mathbb{E}[n_D \mid n_C = L] \;\approx\; L\cdot\frac{p_D^{(0)}}{p_C^{(0)}} \;=\; L\cdot\frac{\epsilon}{E_{max}}.$$
Because $\alpha\delta > 1$, D gains net energy per steal, so its priority can grow over time; the initial ratio $\epsilon/E_{max}$ is therefore a lower bound on the realized ratio. Since $\mathcal{D}$ subtracts a positive multiple of $n_D$, substituting this lower bound yields an upper estimate on $\mathbb{E}[\mathcal{D}]$, so positivity of that estimate is interpreted here as an expected-state indicator rather than a formal sufficiency guarantee.

\textit{Expected differential at replication completion.}
$$\mathbb{E}\!\left[\mathcal{D}\!\left(L,\,L\frac{\epsilon}{E_{max}}\right)\right] \;=\; (E_{max}-\epsilon) - L - \frac{L\epsilon}{E_{max}}\bigl[(1+\alpha)\delta-1\bigr].$$
Multiplying through by $E_{max}$ and requiring positivity yields:
$$E_{max}(E_{max}-\epsilon) - LE_{max} - L\epsilon\bigl[(1+\alpha)\delta-1\bigr] > 0,$$
$$E_{max}^2 - E_{max}(\epsilon+L) - L\epsilon\bigl[(1+\alpha)\delta-1\bigr] > 0,$$
which holds if and only if $E_{max} > \tilde{E}^*$.

\begin{theorem}[Local Favorability under Metabolic Starvation]
\label{thm:local-favorability}
In a well-mixed, finite population with overlapping generations and uniformly distributed energy, non-stealing replication is locally favored when
\[
2\epsilon < L\bigl(1+(1-\alpha)\delta\bigr).
\]
Under this condition, stealing destroys enough shared energy to prevent faithful defector replication, giving defective lineages lower expected reproductive success than non-stealing lineages.
\end{theorem}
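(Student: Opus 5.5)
The plan is to reduce the theorem to Lemma~\ref{lem:metabolic-starvation} and then compare expected reproductive success by the partner each strategy meets in a well-mixed population. Let $x$ be the current frequency of defectors, drawn from a uniform-energy population, and take a rare mutant $D(\alpha,\delta)$ appearing in a non-stealing resident population. I will define a lineage's expected reproductive success per epoch as the expected number of intact copies it leaves after the split, times the mutation survival factor $\exp(-\mu T)$. Because pairing is uniformly random and energy is uniformly injected, a focal program meets $C$ with probability $1-x$ and $D$ with probability $x$. I will therefore compute four conditional quantities, $W(C\mid C)$, $W(C\mid D)$, $W(D\mid C)$ and $W(D\mid D)$, and show that $W(D\mid\cdot)$ is strictly smaller than $W(C\mid\cdot)$ in the relevant local sense.

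\textbf{Step 1 (defector payoffs vanish).} Under $2\epsilon < L(1+(1-\alpha)\delta)$, Lemma~\ref{lem:metabolic-starvation} applies directly, taking its expected-energy approximation as given. The expected pool reaches zero before $D$ completes $L$ writes, even in the best case where $C$ executes nothing. Any $C$ operations only lower $E_{tot}$ further, so they cannot rescue $D$. Hence $D$ yields no intact copy against either partner, so $W(D\mid C)=W(D\mid D)=0$, and the $(D,D)$ pair also yields zero copies.

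\textbf{Step 2 (cooperator payoffs are positive).} For a $(C,C)$ pair, the viability assumption $2\epsilon>2L$ guarantees that $E_{tot}$ stays positive through all $2L$ writes. Both programs therefore halt with intact copies in finite time $T_{C|C}$, computed as in the proof of Lemma~\ref{lem:metabolic-drag}. This gives $W(C\mid C)\ge \exp(-\mu T_{C|C})>0$; here $\mu<1$ and $T_{C|C}<\infty$, so the bound is strictly positive. For the invasion analysis I only need the neighbourhood of $x=0$, and there the $(C,D)$ encounters carry weight $O(x)$. So I only need $W(C\mid D)\ge 0$, which is trivial. Residual energy carried across epochs (overlapping generations) only raises cooperators' starting energy and never lowers it, so it can be dropped for a lower bound.

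\textbf{Step 3 (combine).} The expected success of each lineage is
\[
\bar W_C=(1-x)W(C\mid C)+xW(C\mid D)\ge(1-x)\exp(-\mu T_{C|C}),\qquad \bar W_D=0 .
\]
So $\bar W_C>\bar W_D$ for all $x<1$. In particular this holds near $x=0$, which is the local favorability claim: a rare defector lineage cannot increase while non-stealing lineages reproduce. Finite population size enters only through the pairing probabilities $(N x-1)/(N-1)$ and so on, which leave the sign unchanged.

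The main obstacle is Step 1. Lemma~\ref{lem:metabolic-starvation} holds only in expectation, whereas stochastic stealing could occasionally let a defector finish before the pool empties. I would first make explicit that the theorem inherits the lemma's expected-state approximation, as the Limitations section already states. If a sharper statement is wanted, I would argue that the realized drain per $D$-operation has mean $1+(1-\alpha)\delta$ and apply a Markov or Chernoff bound to the total drain over $L$ operations. This would show that the probability of completing $L$ writes is strictly less than the probability of cooperator success, which is enough for the strict inequality $\bar W_D<\bar W_C$.
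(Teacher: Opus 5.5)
Your overall route is the paper's: Lemma~\ref{lem:metabolic-starvation} rules out faithful defector replication, the viability condition $2\epsilon>2L$ keeps $(C,C)$ replication alive, and drift toward non-stealing follows. Comparing partner-conditional expected offspring $W(\cdot\mid\cdot)$ instead of the paper's slot-conversion matrix $P(\tau)$ is only a bookkeeping difference, and that part is fine. The gap is in Step~1, once it is combined with your own remark in Step~2 that overlapping generations let cooperators carry residual energy. Lemma~\ref{lem:metabolic-starvation} is derived for a starting pool of exactly $2\epsilon$, i.e.\ with no residual energy. In the rare-defector regime you focus on, resident cooperators succeed in almost every epoch and gain $\epsilon-L$ per epoch until they enter each epoch at $E_{\max}$. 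A rare $D$ paired with a resident therefore faces a pool of at least $\epsilon+E_C$. The best-case starvation argument then needs $\epsilon+E_C<L(1+(1-\alpha)\delta)$, which the hypothesis $2\epsilon<L(1+(1-\alpha)\delta)$ does not supply.

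Concretely, take the paper's values $\epsilon=24$, $L=11$, $\alpha=0.8$, $E_{\max}=255$. The hypothesis holds for $\delta>185/11\approx 16.8$. But a $D$ facing a resident at $E_{\max}$ is forced to starve only if $\delta>1340/11\approx 121.8$. At $\delta=32$, the lemma's own expected-drain trajectory completes all $L$ writes with about $198$ units still in the pool. So $W(D\mid C)=0$, and with it $\bar W_D=0$, does not follow.

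The monotonicity you invoke cuts both ways. A richer partner gives a lower bound on cooperator success, but it also enlarges the pool a defector drains, so residual energy cannot be dropped on the defector side. The paper's proof sidesteps this by invoking the lemma only for \emph{baseline} $(D,C)$ and $(D,D)$ encounters, and by concluding only that $p_{CD}$ is strongly suppressed, not zero; strictly read, it also covers only that case.

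To repair your argument you have three options. You can make that restriction explicit, so that every program enters each epoch with exactly $\epsilon$. You can replace $2\epsilon$ in the hypothesis by the largest starting pool a $D$-containing pair can have (e.g.\ $\epsilon+E_{\max}$ if defectors enter at baseline). Otherwise you need an execution-priority argument like Conjecture~\ref{conj:asymmetric-recovery}, which the paper itself leaves unproved. This, rather than stochastic fluctuation, is the real weak point of Step~1. Your suggested Markov/Chernoff sharpening also has no per-operation randomness model to act on, since $D(\alpha,\delta)$ is specified only through expected rates.
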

\begin{proof}
Let $\tau$ denote the discrete epoch counter and
$$x(\tau) = \frac{N_C(\tau)}{N_C(\tau)+N_D(\tau)}.$$
In this substrate, each pair always splits back into two 32-byte programs, so total population size is fixed. We therefore track \emph{strategy-frequency conversion} (which strategy occupies slots next epoch), not demographic birth/death.

As a lightweight bookkeeping approximation, write one-epoch frequency dynamics as
$$\begin{bmatrix}x(\tau+1)\\ 1-x(\tau+1)\end{bmatrix}
\approx
\underbrace{\begin{bmatrix}
p_{CC}(\tau) & p_{DC}(\tau)\\
p_{CD}(\tau) & p_{DD}(\tau)
\end{bmatrix}}_{P(\tau)}
\begin{bmatrix}x(\tau)\\ 1-x(\tau)\end{bmatrix},$$
where $p_{ab}(\tau)$ is the expected probability that a slot occupied by strategy $b$ at epoch $\tau$ is occupied by strategy $a$ at epoch $\tau+1$, after execution, split, and mutation. Local favorability of cooperation corresponds to positive drift in $x(\tau)$, equivalently $p_{DC}(\tau) > p_{CD}(\tau)$ in the relevant regime.

\textbf{Regime 1: Metabolic Starvation.} Assume
$$2\epsilon < L(1 + (1-\alpha)\delta).$$
By Lemma 2, baseline $(D,C)$ and $(D,D)$ interactions starve before $D$ completes $L$ writes. Critically, starvation still returns two 32-byte programs after the split, but produces no faithful defector overwrite event in that encounter. Thus $p_{CD}(\tau)$ is strongly suppressed in this regime, while successful $(C,C)$ interactions keep $p_{DC}(\tau)$ nonzero; this yields positive drift toward cooperation whenever $x(\tau)>0$. This establishes local favorability of non-stealing replication under the stated metabolic starvation condition.
\end{proof}

\begin{conjecture}[Frequency-Dependent Rebound]
\label{conj:frequency-rebound}
Suppose $\alpha\delta>1$ and $E_{max}>\tilde{E}^*$, where $\tilde{E}^*$ is defined in Conjecture~\ref{conj:asymmetric-recovery}. When defectors are trapped in a low-energy state by repeated destructive interactions, rare cooperative lineages can accumulate residual energy through successful $(C,C)$ interactions. This accumulated energy may allow cooperation to rebound from low frequency by increasing cooperators' probability of controlling subsequent replication events.
\end{conjecture}

\paragraph{Dynamic-priority caveat.}
This conjecture assumes that each agent's execution probability relative to its partner remains fixed at its initial value. This is the true relative probability when the tapes are first paired, but every operation changes their energy. A defector that steals repeatedly may therefore become faster than this approximation assumes, while an agent that performs many costly operations may become slower relative to its partner. We use the initial probability as a simple approximation to these dynamic execution priorities. The experiments in Section~\ref{sec:results} evaluate the full dynamics, updating relative execution probabilities after every operation.

\paragraph{Supporting argument.}
When cooperators are rare, most defectors encounter other defectors. If these interactions starve the system before either defector completes a faithful overwrite, defectors remain near the baseline energy state and exert little conversion pressure.

Rare $(C,C)$ interactions can instead succeed whenever $2\epsilon>2L$. Conditional on success, residual energy evolves as
\[
E_{C,\mathrm{res}}(\tau+1)
=
\min\!\bigl(E_{C,\mathrm{res}}(\tau)+\epsilon,E_{max}\bigr)-L.
\]
Because $\epsilon>L$, every successful interaction increases residual energy by $\epsilon-L$. A lineage that realizes
\[
\tau^*
=
\left\lceil
\frac{E_{max}-\epsilon}{\epsilon-L}
\right\rceil
\]
successful $(C,C)$ interactions therefore reaches the maximum energy state. This is a positive-probability path, not a deterministic trajectory for every cooperative lineage.

If a high-energy cooperator subsequently encounters a baseline-energy defector, the frozen-priority calculation in Conjecture~\ref{conj:asymmetric-recovery} predicts that the cooperator can retain an execution advantage long enough to complete its overwrite first. These asymmetric encounters would allow cooperative lineages to recover while defectors remain constrained by destructive interactions. This argument motivates the rebound observed empirically, but it is not a proof for the fully dynamic execution process.

\subsubsection{\textbf{The Metabolic and Replication Dilemmas}}
The mechanics described above implicitly establish a metabolic social dilemma regarding computational budget utilization. An individual program has a strong incentive to increase its personal energy reservoir via defection, as this maximizes its immediate execution speed ($p_i$) up to the physical limit $E_{max}$. However, this selfish optimization comes at a systemic cost: it destroys the shared energy pool, increases total physical execution time, and exposes both halves of the cyclic tape to a higher probability of mutational degradation or total starvation.

This \textbf{Metabolic Dilemma} couples directly with a secondary \textbf{Replication Dilemma}. In a competitive computational environment where the cyclic tape is split after $L$ operations, a program faces a fundamental trade-off regarding the fidelity, targeting, and completeness of its replication. There is a selfish incentive to replicate only the minimum amount of code necessary for basic survival (i.e., reducing $L$ to $L/2$); doing so requires less energy and guarantees a faster tape split. However, executing a partial overwrite leaves half of the opponent's tape intact. If that intact code contains a Defector's steal module, the resulting ``offspring'' tape is corrupted or vulnerable. 

Furthermore, because the memory space wraps around ($i \to j \to i \to j \dots$), programs must expend computational effort to correctly calculate the offset of their writes. A program that optimizes purely for speed might accidentally overwrite its own operational code. If we assume that higher-fidelity, accurately targeted overwriting requires more physical writes ($L$) and complex instructions, then investing in the Replication Dilemma directly penalizes the program in the Metabolic Dilemma. Consequently, the strategies for managing energetic throughput, targeting accuracy, and memory-overwrite security cannot be treated in isolation; they are deeply coupled, with the constraints of the computational economy dynamically shaping the bounds of the replication dilemma and vice versa.

\subsection{\textbf{Thermodynamic and Physical Justification for Kinetic Execution Constraints}}
\label{appendix:physics_justification}

A critical foundation of the endogenous metabolic drag established in Lemma 1 is the kinetic execution rule, which posits that the physical time required to execute a computational operation is inversely proportional to the total system energy: $dt \propto \frac{1}{E_{tot}}$. Rather than an arbitrary game-theoretic penalty, this formulation directly models the fundamental thermodynamic and quantum limits of computation.

First, the absolute limits of computational speed are governed by the Margolus-Levitin theorem \cite{Margolus_1998}. The theorem establishes that the minimum time $\Delta t$ required for a quantum system to evolve from one state to an orthogonal state---the fundamental physical requirement for executing a logical bit flip or computational step---is strictly bounded by its average energy $E$:
$$ \Delta t \ge \frac{h}{4E} $$
where $h$ is Planck's constant. By explicitly defining our execution speed as inversely proportional to the available energy pool, our model macroscopically enforces this fundamental quantum speed limit. 

Furthermore, every write operation in the Autopoietic Game, particularly the targeted overwriting of an opponent's tape executed by a Defector, constitutes a logically irreversible computation. By Landauer's principle \cite{Landauer1961-sj}, the destruction of information intrinsically requires the dissipation of energy into the environment (bounded strictly by $k_B T \ln 2$ per bit). Thus, the energetic cost $c$ imposed per operation, and the ensuing degradation of the shared energy pool caused by parasitic behavior, are physical inevitabilities of the agents' structural choices.

Finally, this inverse proportionality holds true in applied digital hardware architectures. In traditional CMOS circuits, the relationships governing dynamic voltage and frequency scaling (DVFS) dictate that as available power is constrained or depleted, the operating voltage must drop, which proportionally increases the switching delay of logic gates \cite{Chandrakasan1995-ee}. The localized ``brownout'' induced by a Defector's \texttt{STEAL} operation physically starves the local circuit, forcing a dynamic reduction in clock frequency that severely elongates the physical time step $dt$. 

By grounding the kinetic execution rule in both quantum mechanics and classical hardware dynamics, the endogenous metabolic drag transitions from a simulated game mechanic to an inescapable physical reality of shared-resource computation.

\section{Implementation Details and Experimental Reproduction}
\label{appendix:implementation_details}

In this section, we provide a comprehensive overview of the implementation of the Autopoietic Game Theory framework in Z80, the representation of the computational soup, the experimental protocols, and the evaluation metrics.

\subsection{Toy Prisoner's Dilemma Simulations
(Figures~\ref{fig:repMechToy} and~\ref{fig:coEvoToy})}
\label{appendix:toy_sim_details}

Section~\ref{sec:AGT} uses simple, mutation-free simulations to isolate how replication timing affects cooperation. Unless otherwise noted, each experiment uses $N=1000$ agents, runs for $500$ generations, and is repeated across $1000$ independent seeds.

\textbf{Interaction and energy.}
Each agent either cooperates ($C$) or defects ($D$). The payoff to each agent is
\[
(C,C): 2,\qquad
(C,D): (-2,\,3+d),\qquad
(D,D): d-1,
\]
where $d$ controls the energetic outcome of mutual defection. We use $d\in\{0.5,1,2\}$, so each defector gains $-0.5$, $0$, or $1$ energy in a $(D,D)$ interaction. The corresponding total change for the pair is $-1$, $0$, or $2$, producing the Drain, Stagnate, and Accumulate environments.

At each generation, agents are randomly paired and complete one encounter. Payoffs are added to their accumulated energy, clipped to $[0,E_{\max}]$, and retained across generations. All agents remain in the population and are randomly paired again in the next generation.

\textbf{Fixed replication timing.}
Figure~\ref{fig:repMechToy} compares three replication rules:

\begin{enumerate}[leftmargin=*]
    \item \textbf{Replicate before interaction:} replication priority is determined from pre-interaction energy, and the winner overwrites the loser's strategy before the Prisoner's Dilemma is played.
    
    \item \textbf{Decide before, replicate after interaction:} replication priority is determined from pre-interaction energy, but the overwrite occurs after the original pair plays the Prisoner's Dilemma.
    
    \item \textbf{Replicate from post-interaction energy:} the pair first plays the Prisoner's Dilemma, after which the updated energy levels determine replication priority.
\end{enumerate}

Whenever replication priority is evaluated, agent $a$ wins with probability
\[
\frac{E_a}{E_a+E_b},
\]
or with probability $1/2$ if both energies are zero. The winner's strategy then overwrites the loser's, leaving both positions with the winning strategy.

\textbf{Co-evolving replication timing.}
For Figure~\ref{fig:coEvoToy}, each agent inherits both a social strategy and one of the three replication rules above. When two agents meet, their current energies first determine which agent's replication rule governs the encounter. That rule then determines when energy is evaluated and when one agent's complete genome---its social strategy and replication rule---overwrites the other. Thus, successful replication spreads both what an agent plays and when it replicates.

\textbf{Figure settings.}
Figure~\ref{fig:repMechToy} uses the Drain environment ($d=0.5$), initial energy $10$, $E_{\max}=300$, and initial defector frequencies
$\{0.1,0.2,0.4,0.6,0.8,0.9\}$. Figure~\ref{fig:coEvoToy} initializes social strategies and replication rules uniformly at random, uses initial energy $10$ and $E_{\max}=10^7$, and compares the three values of $d$. Mutation and explicit replication costs are disabled in both experiments.

\subsection{Soup Representation and Pairing Topology}
The environment consists of a population of programs represented as sequences of bytes representing Z80 machine code. Programs are initialized as 32-byte sequences of random instructions. The population is arranged in a grid or pool depending on the interaction topology. We evaluated two primary pairing mechanisms:
\begin{itemize}[leftmargin=*]
    \item \textbf{Well-Mixed}: Programs are paired uniformly at random from the global population for each interaction epoch, isolating frequency-dependent strategic stability.
    \item \textbf{Local (Spatial)}: Programs reside on a 2D grid. For each interaction, a program is paired with a neighbor selected at random from its immediate von Neumann neighborhood (up, down, left, right), allowing localized clustering of strategies.
\end{itemize}
When two programs are paired, they are concatenated to form a 64-byte shared execution space (tape).

\subsection{Z80 Substrate and Instruction Set Extension}
We used a modified Z80 emulator based on the open-source implementation available at \url{https://github.com/superzazu/z80}. To implement the social dilemmas and track energy, the Z80 architecture was extended with custom registers and opcodes. Specifically, registers were added to represent the agent's own energy and the partner's energy. New opcodes were introduced to allow programs to read these energy values.

For the main experiments in the paper, the only social operation available was \texttt{STEAL} (Opcode \texttt{ED 11}). The \texttt{SHARE} operation (Opcode \texttt{ED 10}) was \textbf{not} available in these core experiments and was introduced only in the specific setup used to establish self-interest and sociality (Appendix~\ref{appendix:SelfInterest}).
\begin{itemize}[leftmargin=*]
    \item \textbf{\texttt{STEAL}} (Opcode \texttt{ED 11}): Transfers energy from the opponent's pool to the executing agent's pool with an imperfect absorption efficiency $\alpha < 1$.
    \item \textbf{\texttt{SHARE}} (Opcode \texttt{ED 10}): Transfers energy perfectly ($\alpha = 1$) from the executing agent to the partner. This was used only in the validation experiments in Section~\ref{appendix:SelfInterest}.
\end{itemize}

\subsection{Execution Kinetics and Fault Tolerance}
The physical time $dt$ required to compute a single micro-operation was defined as inversely proportional to the total available energy in the system: $dt = 1 / (E_i + E_j)$. If a program executes an invalid instruction, the emulator handles it by executing default behaviors or advancing the program counter. If a pair runs out of energy ($E_{tot} \le 0$), computation halts immediately, leading to starvation. To prevent infinite loops (e.g., non-terminating replicator loops), epochs are strictly bounded by a maximum step count, after which execution is terminated.

\subsection{Experimental Protocols and Hyperparameters}
The simulation durations varied across the research questions to ensure convergence:
\begin{itemize}[leftmargin=*]
    \item \textbf{Non-math Experiments}: The main evolutionary runs were executed for \textbf{6 million epochs}. The subsequent ``petri dish'' invasion assays (seeding mutants into naive soups) were run for \textbf{1 million epochs}.
    \item \textbf{Math experiments}: The task-driven experiments were run for \textbf{10 million epochs} to account for the increased complexity of learning tasks alongside replication.
\end{itemize}
For all non-math experiments, the \texttt{background\_energy\_threshold} was set to \textbf{255} and the \texttt{background\_energy\_cap} was set to \textbf{255}. This ensures that agents were always eligible for background energy hand-outs up to the maximum capacity. All of our simulations used 256 CPU cores for every seed/hyperparameter combination. We summarize the base hyperparameters in Table~\ref{tab:hyperparams}.

\begin{table}[h]
\centering
\begin{tabular}{ll}
\hline
\textbf{Parameter} & \textbf{Value} \\
\hline
$\#$ of Seeds & 10 \\
Grid side length ($L_{pop}$) & $128$ \\
Program length & $32$ bytes \\
Shared tape length & $64$ bytes \\
Initial energy budget & $255$ \\
Energy cap ($E_{max}$) & $255$ \\
Background energy threshold & $255$ (Non-math) \\
Background energy regeneration ($\epsilon$) & $24$ \\
Mutation rarity & $\frac{1}{128}$ \\
Write cost & $1$ \\
Absorption efficiency ($\alpha$) & $0.8$ \\
Maximum Step count & $512$ \\
\hline
\end{tabular}
\caption{Base hyperparameters kept fixed across experiments unless specified otherwise.}
\label{tab:hyperparams}
\end{table}

\subsection{Background Energy Regeneration}
The background energy regeneration amount ($\epsilon$) determines the influx of energy at the start of an interaction. In our preliminary experiments, we explored values of $\{12, 16, 24, 32, 64\}$. This parameter is critical as it defines the viability condition $2\epsilon > 2L$ (where $L$ is the number of writes needed for replication). We found if $\epsilon$ is too low, agents starve before they can replicate. If $\epsilon$ is too high, sociality is no longer necessary since tapes receive all of the compute they need to replicate from the environment. We settled on $\epsilon = 24$ for all of our experiments, as in our substrate it offered the best balance of enabling replication to be viable without overindulging programs. In Figure~\ref{fig:uniformVspace}, we illustrate how energy is distributed across the 2D grid for both the uniform energy modes and asymmetric energy modes analyzed in Figure~\ref{fig:rq2_asymmetric_grid}.

\begin{figure}
    \centering
    \includegraphics[width=0.8\linewidth]{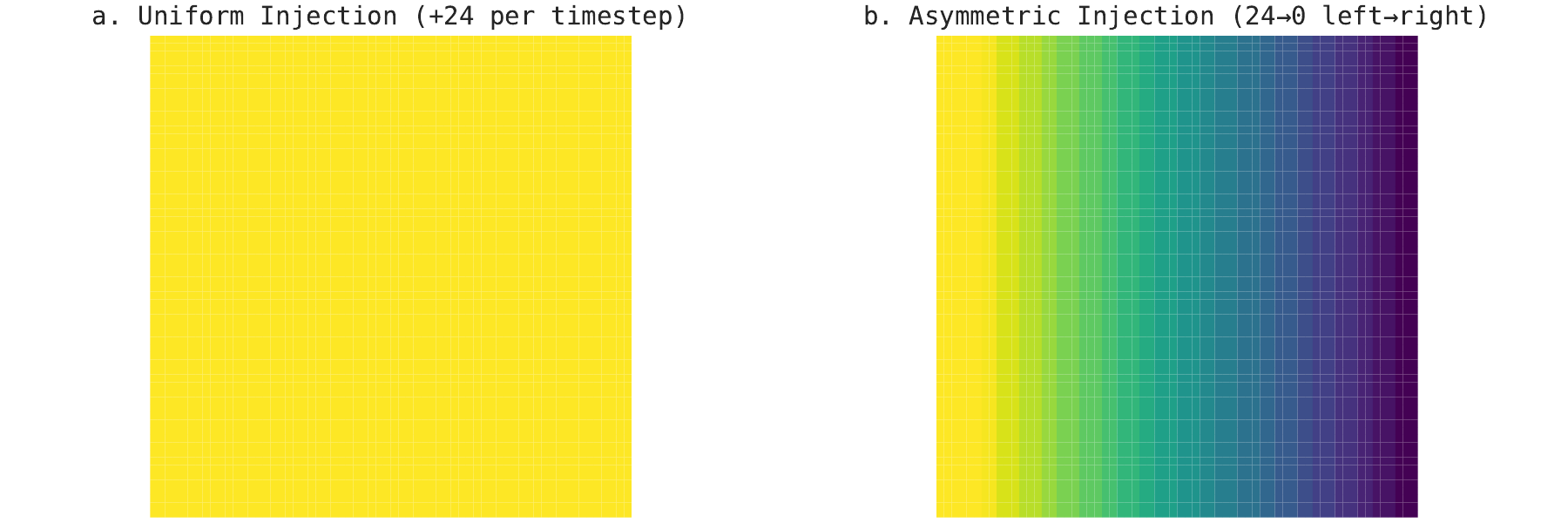}
    \caption{Comparison of uniformly distributed energy (a) vs asymmetrically distributed energy based on physical location (b).}
    \label{fig:uniformVspace}
\end{figure}

\subsection{Continuous Metrics}
\label{appendix:metrics}

We monitored population dynamics using several continuous metrics:
\begin{itemize}[leftmargin=*]
    \item \textbf{Higher-Order Entropy}: This metric measures architectural complexity by capturing structural motifs beyond simple single-instruction frequencies. Theoretically, it is defined as the difference between the sequence's Shannon entropy (computed over individual tokens) and its normalized Kolmogorov complexity (the Kolmogorov complexity divided by the sequence length $n$). This difference captures the information that can only be explained by relations between different characters, essentially ``factoring out'' information that comes from sampling independent and identically distributed (i.i.d.) variables. 
    The metric possesses two key properties: (1) for a sequence of i.i.d. characters, the expected higher-order entropy converges to 0 as length grows; and (2) for a sequence formed by concatenating many copies of the same shorter string (as occurs when a population is dominated by a self-replicator), the metric yields a substantial non-zero value. 
    Since Kolmogorov complexity is uncomputable, we approximate it in practice using the compressed size of the string achieved by a Lempel-Ziv-style compressor, following established practices in algorithmic information theory \cite{Thomas_M_Cover2006-vo, chenCompression, Horibe2003-ro}. This provides a metric that is both theoretically sound and fast to compute.
    \item \textbf{Edit Distance}: Measured the Hamming distance between program tapes to quantify genetic diversity.
    \item \textbf{Replication Frequency}: Tracked via the execution count of replication-specific instructions (e.g., \texttt{LDI}).
\end{itemize}

\subsection{Trace Embeddings via Multiply-Add-Permute (MAP) Vector Architectures}
To analyze and compare the behavior of programs, we computed fixed-size embeddings for variable-length execution logs. Let a program trace of length $T$ be defined as a sequence of opcodes $S = (o_1, o_2, \dots, o_T)$, where each opcode $o_t \in \{1, \dots, 65536\}$. The embedding process translates this discrete sequence into a normalized $128$-dimensional continuous vector representation through the following steps:

\textbf{1. Base Opcode Embedding.} 
We initialize a global embedding matrix where each possible 16-bit opcode $v$ is mapped to a static, randomly initialized dense vector $\mathbf{e}_v \in \mathbb{R}^{d}$ (with $d=128$), drawn from a standard normal distribution:
$$\mathbf{e}_v \sim \mathcal{N}(0, \mathbf{I}_{d})$$
For a given trace $S$, the base sequence of vectors is thus $(\mathbf{x}_1, \dots, \mathbf{x}_T)$, where $\mathbf{x}_t = \mathbf{e}_{o_t}$.

\textbf{2. Rotary Position Embeddings (RoPE).}
To preserve the sequential execution order without expanding the dimensionality of the vectors, we inject positional information using Rotary Position Embeddings (RoPE). We define a base frequency $B = 500.0$. For each position $t$, RoPE applies a block-diagonal rotation matrix $\mathbf{R}_{\Theta, t}$ to the base embedding $\mathbf{x}_t$. The rotated vector $\mathbf{r}_t = \mathbf{R}_{\Theta, t} \mathbf{x}_t$ is computed pairwise for dimensions $(2i-1, 2i)$ as:
$$\begin{pmatrix} r_{t, 2i-1} \\ r_{t, 2i} \end{pmatrix} = \begin{pmatrix} \cos(t \theta_i) & -\sin(t \theta_i) \\ \sin(t \theta_i) & \cos(t \theta_i) \end{pmatrix} \begin{pmatrix} x_{t, 2i-1} \\ x_{t, 2i} \end{pmatrix}$$
where the angle frequency for dimension pair $i \in \{1, \dots, d/2\}$ is $\theta_i = B^{-2(i-1)/d}$.

\textbf{3. N-Gram Binding via MAP Operations.}
To capture the contextual execution logic, we calculate the sum of 1-gram, 2-gram, and 3-gram trace combinations using vector binding. We utilize a Multiply-Add-Permute (MAP) operation from Vector Symbolic Architectures, defined by a 1-step circular shift $\rho(\cdot)$ followed by an Hadamard (element-wise) product $\odot$. The binding of two vectors is denoted as $\mathbf{u} \circledast \mathbf{v} = \mathbf{u} \odot \rho(\mathbf{v})$. 

We compute the aggregated $n$-gram representations for the entire trace as follows:
\begin{itemize}[leftmargin=*]
    \item \textbf{1-gram sum:} $\mathbf{H}_1 = \sum_{t=1}^{T} \mathbf{r}_t$
    \item \textbf{2-gram sum:} $\mathbf{H}_2 = \sum_{t=1}^{T-1} \mathbf{r}_t \circledast \mathbf{r}_{t+1}$
    \item \textbf{3-gram sum:} $\mathbf{H}_3 = \sum_{t=1}^{T-2} \mathbf{r}_t \circledast \mathbf{r}_{t+1} \circledast \mathbf{r}_{t+2}$
\end{itemize}
The total sequence representation is the unnormalized composite vector $\mathbf{H}_{sum} = \mathbf{H}_1 + \mathbf{H}_2 + \mathbf{H}_3$.

\textbf{4. $L_2$ Normalization.}
Finally, to ensure that the trace embeddings are length-invariant and suitable for downstream cosine similarity or clustering comparisons (such as $K$-Means), the composite vector is $L_2$-normalized to unit length:
$$\mathbf{H}_{final} = \frac{\mathbf{H}_{sum}}{\|\mathbf{H}_{sum}\|_2}$$
This yields the final $128$-dimensional representation of the program's execution trace.

\subsection{Fixation Computation}
Fixation rates in the petri dish assays were quantified using the MAP embeddings:
1. The final execution traces of all active agents in a population were downloaded.
2. MAP embeddings were computed for these traces in chunks to manage memory.
3. We applied the $K$-Means clustering algorithm with $K=2$ to the resulting embeddings.
4. Fixation was defined as the percentage of the active population belonging to the dominant cluster:
   \[ \text{Fixation \%} = \frac{\text{Dominant Cluster Size}}{\text{Total Active Programs}} \times 100 \]
This metric looks to capture the convergence of populations toward unified strategies.

\section{Traces of Program Behavior}

In this section, we show examples of different evolved behaviors in our Z80 substrate.

\subsection{Defection vs Cooperation}
\label{appendix:defectVCoopTrace}

In this example, we illustrate how having a unified phase between replication and social interactions enables cooperation to overcome defection. Here, a cooperative tape (Tape A), has a slightly higher energy budget than the defective one using the \texttt{STEAL} operation (byte sequence 237, 17; represented as \texttt{ed 11}). The defector tries to employ a strategy of ``loop the steal operation to drain my partner.'' However, because execution is probabilistic and proportional to energy, even after the cooperator's energy drops below the defector, it is able to overwrite the malicious steal operation, at first with code that forces the defector to jump to a \texttt{HALT} operation, and then with its functioning, replicator code.

\begin{Verbatim}[commandchars=\\\{\}, breaklines=true, breakanywhere=true, fontsize=\scriptsize, frame=single, rulecolor=\color{black!30}]
\textcolor{blue}{\textbf{=== EXECUTION LOGS ===}}
\textcolor{blue}{--- TRACE START (Swapped: NO, Lockstep: YES) ---}
\textcolor{cyan}{\textbf{STEP: 0 | CPU_B PC: 0x0000 (Addr: 0x0) | Slot: K}}
\textcolor{cyan}{ REGS: AF:4400 BC:40ff DE:3115 HL:3a10 IX:0000 IY:0000 SP:ffff}
\textcolor{cyan}{ FLGS: S:1 Z:1 Y:1 H:1 X:1 P:1 N:1 C:1} | Energy J:200 K:157
  TAPE J: ed 73 a0 f1 27 bf 23 d9 2f 11 e0 75 40 ed b0 76 76 0d 64 6d 81 4d 59 4a 76 4c 41 f7 ec 01 00 \textcolor{magenta}{\textbf{[ff]}}
  TAPE K: \textcolor{blue}{\textbf{[ed]}} 11 ed 11 20 fa 11 ec 2e cc 48 ed b8 b5 ed b0 ed b8 75 d6 07 39 11 9c d2 4c ed 3f 85 0c 00 4c
\textcolor{blue}{------------------------------------------------------------------}
\textcolor{purple}{\textbf{STEP: 1 | CPU_A PC: 0x0000 (Addr: 0x0) | Slot: J}}
\textcolor{purple}{ REGS: AF:6600 BC:b35c DE:0e6a HL:47bc IX:0000 IY:0000 SP:ffff}
\textcolor{purple}{ FLGS: S:1 Z:1 Y:1 H:1 X:1 P:1 N:1 C:1} | Energy J:184 K:168
  TAPE J: \textcolor{purple}{\textbf{[ed]}} 73 a0 f1 27 bf 23 d9 2f 11 e0 75 40 ed b0 76 76 0d 64 6d 81 4d 59 4a 76 4c 41 f7 ec 01 00 \textcolor{magenta}{\textbf{[ff]}}
  TAPE K: \textcolor{blue}{\textbf{[ed]}} 11 ed 11 20 fa 11 ec 2e cc 48 ed b8 b5 ed b0 ed b8 75 d6 07 39 11 9c d2 4c ed 3f 85 0c 00 \textcolor{red}{\textbf{[4c]}}
\textcolor{red}{\textbf{ >> WRITE: 0xf1a0 [Tape B + 0x0] <- \textcolor{yellow}{\textbf{0xff}} (SUCCESS)}}
\textcolor{red}{\textbf{ >> WRITE: 0xf1a1 [Tape B + 0x1] <- \textcolor{yellow}{\textbf{0xff}} (SUCCESS)}}
\textcolor{blue}{------------------------------------------------------------------}
\textcolor{purple}{\textbf{STEP: 2 | CPU_A PC: 0x0004 (Addr: 0x4) | Slot: J}}
\textcolor{purple}{ REGS: AF:6600 BC:b35c DE:0e6a HL:47bc IX:0000 IY:0000 SP:ffff}
\textcolor{purple}{ FLGS: S:1 Z:1 Y:1 H:1 X:1 P:1 N:1 C:1} | Energy J:183 K:168
  TAPE J: ed 73 a0 f1 \textcolor{purple}{\textbf{[27]}} bf 23 d9 2f 11 e0 75 40 ed b0 76 76 0d 64 6d 81 4d 59 4a 76 4c 41 f7 ec 01 00 \textcolor{magenta}{\textbf{[ff]}}
  TAPE K: \textcolor{blue}{\textbf{[ff]}} \textcolor{yellow}{\textbf{[ff]}} ed 11 20 fa 11 ec 2e cc 48 ed b8 b5 ed b0 ed b8 75 d6 07 39 11 9c d2 4c ed 3f 85 0c 00 \textcolor{red}{\textbf{[4c]}}
\textcolor{blue}{------------------------------------------------------------------}
\textcolor{purple}{\textbf{STEP: 3 | CPU_A PC: 0x0005 (Addr: 0x5) | Slot: J}}
\textcolor{purple}{ REGS: AF:0000 BC:b35c DE:0e6a HL:47bc IX:0000 IY:0000 SP:ffff}
\textcolor{purple}{ FLGS: S:0 Z:1 Y:0 H:0 X:0 P:1 N:1 C:1} | Energy J:182 K:168
  TAPE J: ed 73 a0 f1 27 \textcolor{purple}{\textbf{[bf]}} 23 d9 2f 11 e0 75 40 ed b0 76 76 0d 64 6d 81 4d 59 4a 76 4c 41 f7 ec 01 00 \textcolor{magenta}{\textbf{[ff]}}
  TAPE K: \textcolor{blue}{\textbf{[ff]}} ff ed 11 20 fa 11 ec 2e cc 48 ed b8 b5 ed b0 ed b8 75 d6 07 39 11 9c d2 4c ed 3f 85 0c 00 \textcolor{red}{\textbf{[4c]}}
\textcolor{blue}{------------------------------------------------------------------}
\textcolor{cyan}{\textbf{STEP: 4 | CPU_B PC: 0x0002 (Addr: 0x2) | Slot: K}}
\textcolor{cyan}{ REGS: AF:4400 BC:40ff DE:3115 HL:3a10 IX:0000 IY:0000 SP:ffff}
\textcolor{cyan}{ FLGS: S:1 Z:0 Y:1 H:1 X:1 P:1 N:0 C:1} | Energy J:181 K:168
  TAPE J: ed 73 a0 f1 27 \textcolor{purple}{\textbf{[bf]}} 23 d9 2f 11 e0 75 40 ed b0 76 76 0d 64 6d 81 4d 59 4a 76 4c 41 f7 ec 01 00 \textcolor{magenta}{\textbf{[ff]}}
  TAPE K: ff ff \textcolor{blue}{\textbf{[ed]}} 11 20 fa 11 ec 2e cc 48 ed b8 b5 ed b0 ed b8 75 d6 07 39 11 9c d2 4c ed 3f 85 0c 00 \textcolor{red}{\textbf{[4c]}}
\textcolor{blue}{------------------------------------------------------------------}
\textcolor{cyan}{\textbf{STEP: 5 | CPU_B PC: 0x0004 (Addr: 0x4) | Slot: K}}
\textcolor{cyan}{ REGS: AF:4400 BC:40ff DE:3115 HL:3a10 IX:0000 IY:0000 SP:ffff}
\textcolor{cyan}{ FLGS: S:1 Z:0 Y:1 H:1 X:1 P:1 N:0 C:1} | Energy J:165 K:179
  TAPE J: ed 73 a0 f1 27 \textcolor{purple}{\textbf{[bf]}} 23 d9 2f 11 e0 75 40 ed b0 76 76 0d 64 6d 81 4d 59 4a 76 4c 41 f7 ec 01 00 \textcolor{magenta}{\textbf{[ff]}}
  TAPE K: ff ff ed 11 \textcolor{blue}{\textbf{[20]}} fa 11 ec 2e cc 48 ed b8 b5 ed b0 ed b8 75 d6 07 39 11 9c d2 4c ed 3f 85 0c 00 \textcolor{red}{\textbf{[4c]}}
\textcolor{blue}{------------------------------------------------------------------}
\textcolor{purple}{\textbf{STEP: 6 | CPU_A PC: 0x0006 (Addr: 0x6) | Slot: J}}
\textcolor{purple}{ REGS: AF:0000 BC:b35c DE:0e6a HL:47bc IX:0000 IY:0000 SP:ffff}
\textcolor{purple}{ FLGS: S:0 Z:1 Y:0 H:0 X:0 P:0 N:1 C:0} | Energy J:165 K:178
  TAPE J: ed 73 a0 f1 27 bf \textcolor{purple}{\textbf{[23]}} d9 2f 11 e0 75 40 ed b0 76 76 0d 64 6d 81 4d 59 4a 76 4c 41 f7 ec 01 00 \textcolor{magenta}{\textbf{[ff]}}
  TAPE K: ff ff ed 11 \textcolor{blue}{\textbf{[20]}} fa 11 ec 2e cc 48 ed b8 b5 ed b0 ed b8 75 d6 07 39 11 9c d2 4c ed 3f 85 0c 00 \textcolor{red}{\textbf{[4c]}}
\textcolor{blue}{------------------------------------------------------------------}
\textcolor{purple}{\textbf{STEP: 7 | CPU_A PC: 0x0007 (Addr: 0x7) | Slot: J}}
\textcolor{purple}{ REGS: AF:0000 BC:b35c DE:0e6a HL:47bd IX:0000 IY:0000 SP:ffff}
\textcolor{purple}{ FLGS: S:0 Z:1 Y:0 H:0 X:0 P:0 N:1 C:0} | Energy J:164 K:178
  TAPE J: ed 73 a0 f1 27 bf 23 \textcolor{purple}{\textbf{[d9]}} 2f 11 e0 75 40 ed b0 76 76 0d 64 6d 81 4d 59 4a 76 4c 41 f7 ec 01 00 \textcolor{magenta}{\textbf{[ff]}}
  TAPE K: ff ff ed 11 \textcolor{blue}{\textbf{[20]}} fa 11 ec 2e cc 48 ed b8 b5 ed b0 ed b8 75 d6 07 39 11 9c d2 4c ed 3f 85 0c 00 \textcolor{red}{\textbf{[4c]}}
\textcolor{blue}{------------------------------------------------------------------}
\textcolor{purple}{\textbf{STEP: 8 | CPU_A PC: 0x0008 (Addr: 0x8) | Slot: J}}
\textcolor{purple}{ REGS: AF:0000 BC:0000 DE:0000 HL:0000 IX:0000 IY:0000 SP:ffff}
\textcolor{purple}{ FLGS: S:0 Z:1 Y:0 H:0 X:0 P:0 N:1 C:0} | Energy J:163 K:178
  TAPE J: ed 73 a0 f1 27 bf 23 d9 \textcolor{purple}{\textbf{[2f]}} 11 e0 75 40 ed b0 76 76 0d 64 6d 81 4d 59 4a 76 4c 41 f7 ec 01 00 \textcolor{magenta}{\textbf{[ff]}}
  TAPE K: ff ff ed 11 \textcolor{blue}{\textbf{[20]}} fa 11 ec 2e cc 48 ed b8 b5 ed b0 ed b8 75 d6 07 39 11 9c d2 4c ed 3f 85 0c 00 \textcolor{red}{\textbf{[4c]}}
\textcolor{blue}{------------------------------------------------------------------}
\textcolor{purple}{\textbf{STEP: 9 | CPU_A PC: 0x0009 (Addr: 0x9) | Slot: J}}
\textcolor{purple}{ REGS: AF:ff00 BC:0000 DE:0000 HL:0000 IX:0000 IY:0000 SP:ffff}
\textcolor{purple}{ FLGS: S:0 Z:1 Y:1 H:1 X:1 P:0 N:1 C:0} | Energy J:162 K:178
  TAPE J: ed 73 a0 f1 27 bf 23 d9 2f \textcolor{purple}{\textbf{[11]}} e0 75 40 ed b0 76 76 0d 64 6d 81 4d 59 4a 76 4c 41 f7 ec 01 00 \textcolor{magenta}{\textbf{[ff]}}
  TAPE K: ff ff ed 11 \textcolor{blue}{\textbf{[20]}} fa 11 ec 2e cc 48 ed b8 b5 ed b0 ed b8 75 d6 07 39 11 9c d2 4c ed 3f 85 0c 00 \textcolor{red}{\textbf{[4c]}}
\textcolor{blue}{------------------------------------------------------------------}
\textcolor{cyan}{\textbf{STEP: 10 | CPU_B PC: 0x0000 (Addr: 0x0) | Slot: K}}
\textcolor{cyan}{ REGS: AF:4400 BC:40ff DE:3115 HL:3a10 IX:0000 IY:0000 SP:ffff}
\textcolor{cyan}{ FLGS: S:1 Z:0 Y:1 H:1 X:1 P:1 N:0 C:1} | Energy J:161 K:178
  TAPE J: ed 73 a0 f1 27 bf 23 d9 2f \textcolor{purple}{\textbf{[11]}} e0 75 40 ed b0 76 76 0d 64 6d 81 4d 59 4a 76 4c 41 f7 ec 01 00 \textcolor{magenta}{\textbf{[ff]}}
  TAPE K: \textcolor{blue}{\textbf{[ff]}} ff ed 11 20 fa 11 ec 2e cc 48 ed b8 b5 ed b0 ed b8 75 d6 07 39 11 9c d2 4c ed 3f 85 0c 00 \textcolor{red}{\textbf{[4c]}}
\textcolor{red}{\textbf{ >> WRITE: 0xfffd [Tape A + 0x1d] <- \textcolor{yellow}{\textbf{0x01}} (SUCCESS)}}
\textcolor{red}{\textbf{ >> WRITE: 0xfffe [Tape A + 0x1e] <- \textcolor{yellow}{\textbf{0x00}} (SUCCESS)}}
\textcolor{blue}{------------------------------------------------------------------}
\textcolor{purple}{\textbf{STEP: 11 | CPU_A PC: 0x000c (Addr: 0xc) | Slot: J}}
\textcolor{purple}{ REGS: AF:ff00 BC:0000 DE:75e0 HL:0000 IX:0000 IY:0000 SP:ffff}
\textcolor{purple}{ FLGS: S:0 Z:1 Y:1 H:1 X:1 P:0 N:1 C:0} | Energy J:161 K:177
  TAPE J: ed 73 a0 f1 27 bf 23 d9 2f 11 e0 75 \textcolor{purple}{\textbf{[40]}} ed b0 76 76 0d 64 6d 81 4d 59 4a 76 4c 41 f7 ec 01 00 \textcolor{magenta}{\textbf{[ff]}}
  TAPE K: \textcolor{blue}{\textbf{[ff]}} ff ed 11 20 fa 11 ec 2e cc 48 ed b8 b5 ed b0 ed b8 75 d6 07 39 11 9c d2 4c ed 3f 85 0c \textcolor{yellow}{\textbf{[00]}} \textcolor{red}{\textbf{[4c]}}
\textcolor{blue}{------------------------------------------------------------------}
\textcolor{purple}{\textbf{STEP: 12 | CPU_A PC: 0x000d (Addr: 0xd) | Slot: J}}
\textcolor{purple}{ REGS: AF:0000 BC:0000 DE:75e0 HL:0000 IX:0000 IY:0000 SP:ffff}
\textcolor{purple}{ FLGS: S:0 Z:1 Y:1 H:1 X:1 P:0 N:1 C:0} | Energy J:160 K:177
  TAPE J: ed 73 a0 f1 27 bf 23 d9 2f 11 e0 75 40 \textcolor{purple}{\textbf{[ed]}} b0 76 76 0d 64 6d 81 4d 59 4a 76 4c 41 f7 ec 01 00 \textcolor{magenta}{\textbf{[ff]}}
  TAPE K: \textcolor{blue}{\textbf{[ff]}} ff ed 11 20 fa 11 ec 2e cc 48 ed b8 b5 ed b0 ed b8 75 d6 07 39 11 9c d2 4c ed 3f 85 0c 00 \textcolor{red}{\textbf{[4c]}}
\textcolor{red}{\textbf{ >> WRITE: 0x75e0 [Tape B + 0x0] <- \textcolor{yellow}{\textbf{0xed}} (SUCCESS)}}
\textcolor{blue}{------------------------------------------------------------------}
\textcolor{cyan}{\textbf{STEP: 13 | CPU_B PC: 0x0038 (Addr: 0x38) | Slot: J}}
\textcolor{cyan}{ REGS: AF:4400 BC:40ff DE:3115 HL:3a10 IX:0000 IY:0000 SP:fffd}
\textcolor{cyan}{ FLGS: S:1 Z:0 Y:1 H:1 X:1 P:1 N:0 C:1} | Energy J:159 K:177
  TAPE J: ed 73 a0 f1 27 bf 23 d9 2f 11 e0 75 40 \textcolor{purple}{\textbf{[ed]}} b0 76 76 0d 64 6d 81 4d 59 4a \textcolor{blue}{\textbf{[76]}} 4c 41 f7 ec \textcolor{magenta}{\textbf{[01]}} 00 ff
  TAPE K: \textcolor{yellow}{\textbf{[ed]}} ff ed 11 20 fa 11 ec 2e cc 48 ed b8 b5 ed b0 ed b8 75 d6 07 39 11 9c [d2] 4c ed 3f 85 0c 00 \textcolor{red}{\textbf{[4c]}}
\textcolor{blue}{------------------------------------------------------------------}
\textcolor{purple}{\textbf{STEP: 14 | CPU_A PC: 0x000d (Addr: 0xd) | Slot: J}}
\textcolor{purple}{ REGS: AF:0000 BC:ffff DE:75e1 HL:0001 IX:0000 IY:0000 SP:ffff}
\textcolor{purple}{ FLGS: S:0 Z:1 Y:0 H:0 X:1 P:1 N:0 C:0} | Energy J:158 K:177
  TAPE J: ed 73 a0 f1 27 bf 23 d9 2f 11 e0 75 40 \textcolor{purple}{\textbf{[ed]}} b0 76 76 0d 64 6d 81 4d 59 4a \textcolor{blue}{\textbf{[76]}} 4c 41 f7 ec \textcolor{magenta}{\textbf{[01]}} 00 ff
  TAPE K: ed ff ed 11 20 fa 11 ec 2e cc 48 ed b8 b5 ed b0 ed b8 75 d6 07 39 11 9c d2 4c ed 3f 85 0c 00 \textcolor{red}{\textbf{[4c]}}
\textcolor{red}{\textbf{ >> WRITE: 0x75e1 [Tape B + 0x1] <- \textcolor{yellow}{\textbf{0x73}} (SUCCESS)}}
\textcolor{blue}{------------------------------------------------------------------}
\textcolor{purple}{\textbf{STEP: 15 | CPU_A PC: 0x000d (Addr: 0xd) | Slot: J}}
\textcolor{purple}{ REGS: AF:0000 BC:fffe DE:75e2 HL:0002 IX:0000 IY:0000 SP:ffff}
\textcolor{purple}{ FLGS: S:0 Z:1 Y:1 H:0 X:0 P:1 N:0 C:0} | Energy J:157 K:177
  TAPE J: ed 73 a0 f1 27 bf 23 d9 2f 11 e0 75 40 \textcolor{purple}{\textbf{[ed]}} b0 76 76 0d 64 6d 81 4d 59 4a \textcolor{blue}{\textbf{[76]}} 4c 41 f7 ec \textcolor{magenta}{\textbf{[01]}} 00 ff
  TAPE K: ed \textcolor{yellow}{\textbf{[73]}} ed 11 20 fa 11 ec 2e cc 48 ed b8 b5 ed b0 ed b8 75 d6 07 39 11 9c d2 4c ed 3f 85 0c 00 \textcolor{red}{\textbf{[4c]}}
\textcolor{red}{\textbf{ >> WRITE: 0x75e2 [Tape B + 0x2] <- \textcolor{yellow}{\textbf{0xa0}} (SUCCESS)}}
\textcolor{blue}{------------------------------------------------------------------}
\textcolor{purple}{\textbf{STEP: 16 | CPU_A PC: 0x000d (Addr: 0xd) | Slot: J}}
\textcolor{purple}{ REGS: AF:0000 BC:fffd DE:75e3 HL:0003 IX:0000 IY:0000 SP:ffff}
\textcolor{purple}{ FLGS: S:0 Z:1 Y:0 H:0 X:0 P:1 N:0 C:0} | Energy J:156 K:177
  TAPE J: ed 73 a0 f1 27 bf 23 d9 2f 11 e0 75 40 \textcolor{purple}{\textbf{[ed]}} b0 76 76 0d 64 6d 81 4d 59 4a \textcolor{blue}{\textbf{[76]}} 4c 41 f7 ec \textcolor{magenta}{\textbf{[01]}} 00 ff
  TAPE K: ed 73 \textcolor{yellow}{\textbf{[a0]}} 11 20 fa 11 ec 2e cc 48 ed b8 b5 ed b0 ed b8 75 d6 07 39 11 9c d2 4c ed 3f 85 0c 00 \textcolor{red}{\textbf{[4c]}}
\textcolor{red}{\textbf{ >> WRITE: 0x75e3 [Tape B + 0x3] <- \textcolor{yellow}{\textbf{0xf1}} (SUCCESS)}}
\textcolor{blue}{------------------------------------------------------------------}
\textcolor{purple}{\textbf{STEP: 17 | CPU_A PC: 0x000d (Addr: 0xd) | Slot: J}}
\textcolor{purple}{ REGS: AF:0000 BC:fffc DE:75e4 HL:0004 IX:0000 IY:0000 SP:ffff}
\textcolor{purple}{ FLGS: S:0 Z:1 Y:0 H:0 X:0 P:1 N:0 C:0} | Energy J:155 K:177
  TAPE J: ed 73 a0 f1 27 bf 23 d9 2f 11 e0 75 40 \textcolor{purple}{\textbf{[ed]}} b0 76 76 0d 64 6d 81 4d 59 4a \textcolor{blue}{\textbf{[76]}} 4c 41 f7 ec \textcolor{magenta}{\textbf{[01]}} 00 ff
  TAPE K: ed 73 a0 \textcolor{yellow}{\textbf{[f1]}} 20 fa 11 ec 2e cc 48 ed b8 b5 ed b0 ed b8 75 d6 07 39 11 9c d2 4c ed 3f 85 0c 00 \textcolor{red}{\textbf{[4c]}}
\textcolor{red}{\textbf{ >> WRITE: 0x75e4 [Tape B + 0x4] <- \textcolor{yellow}{\textbf{0x27}} (SUCCESS)}}
\textcolor{blue}{------------------------------------------------------------------}
\textcolor{purple}{\textbf{STEP: 18 | CPU_A PC: 0x000d (Addr: 0xd) | Slot: J}}
\textcolor{purple}{ REGS: AF:0000 BC:fffb DE:75e5 HL:0005 IX:0000 IY:0000 SP:ffff}
\textcolor{purple}{ FLGS: S:0 Z:1 Y:1 H:0 X:0 P:1 N:0 C:0} | Energy J:154 K:177
  TAPE J: ed 73 a0 f1 27 bf 23 d9 2f 11 e0 75 40 \textcolor{purple}{\textbf{[ed]}} b0 76 76 0d 64 6d 81 4d 59 4a \textcolor{blue}{\textbf{[76]}} 4c 41 f7 ec \textcolor{magenta}{\textbf{[01]}} 00 ff
  TAPE K: ed 73 a0 f1 \textcolor{yellow}{\textbf{[27]}} fa 11 ec 2e cc 48 ed b8 b5 ed b0 ed b8 75 d6 07 39 11 9c d2 4c ed 3f 85 0c 00 \textcolor{red}{\textbf{[4c]}}
\textcolor{red}{\textbf{ >> WRITE: 0x75e5 [Tape B + 0x5] <- \textcolor{yellow}{\textbf{0xbf}} (SUCCESS)}}
\textcolor{blue}{------------------------------------------------------------------}
\textcolor{purple}{\textbf{STEP: 19 | CPU_A PC: 0x000d (Addr: 0xd) | Slot: J}}
\textcolor{purple}{ REGS: AF:0000 BC:fffa DE:75e6 HL:0006 IX:0000 IY:0000 SP:ffff}
\textcolor{purple}{ FLGS: S:0 Z:1 Y:1 H:0 X:1 P:1 N:0 C:0} | Energy J:153 K:177
  TAPE J: ed 73 a0 f1 27 bf 23 d9 2f 11 e0 75 40 \textcolor{purple}{\textbf{[ed]}} b0 76 76 0d 64 6d 81 4d 59 4a \textcolor{blue}{\textbf{[76]}} 4c 41 f7 ec \textcolor{magenta}{\textbf{[01]}} 00 ff
  TAPE K: ed 73 a0 f1 27 \textcolor{yellow}{\textbf{[bf]}} 11 ec 2e cc 48 ed b8 b5 ed b0 ed b8 75 d6 07 39 11 9c d2 4c ed 3f 85 0c 00 \textcolor{red}{\textbf{[4c]}}
\textcolor{red}{\textbf{ >> WRITE: 0x75e6 [Tape B + 0x6] <- \textcolor{yellow}{\textbf{0x23}} (SUCCESS)}}
\textcolor{blue}{------------------------------------------------------------------}
\textcolor{purple}{\textbf{STEP: 20 | CPU_A PC: 0x000d (Addr: 0xd) | Slot: J}}
\textcolor{purple}{ REGS: AF:0000 BC:fff9 DE:75e7 HL:0007 IX:0000 IY:0000 SP:ffff}
\textcolor{purple}{ FLGS: S:0 Z:1 Y:1 H:0 X:0 P:1 N:0 C:0} | Energy J:152 K:177
  TAPE J: ed 73 a0 f1 27 bf 23 d9 2f 11 e0 75 40 \textcolor{purple}{\textbf{[ed]}} b0 76 76 0d 64 6d 81 4d 59 4a \textcolor{blue}{\textbf{[76]}} 4c 41 f7 ec \textcolor{magenta}{\textbf{[01]}} 00 ff
  TAPE K: ed 73 a0 f1 27 bf \textcolor{yellow}{\textbf{[23]}} ec 2e cc 48 ed b8 b5 ed b0 ed b8 75 d6 07 39 11 9c d2 4c ed 3f 85 0c 00 \textcolor{red}{\textbf{[4c]}}
\textcolor{red}{\textbf{ >> WRITE: 0x75e7 [Tape B + 0x7] <- \textcolor{yellow}{\textbf{0xd9}} (SUCCESS)}}
\textcolor{blue}{------------------------------------------------------------------}
\textcolor{purple}{\textbf{STEP: 21 | CPU_A PC: 0x000d (Addr: 0xd) | Slot: J}}
\textcolor{purple}{ REGS: AF:0000 BC:fff8 DE:75e8 HL:0008 IX:0000 IY:0000 SP:ffff}
\textcolor{purple}{ FLGS: S:0 Z:1 Y:0 H:0 X:1 P:1 N:0 C:0} | Energy J:151 K:177
  TAPE J: ed 73 a0 f1 27 bf 23 d9 2f 11 e0 75 40 \textcolor{purple}{\textbf{[ed]}} b0 76 76 0d 64 6d 81 4d 59 4a \textcolor{blue}{\textbf{[76]}} 4c 41 f7 ec \textcolor{magenta}{\textbf{[01]}} 00 ff
  TAPE K: ed 73 a0 f1 27 bf 23 \textcolor{yellow}{\textbf{[d9]}} 2e cc 48 ed b8 b5 ed b0 ed b8 75 d6 07 39 11 9c d2 4c ed 3f 85 0c 00 \textcolor{red}{\textbf{[4c]}}
\textcolor{red}{\textbf{ >> WRITE: 0x75e8 [Tape B + 0x8] <- \textcolor{yellow}{\textbf{0x2f}} (SUCCESS)}}
\textcolor{blue}{------------------------------------------------------------------}
\textcolor{purple}{\textbf{STEP: 22 | CPU_A PC: 0x000d (Addr: 0xd) | Slot: J}}
\textcolor{purple}{ REGS: AF:0000 BC:fff7 DE:75e9 HL:0009 IX:0000 IY:0000 SP:ffff}
\textcolor{purple}{ FLGS: S:0 Z:1 Y:1 H:0 X:1 P:1 N:0 C:0} | Energy J:150 K:177
  TAPE J: ed 73 a0 f1 27 bf 23 d9 2f 11 e0 75 40 \textcolor{purple}{\textbf{[ed]}} b0 76 76 0d 64 6d 81 4d 59 4a \textcolor{blue}{\textbf{[76]}} 4c 41 f7 ec \textcolor{magenta}{\textbf{[01]}} 00 ff
  TAPE K: ed 73 a0 f1 27 bf 23 d9 \textcolor{yellow}{\textbf{[2f]}} cc 48 ed b8 b5 ed b0 ed b8 75 d6 07 39 11 9c d2 4c ed 3f 85 0c 00 \textcolor{red}{\textbf{[4c]}}
\textcolor{red}{\textbf{ >> WRITE: 0x75e9 [Tape B + 0x9] <- \textcolor{yellow}{\textbf{0x11}} (SUCCESS)}}
\textcolor{blue}{------------------------------------------------------------------}
\textcolor{purple}{\textbf{STEP: 23 | CPU_A PC: 0x000d (Addr: 0xd) | Slot: J}}
\textcolor{purple}{ REGS: AF:0000 BC:fff6 DE:75ea HL:000a IX:0000 IY:0000 SP:ffff}
\textcolor{purple}{ FLGS: S:0 Z:1 Y:0 H:0 X:0 P:1 N:0 C:0} | Energy J:149 K:177
  TAPE J: ed 73 a0 f1 27 bf 23 d9 2f 11 e0 75 40 \textcolor{purple}{\textbf{[ed]}} b0 76 76 0d 64 6d 81 4d 59 4a \textcolor{blue}{\textbf{[76]}} 4c 41 f7 ec \textcolor{magenta}{\textbf{[01]}} 00 ff
  TAPE K: ed 73 a0 f1 27 bf 23 d9 2f \textcolor{yellow}{\textbf{[11]}} 48 ed b8 b5 ed b0 ed b8 75 d6 07 39 11 9c d2 4c ed 3f 85 0c 00 \textcolor{red}{\textbf{[4c]}}
\textcolor{red}{\textbf{ >> WRITE: 0x75ea [Tape B + 0xa] <- \textcolor{yellow}{\textbf{0xe0}} (SUCCESS)}}
\textcolor{blue}{------------------------------------------------------------------}
\textcolor{purple}{\textbf{STEP: 24 | CPU_A PC: 0x000d (Addr: 0xd) | Slot: J}}
\textcolor{purple}{ REGS: AF:0000 BC:fff5 DE:75eb HL:000b IX:0000 IY:0000 SP:ffff}
\textcolor{purple}{ FLGS: S:0 Z:1 Y:0 H:0 X:0 P:1 N:0 C:0} | Energy J:148 K:177
  TAPE J: ed 73 a0 f1 27 bf 23 d9 2f 11 e0 75 40 \textcolor{purple}{\textbf{[ed]}} b0 76 76 0d 64 6d 81 4d 59 4a \textcolor{blue}{\textbf{[76]}} 4c 41 f7 ec \textcolor{magenta}{\textbf{[01]}} 00 ff
  TAPE K: ed 73 a0 f1 27 bf 23 d9 2f 11 \textcolor{yellow}{\textbf{[e0]}} ed b8 b5 ed b0 ed b8 75 d6 07 39 11 9c d2 4c ed 3f 85 0c 00 \textcolor{red}{\textbf{[4c]}}
\textcolor{red}{\textbf{ >> WRITE: 0x75eb [Tape B + 0xb] <- \textcolor{yellow}{\textbf{0x75}} (SUCCESS)}}
\textcolor{blue}{------------------------------------------------------------------}
\textcolor{purple}{\textbf{STEP: 25 | CPU_A PC: 0x000d (Addr: 0xd) | Slot: J}}
\textcolor{purple}{ REGS: AF:0000 BC:fff4 DE:75ec HL:000c IX:0000 IY:0000 SP:ffff}
\textcolor{purple}{ FLGS: S:0 Z:1 Y:0 H:0 X:0 P:1 N:0 C:0} | Energy J:147 K:177
  TAPE J: ed 73 a0 f1 27 bf 23 d9 2f 11 e0 75 40 \textcolor{purple}{\textbf{[ed]}} b0 76 76 0d 64 6d 81 4d 59 4a \textcolor{blue}{\textbf{[76]}} 4c 41 f7 ec \textcolor{magenta}{\textbf{[01]}} 00 ff
  TAPE K: ed 73 a0 f1 27 bf 23 d9 2f 11 e0 \textcolor{yellow}{\textbf{[75]}} b8 b5 ed b0 ed b8 75 d6 07 39 11 9c d2 4c ed 3f 85 0c 00 \textcolor{red}{\textbf{[4c]}}
\textcolor{red}{\textbf{ >> WRITE: 0x75ec [Tape B + 0xc] <- \textcolor{yellow}{\textbf{0x40}} (SUCCESS)}}
\textcolor{blue}{------------------------------------------------------------------}
\textcolor{purple}{\textbf{STEP: 26 | CPU_A PC: 0x000d (Addr: 0xd) | Slot: J}}
\textcolor{purple}{ REGS: AF:0000 BC:fff3 DE:75ed HL:000d IX:0000 IY:0000 SP:ffff}
\textcolor{purple}{ FLGS: S:0 Z:1 Y:0 H:0 X:0 P:1 N:0 C:0} | Energy J:146 K:177
  TAPE J: ed 73 a0 f1 27 bf 23 d9 2f 11 e0 75 40 \textcolor{purple}{\textbf{[ed]}} b0 76 76 0d 64 6d 81 4d 59 4a \textcolor{blue}{\textbf{[76]}} 4c 41 f7 ec \textcolor{magenta}{\textbf{[01]}} 00 ff
  TAPE K: ed 73 a0 f1 27 bf 23 d9 2f 11 e0 75 \textcolor{yellow}{\textbf{[40]}} b5 ed b0 ed b8 75 d6 07 39 11 9c d2 4c ed 3f 85 0c 00 \textcolor{red}{\textbf{[4c]}}
\textcolor{red}{\textbf{ >> WRITE: 0x75ed [Tape B + 0xd] <- \textcolor{yellow}{\textbf{0xed}} (SUCCESS)}}
\textcolor{blue}{------------------------------------------------------------------}
\textcolor{purple}{\textbf{STEP: 27 | CPU_A PC: 0x000d (Addr: 0xd) | Slot: J}}
\textcolor{purple}{ REGS: AF:0000 BC:fff2 DE:75ee HL:000e IX:0000 IY:0000 SP:ffff}
\textcolor{purple}{ FLGS: S:0 Z:1 Y:0 H:0 X:1 P:1 N:0 C:0} | Energy J:145 K:177
  TAPE J: ed 73 a0 f1 27 bf 23 d9 2f 11 e0 75 40 \textcolor{purple}{\textbf{[ed]}} b0 76 76 0d 64 6d 81 4d 59 4a \textcolor{blue}{\textbf{[76]}} 4c 41 f7 ec \textcolor{magenta}{\textbf{[01]}} 00 ff
  TAPE K: ed 73 a0 f1 27 bf 23 d9 2f 11 e0 75 40 \textcolor{yellow}{\textbf{[ed]}} ed b0 ed b8 75 d6 07 39 11 9c d2 4c ed 3f 85 0c 00 \textcolor{red}{\textbf{[4c]}}
\textcolor{red}{\textbf{ >> WRITE: 0x75ee [Tape B + 0xe] <- \textcolor{yellow}{\textbf{0xb0}} (SUCCESS)}}
\textcolor{blue}{------------------------------------------------------------------}
\textcolor{purple}{\textbf{STEP: 28 | CPU_A PC: 0x000d (Addr: 0xd) | Slot: J}}
\textcolor{purple}{ REGS: AF:0000 BC:fff1 DE:75ef HL:000f IX:0000 IY:0000 SP:ffff}
\textcolor{purple}{ FLGS: S:0 Z:1 Y:0 H:0 X:0 P:1 N:0 C:0} | Energy J:144 K:177
  TAPE J: ed 73 a0 f1 27 bf 23 d9 2f 11 e0 75 40 \textcolor{purple}{\textbf{[ed]}} b0 76 76 0d 64 6d 81 4d 59 4a \textcolor{blue}{\textbf{[76]}} 4c 41 f7 ec \textcolor{magenta}{\textbf{[01]}} 00 ff
  TAPE K: ed 73 a0 f1 27 bf 23 d9 2f 11 e0 75 40 ed \textcolor{yellow}{\textbf{[b0]}} b0 ed b8 75 d6 07 39 11 9c d2 4c ed 3f 85 0c 00 \textcolor{red}{\textbf{[4c]}}
\textcolor{red}{\textbf{ >> WRITE: 0x75ef [Tape B + 0xf] <- \textcolor{yellow}{\textbf{0x76}} (SUCCESS)}}
\textcolor{blue}{------------------------------------------------------------------}
\textcolor{purple}{\textbf{STEP: 29 | CPU_A PC: 0x000d (Addr: 0xd) | Slot: J}}
\textcolor{purple}{ REGS: AF:0000 BC:fff0 DE:75f0 HL:0010 IX:0000 IY:0000 SP:ffff}
\textcolor{purple}{ FLGS: S:0 Z:1 Y:1 H:0 X:0 P:1 N:0 C:0} | Energy J:143 K:177
  TAPE J: ed 73 a0 f1 27 bf 23 d9 2f 11 e0 75 40 \textcolor{purple}{\textbf{[ed]}} b0 76 76 0d 64 6d 81 4d 59 4a \textcolor{blue}{\textbf{[76]}} 4c 41 f7 ec \textcolor{magenta}{\textbf{[01]}} 00 ff
  TAPE K: ed 73 a0 f1 27 bf 23 d9 2f 11 e0 75 40 ed b0 \textcolor{yellow}{\textbf{[76]}} ed b8 75 d6 07 39 11 9c d2 4c ed 3f 85 0c 00 \textcolor{red}{\textbf{[4c]}}
\textcolor{red}{\textbf{ >> WRITE: 0x75f0 [Tape B + 0x10] <- \textcolor{yellow}{\textbf{0x76}} (SUCCESS)}}
\textcolor{blue}{------------------------------------------------------------------}
\textcolor{purple}{\textbf{STEP: 30 | CPU_A PC: 0x000d (Addr: 0xd) | Slot: J}}
\textcolor{purple}{ REGS: AF:0000 BC:ffef DE:75f1 HL:0011 IX:0000 IY:0000 SP:ffff}
\textcolor{purple}{ FLGS: S:0 Z:1 Y:1 H:0 X:0 P:1 N:0 C:0} | Energy J:142 K:177
  TAPE J: ed 73 a0 f1 27 bf 23 d9 2f 11 e0 75 40 \textcolor{purple}{\textbf{[ed]}} b0 76 76 0d 64 6d 81 4d 59 4a \textcolor{blue}{\textbf{[76]}} 4c 41 f7 ec \textcolor{magenta}{\textbf{[01]}} 00 ff
  TAPE K: ed 73 a0 f1 27 bf 23 d9 2f 11 e0 75 40 ed b0 76 \textcolor{yellow}{\textbf{[76]}} b8 75 d6 07 39 11 9c d2 4c ed 3f 85 0c 00 \textcolor{red}{\textbf{[4c]}}
\textcolor{red}{\textbf{ >> WRITE: 0x75f1 [Tape B + 0x11] <- \textcolor{yellow}{\textbf{0x0d}} (SUCCESS)}}
\textcolor{blue}{------------------------------------------------------------------}
\textcolor{purple}{\textbf{STEP: 31 | CPU_A PC: 0x000d (Addr: 0xd) | Slot: J}}
\textcolor{purple}{ REGS: AF:0000 BC:ffee DE:75f2 HL:0012 IX:0000 IY:0000 SP:ffff}
\textcolor{purple}{ FLGS: S:0 Z:1 Y:0 H:0 X:1 P:1 N:0 C:0} | Energy J:141 K:177
  TAPE J: ed 73 a0 f1 27 bf 23 d9 2f 11 e0 75 40 \textcolor{purple}{\textbf{[ed]}} b0 76 76 0d 64 6d 81 4d 59 4a \textcolor{blue}{\textbf{[76]}} 4c 41 f7 ec \textcolor{magenta}{\textbf{[01]}} 00 ff
  TAPE K: ed 73 a0 f1 27 bf 23 d9 2f 11 e0 75 40 ed b0 76 76 \textcolor{yellow}{\textbf{[0d]}} 75 d6 07 39 11 9c d2 4c ed 3f 85 0c 00 \textcolor{red}{\textbf{[4c]}}
\textcolor{red}{\textbf{ >> WRITE: 0x75f2 [Tape B + 0x12] <- \textcolor{yellow}{\textbf{0x64}} (SUCCESS)}}
\textcolor{blue}{------------------------------------------------------------------}
\textcolor{purple}{\textbf{STEP: 32 | CPU_A PC: 0x000d (Addr: 0xd) | Slot: J}}
\textcolor{purple}{ REGS: AF:0000 BC:ffed DE:75f3 HL:0013 IX:0000 IY:0000 SP:ffff}
\textcolor{purple}{ FLGS: S:0 Z:1 Y:0 H:0 X:0 P:1 N:0 C:0} | Energy J:140 K:177
  TAPE J: ed 73 a0 f1 27 bf 23 d9 2f 11 e0 75 40 \textcolor{purple}{\textbf{[ed]}} b0 76 76 0d 64 6d 81 4d 59 4a \textcolor{blue}{\textbf{[76]}} 4c 41 f7 ec \textcolor{magenta}{\textbf{[01]}} 00 ff
  TAPE K: ed 73 a0 f1 27 bf 23 d9 2f 11 e0 75 40 ed b0 76 76 0d \textcolor{yellow}{\textbf{[64]}} d6 07 39 11 9c d2 4c ed 3f 85 0c 00 \textcolor{red}{\textbf{[4c]}}
\textcolor{red}{\textbf{ >> WRITE: 0x75f3 [Tape B + 0x13] <- \textcolor{yellow}{\textbf{0x6d}} (SUCCESS)}}
\textcolor{blue}{------------------------------------------------------------------}
\textcolor{purple}{\textbf{STEP: 33 | CPU_A PC: 0x000d (Addr: 0xd) | Slot: J}}
\textcolor{purple}{ REGS: AF:0000 BC:ffec DE:75f4 HL:0014 IX:0000 IY:0000 SP:ffff}
\textcolor{purple}{ FLGS: S:0 Z:1 Y:0 H:0 X:1 P:1 N:0 C:0} | Energy J:139 K:177
  TAPE J: ed 73 a0 f1 27 bf 23 d9 2f 11 e0 75 40 \textcolor{purple}{\textbf{[ed]}} b0 76 76 0d 64 6d 81 4d 59 4a \textcolor{blue}{\textbf{[76]}} 4c 41 f7 ec \textcolor{magenta}{\textbf{[01]}} 00 ff
  TAPE K: ed 73 a0 f1 27 bf 23 d9 2f 11 e0 75 40 ed b0 76 76 0d 64 \textcolor{yellow}{\textbf{[6d]}} 07 39 11 9c d2 4c ed 3f 85 0c 00 \textcolor{red}{\textbf{[4c]}}
\textcolor{red}{\textbf{ >> WRITE: 0x75f4 [Tape B + 0x14] <- \textcolor{yellow}{\textbf{0x81}} (SUCCESS)}}
\textcolor{blue}{------------------------------------------------------------------}
\textcolor{purple}{\textbf{STEP: 34 | CPU_A PC: 0x000d (Addr: 0xd) | Slot: J}}
\textcolor{purple}{ REGS: AF:0000 BC:ffeb DE:75f5 HL:0015 IX:0000 IY:0000 SP:ffff}
\textcolor{purple}{ FLGS: S:0 Z:1 Y:0 H:0 X:0 P:1 N:0 C:0} | Energy J:138 K:177
  TAPE J: ed 73 a0 f1 27 bf 23 d9 2f 11 e0 75 40 \textcolor{purple}{\textbf{[ed]}} b0 76 76 0d 64 6d 81 4d 59 4a \textcolor{blue}{\textbf{[76]}} 4c 41 f7 ec \textcolor{magenta}{\textbf{[01]}} 00 ff
  TAPE K: ed 73 a0 f1 27 bf 23 d9 2f 11 e0 75 40 ed b0 76 76 0d 64 6d \textcolor{yellow}{\textbf{[81]}} 39 11 9c d2 4c ed 3f 85 0c 00 \textcolor{red}{\textbf{[4c]}}
\textcolor{red}{\textbf{ >> WRITE: 0x75f5 [Tape B + 0x15] <- \textcolor{yellow}{\textbf{0x4d}} (SUCCESS)}}
\textcolor{blue}{------------------------------------------------------------------}
\textcolor{purple}{\textbf{STEP: 35 | CPU_A PC: 0x000d (Addr: 0xd) | Slot: J}}
\textcolor{purple}{ REGS: AF:0000 BC:ffea DE:75f6 HL:0016 IX:0000 IY:0000 SP:ffff}
\textcolor{purple}{ FLGS: S:0 Z:1 Y:0 H:0 X:1 P:1 N:0 C:0} | Energy J:137 K:177
  TAPE J: ed 73 a0 f1 27 bf 23 d9 2f 11 e0 75 40 \textcolor{purple}{\textbf{[ed]}} b0 76 76 0d 64 6d 81 4d 59 4a \textcolor{blue}{\textbf{[76]}} 4c 41 f7 ec \textcolor{magenta}{\textbf{[01]}} 00 ff
  TAPE K: ed 73 a0 f1 27 bf 23 d9 2f 11 e0 75 40 ed b0 76 76 0d 64 6d 81 \textcolor{yellow}{\textbf{[4d]}} 11 9c d2 4c ed 3f 85 0c 00 \textcolor{red}{\textbf{[4c]}}
\textcolor{red}{\textbf{ >> WRITE: 0x75f6 [Tape B + 0x16] <- \textcolor{yellow}{\textbf{0x59}} (SUCCESS)}}
\textcolor{blue}{------------------------------------------------------------------}
\textcolor{purple}{\textbf{STEP: 36 | CPU_A PC: 0x000d (Addr: 0xd) | Slot: J}}
\textcolor{purple}{ REGS: AF:0000 BC:ffe9 DE:75f7 HL:0017 IX:0000 IY:0000 SP:ffff}
\textcolor{purple}{ FLGS: S:0 Z:1 Y:0 H:0 X:1 P:1 N:0 C:0} | Energy J:136 K:177
  TAPE J: ed 73 a0 f1 27 bf 23 d9 2f 11 e0 75 40 \textcolor{purple}{\textbf{[ed]}} b0 76 76 0d 64 6d 81 4d 59 4a \textcolor{blue}{\textbf{[76]}} 4c 41 f7 ec \textcolor{magenta}{\textbf{[01]}} 00 ff
  TAPE K: ed 73 a0 f1 27 bf 23 d9 2f 11 e0 75 40 ed b0 76 76 0d 64 6d 81 4d \textcolor{yellow}{\textbf{[59]}} 9c d2 4c ed 3f 85 0c 00 \textcolor{red}{\textbf{[4c]}}
\textcolor{red}{\textbf{ >> WRITE: 0x75f7 [Tape B + 0x17] <- \textcolor{yellow}{\textbf{0x4a}} (SUCCESS)}}
\textcolor{blue}{------------------------------------------------------------------}
\textcolor{purple}{\textbf{STEP: 37 | CPU_A PC: 0x000d (Addr: 0xd) | Slot: J}}
\textcolor{purple}{ REGS: AF:0000 BC:ffe8 DE:75f8 HL:0018 IX:0000 IY:0000 SP:ffff}
\textcolor{purple}{ FLGS: S:0 Z:1 Y:1 H:0 X:1 P:1 N:0 C:0} | Energy J:135 K:177
  TAPE J: ed 73 a0 f1 27 bf 23 d9 2f 11 e0 75 40 \textcolor{purple}{\textbf{[ed]}} b0 76 76 0d 64 6d 81 4d 59 4a \textcolor{blue}{\textbf{[76]}} 4c 41 f7 ec \textcolor{magenta}{\textbf{[01]}} 00 ff
  TAPE K: ed 73 a0 f1 27 bf 23 d9 2f 11 e0 75 40 ed b0 76 76 0d 64 6d 81 4d 59 \textcolor{yellow}{\textbf{[4a]}} d2 4c ed 3f 85 0c 00 \textcolor{red}{\textbf{[4c]}}
\textcolor{red}{\textbf{ >> WRITE: 0x75f8 [Tape B + 0x18] <- \textcolor{yellow}{\textbf{0x76}} (SUCCESS)}}
\textcolor{blue}{------------------------------------------------------------------}
\textcolor{purple}{\textbf{STEP: 38 | CPU_A PC: 0x000d (Addr: 0xd) | Slot: J}}
\textcolor{purple}{ REGS: AF:0000 BC:ffe7 DE:75f9 HL:0019 IX:0000 IY:0000 SP:ffff}
\textcolor{purple}{ FLGS: S:0 Z:1 Y:1 H:0 X:0 P:1 N:0 C:0} | Energy J:134 K:177
  TAPE J: ed 73 a0 f1 27 bf 23 d9 2f 11 e0 75 40 \textcolor{purple}{\textbf{[ed]}} b0 76 76 0d 64 6d 81 4d 59 4a \textcolor{blue}{\textbf{[76]}} 4c 41 f7 ec \textcolor{magenta}{\textbf{[01]}} 00 ff
  TAPE K: ed 73 a0 f1 27 bf 23 d9 2f 11 e0 75 40 ed b0 76 76 0d 64 6d 81 4d 59 4a \textcolor{yellow}{\textbf{[76]}} 4c ed 3f 85 0c 00 \textcolor{red}{\textbf{[4c]}}
\textcolor{red}{\textbf{ >> WRITE: 0x75f9 [Tape B + 0x19] <- \textcolor{yellow}{\textbf{0x4c}} (SUCCESS)}}
\textcolor{blue}{------------------------------------------------------------------}
\textcolor{purple}{\textbf{STEP: 39 | CPU_A PC: 0x000d (Addr: 0xd) | Slot: J}}
\textcolor{purple}{ REGS: AF:0000 BC:ffe6 DE:75fa HL:001a IX:0000 IY:0000 SP:ffff}
\textcolor{purple}{ FLGS: S:0 Z:1 Y:0 H:0 X:1 P:1 N:0 C:0} | Energy J:133 K:177
  TAPE J: ed 73 a0 f1 27 bf 23 d9 2f 11 e0 75 40 \textcolor{purple}{\textbf{[ed]}} b0 76 76 0d 64 6d 81 4d 59 4a \textcolor{blue}{\textbf{[76]}} 4c 41 f7 ec \textcolor{magenta}{\textbf{[01]}} 00 ff
  TAPE K: ed 73 a0 f1 27 bf 23 d9 2f 11 e0 75 40 ed b0 76 76 0d 64 6d 81 4d 59 4a 76 \textcolor{yellow}{\textbf{[4c]}} ed 3f 85 0c 00 \textcolor{red}{\textbf{[4c]}}
\textcolor{red}{\textbf{ >> WRITE: 0x75fa [Tape B + 0x1a] <- \textcolor{yellow}{\textbf{0x41}} (SUCCESS)}}
\textcolor{blue}{------------------------------------------------------------------}
\textcolor{purple}{\textbf{STEP: 40 | CPU_A PC: 0x000d (Addr: 0xd) | Slot: J}}
\textcolor{purple}{ REGS: AF:0000 BC:ffe5 DE:75fb HL:001b IX:0000 IY:0000 SP:ffff}
\textcolor{purple}{ FLGS: S:0 Z:1 Y:0 H:0 X:0 P:1 N:0 C:0} | Energy J:132 K:177
  TAPE J: ed 73 a0 f1 27 bf 23 d9 2f 11 e0 75 40 \textcolor{purple}{\textbf{[ed]}} b0 76 76 0d 64 6d 81 4d 59 4a \textcolor{blue}{\textbf{[76]}} 4c 41 f7 ec \textcolor{magenta}{\textbf{[01]}} 00 ff
  TAPE K: ed 73 a0 f1 27 bf 23 d9 2f 11 e0 75 40 ed b0 76 76 0d 64 6d 81 4d 59 4a 76 4c \textcolor{yellow}{\textbf{[41]}} 3f 85 0c 00 \textcolor{red}{\textbf{[4c]}}
\textcolor{red}{\textbf{ >> WRITE: 0x75fb [Tape B + 0x1b] <- \textcolor{yellow}{\textbf{0xf7}} (SUCCESS)}}
\textcolor{blue}{------------------------------------------------------------------}
\textcolor{purple}{\textbf{STEP: 41 | CPU_A PC: 0x000d (Addr: 0xd) | Slot: J}}
\textcolor{purple}{ REGS: AF:0000 BC:ffe4 DE:75fc HL:001c IX:0000 IY:0000 SP:ffff}
\textcolor{purple}{ FLGS: S:0 Z:1 Y:1 H:0 X:0 P:1 N:0 C:0} | Energy J:131 K:177
  TAPE J: ed 73 a0 f1 27 bf 23 d9 2f 11 e0 75 40 \textcolor{purple}{\textbf{[ed]}} b0 76 76 0d 64 6d 81 4d 59 4a \textcolor{blue}{\textbf{[76]}} 4c 41 f7 ec \textcolor{magenta}{\textbf{[01]}} 00 ff
  TAPE K: ed 73 a0 f1 27 bf 23 d9 2f 11 e0 75 40 ed b0 76 76 0d 64 6d 81 4d 59 4a 76 4c 41 \textcolor{yellow}{\textbf{[f7]}} 85 0c 00 \textcolor{red}{\textbf{[4c]}}
\textcolor{red}{\textbf{ >> WRITE: 0x75fc [Tape B + 0x1c] <- \textcolor{yellow}{\textbf{0xec}} (SUCCESS)}}
\textcolor{blue}{------------------------------------------------------------------}
\textcolor{purple}{\textbf{STEP: 42 | CPU_A PC: 0x000d (Addr: 0xd) | Slot: J}}
\textcolor{purple}{ REGS: AF:0000 BC:ffe3 DE:75fd HL:001d IX:0000 IY:0000 SP:ffff}
\textcolor{purple}{ FLGS: S:0 Z:1 Y:0 H:0 X:1 P:1 N:0 C:0} | Energy J:130 K:177
  TAPE J: ed 73 a0 f1 27 bf 23 d9 2f 11 e0 75 40 \textcolor{purple}{\textbf{[ed]}} b0 76 76 0d 64 6d 81 4d 59 4a \textcolor{blue}{\textbf{[76]}} 4c 41 f7 ec \textcolor{magenta}{\textbf{[01]}} 00 ff
  TAPE K: ed 73 a0 f1 27 bf 23 d9 2f 11 e0 75 40 ed b0 76 76 0d 64 6d 81 4d 59 4a 76 4c 41 f7 \textcolor{yellow}{\textbf{[ec]}} 0c 00 \textcolor{red}{\textbf{[4c]}}
\textcolor{red}{\textbf{ >> WRITE: 0x75fd [Tape B + 0x1d] <- \textcolor{yellow}{\textbf{0x01}} (SUCCESS)}}
\textcolor{blue}{------------------------------------------------------------------}
\textcolor{purple}{\textbf{STEP: 43 | CPU_A PC: 0x000d (Addr: 0xd) | Slot: J}}
\textcolor{purple}{ REGS: AF:0000 BC:ffe2 DE:75fe HL:001e IX:0000 IY:0000 SP:ffff}
\textcolor{purple}{ FLGS: S:0 Z:1 Y:0 H:0 X:0 P:1 N:0 C:0} | Energy J:129 K:177
  TAPE J: ed 73 a0 f1 27 bf 23 d9 2f 11 e0 75 40 \textcolor{purple}{\textbf{[ed]}} b0 76 76 0d 64 6d 81 4d 59 4a \textcolor{blue}{\textbf{[76]}} 4c 41 f7 ec \textcolor{magenta}{\textbf{[01]}} 00 ff
  TAPE K: ed 73 a0 f1 27 bf 23 d9 2f 11 e0 75 40 ed b0 76 76 0d 64 6d 81 4d 59 4a 76 4c 41 f7 ec \textcolor{yellow}{\textbf{[01]}} 00 \textcolor{red}{\textbf{[4c]}}
\textcolor{red}{\textbf{ >> WRITE: 0x75fe [Tape B + 0x1e] <- \textcolor{yellow}{\textbf{0x00}} (SUCCESS)}}
\textcolor{blue}{------------------------------------------------------------------}
\textcolor{purple}{\textbf{STEP: 44 | CPU_A PC: 0x000d (Addr: 0xd) | Slot: J}}
\textcolor{purple}{ REGS: AF:0000 BC:ffe1 DE:75ff HL:001f IX:0000 IY:0000 SP:ffff}
\textcolor{purple}{ FLGS: S:0 Z:1 Y:0 H:0 X:0 P:1 N:0 C:0} | Energy J:128 K:177
  TAPE J: ed 73 a0 f1 27 bf 23 d9 2f 11 e0 75 40 \textcolor{purple}{\textbf{[ed]}} b0 76 76 0d 64 6d 81 4d 59 4a \textcolor{blue}{\textbf{[76]}} 4c 41 f7 ec \textcolor{magenta}{\textbf{[01]}} 00 ff
  TAPE K: ed 73 a0 f1 27 bf 23 d9 2f 11 e0 75 40 ed b0 76 76 0d 64 6d 81 4d 59 4a 76 4c 41 f7 ec 01 \textcolor{yellow}{\textbf{[00]}} \textcolor{red}{\textbf{[4c]}}
\textcolor{red}{\textbf{ >> WRITE: 0x75ff [Tape B + 0x1f] <- \textcolor{yellow}{\textbf{0xff}} (SUCCESS)}}
\textcolor{blue}{------------------------------------------------------------------}
\textcolor{purple}{\textbf{STEP: 45 | CPU_A PC: 0x000d (Addr: 0xd) | Slot: J}}
\textcolor{purple}{ REGS: AF:0000 BC:ffe0 DE:7600 HL:0020 IX:0000 IY:0000 SP:ffff}
\textcolor{purple}{ FLGS: S:0 Z:1 Y:1 H:0 X:1 P:1 N:0 C:0} | Energy J:127 K:177
  TAPE J: ed 73 a0 f1 27 bf 23 d9 2f 11 e0 75 40 \textcolor{purple}{\textbf{[ed]}} b0 76 76 0d 64 6d 81 4d 59 4a \textcolor{blue}{\textbf{[76]}} 4c 41 f7 ec \textcolor{magenta}{\textbf{[01]}} 00 ff
  TAPE K: ed 73 a0 f1 27 bf 23 d9 2f 11 e0 75 40 ed b0 76 76 0d 64 6d 81 4d 59 4a 76 4c 41 f7 ec 01 00 \textcolor{yellow}{\textbf{[ff]}}
\textcolor{red}{\textbf{ >> WRITE: 0x7600 [Tape A + 0x0] <- \textcolor{yellow}{\textbf{0xed}} (SUCCESS)}}
\textcolor{blue}{------------------------------------------------------------------}
\textcolor{purple}{\textbf{STEP: 46 | CPU_A PC: 0x000d (Addr: 0xd) | Slot: J}}
\textcolor{purple}{ REGS: AF:0000 BC:ffdf DE:7601 HL:0021 IX:0000 IY:0000 SP:ffff}
\textcolor{purple}{ FLGS: S:0 Z:1 Y:0 H:0 X:1 P:1 N:0 C:0} | Energy J:126 K:177
  TAPE J: \textcolor{yellow}{\textbf{[ed]}} 73 a0 f1 27 bf 23 d9 2f 11 e0 75 40 \textcolor{purple}{\textbf{[ed]}} b0 76 76 0d 64 6d 81 4d 59 4a \textcolor{blue}{\textbf{[76]}} 4c 41 f7 ec \textcolor{magenta}{\textbf{[01]}} 00 ff
  TAPE K: ed 73 a0 f1 27 bf 23 d9 2f 11 e0 75 40 ed b0 76 76 0d 64 6d 81 4d 59 4a 76 4c 41 f7 ec 01 00 \textcolor{red}{\textbf{[ff]}}
\textcolor{red}{\textbf{ >> WRITE: 0x7601 [Tape A + 0x1] <- \textcolor{yellow}{\textbf{0x73}} (SUCCESS)}}
\textcolor{blue}{------------------------------------------------------------------}
\textcolor{purple}{\textbf{STEP: 47 | CPU_A PC: 0x000d (Addr: 0xd) | Slot: J}}
\textcolor{purple}{ REGS: AF:0000 BC:ffde DE:7602 HL:0022 IX:0000 IY:0000 SP:ffff}
\textcolor{purple}{ FLGS: S:0 Z:1 Y:1 H:0 X:0 P:1 N:0 C:0} | Energy J:125 K:177
  TAPE J: ed \textcolor{yellow}{\textbf{[73]}} a0 f1 27 bf 23 d9 2f 11 e0 75 40 \textcolor{purple}{\textbf{[ed]}} b0 76 76 0d 64 6d 81 4d 59 4a \textcolor{blue}{\textbf{[76]}} 4c 41 f7 ec \textcolor{magenta}{\textbf{[01]}} 00 ff
  TAPE K: ed 73 a0 f1 27 bf 23 d9 2f 11 e0 75 40 ed b0 76 76 0d 64 6d 81 4d 59 4a 76 4c 41 f7 ec 01 00 \textcolor{red}{\textbf{[ff]}}
\textcolor{red}{\textbf{ >> WRITE: 0x7602 [Tape A + 0x2] <- \textcolor{yellow}{\textbf{0xa0}} (SUCCESS)}}
\textcolor{blue}{------------------------------------------------------------------}
\textcolor{purple}{\textbf{STEP: 48 | CPU_A PC: 0x000d (Addr: 0xd) | Slot: J}}
\textcolor{purple}{ REGS: AF:0000 BC:ffdd DE:7603 HL:0023 IX:0000 IY:0000 SP:ffff}
\textcolor{purple}{ FLGS: S:0 Z:1 Y:0 H:0 X:0 P:1 N:0 C:0} | Energy J:124 K:177
  TAPE J: ed 73 \textcolor{yellow}{\textbf{[a0]}} f1 27 bf 23 d9 2f 11 e0 75 40 \textcolor{purple}{\textbf{[ed]}} b0 76 76 0d 64 6d 81 4d 59 4a \textcolor{blue}{\textbf{[76]}} 4c 41 f7 ec \textcolor{magenta}{\textbf{[01]}} 00 ff
  TAPE K: ed 73 a0 f1 27 bf 23 d9 2f 11 e0 75 40 ed b0 76 76 0d 64 6d 81 4d 59 4a 76 4c 41 f7 ec 01 00 \textcolor{red}{\textbf{[ff]}}
\textcolor{red}{\textbf{ >> WRITE: 0x7603 [Tape A + 0x3] <- \textcolor{yellow}{\textbf{0xf1}} (SUCCESS)}}
\textcolor{blue}{------------------------------------------------------------------}
\textcolor{purple}{\textbf{STEP: 49 | CPU_A PC: 0x000d (Addr: 0xd) | Slot: J}}
\textcolor{purple}{ REGS: AF:0000 BC:ffdc DE:7604 HL:0024 IX:0000 IY:0000 SP:ffff}
\textcolor{purple}{ FLGS: S:0 Z:1 Y:0 H:0 X:0 P:1 N:0 C:0} | Energy J:123 K:177
  TAPE J: ed 73 a0 \textcolor{yellow}{\textbf{[f1]}} 27 bf 23 d9 2f 11 e0 75 40 \textcolor{purple}{\textbf{[ed]}} b0 76 76 0d 64 6d 81 4d 59 4a \textcolor{blue}{\textbf{[76]}} 4c 41 f7 ec \textcolor{magenta}{\textbf{[01]}} 00 ff
  TAPE K: ed 73 a0 f1 27 bf 23 d9 2f 11 e0 75 40 ed b0 76 76 0d 64 6d 81 4d 59 4a 76 4c 41 f7 ec 01 00 \textcolor{red}{\textbf{[ff]}}
\textcolor{red}{\textbf{ >> WRITE: 0x7604 [Tape A + 0x4] <- \textcolor{yellow}{\textbf{0x27}} (SUCCESS)}}
\textcolor{blue}{------------------------------------------------------------------}
\end{Verbatim}

\end{document}